\newtheorem*{rep@theorem}{\rep@title}
\newcommand{\newreptheorem}[2]{%
\newenvironment{rep#1}[1]{%
 \def\rep@title{#2 \textbf{\ref{##1}}}%
 \begin{rep@theorem}}%
 {\end{rep@theorem}}}
\DeclareRobustCommand\drawcircle[1]{%
  \begin{tikzpicture}
    \draw[#1] (0pt,0pt) circle [radius = 2pt];
  \end{tikzpicture}%
}
\DeclareRobustCommand\drawrectangle[1]{%
  \begin{tikzpicture}
    \draw[#1] (0pt,0pt) rectangle (3pt,3pt);
  \end{tikzpicture}%
}
\DeclareRobustCommand\drawcover[1]{%
  \begin{tikzpicture}
    \draw[#1] (0pt,0pt) rectangle (6pt,6pt);
  \end{tikzpicture}%
}
\theoremstyle{definition}
\newtheorem{lemma1}{\textbf{Lemma}}
\newtheorem{thm}[lemma1]{\textbf{Theorem}}
\newtheorem{defn}[lemma1]{\textbf{Definition}}
\newtheorem{cor}[lemma1]{\textbf{Corollary}}
\newtheorem{property}[lemma1]{\textbf{Property}}
\newtheorem{claim}[lemma1]{\textbf{Claim}}
\newtheorem{conjecture}[]{\textbf{Conjecture}}
\newtheorem{obs}[lemma1]{\textbf{Observation}}
\newenvironment{proofofprop}[1]{\noindent {\em Proof of Property #1: }\ignorespaces}{}
\newcommand{\topic}[1]{\vspace{0.2cm}\noindent{\bf #1:}}
\newcommand{\e}{\epsilon}
\newcommand{\R}{\mathbb{R}}
\newcommand{\floor}[1]{\lfloor #1 \rfloor}
\newcommand{\level}[1]{\text{\emph{level}-}{#1}}
\newcommand{\pset}{\mathcal{P}}
\newcommand{\ov}{\overrightarrow}
\newcommand{\YY}[1]{\mathsf{YY}_{#1}}
\newcommand{\Yao}{\mathsf{Y}}
\newcommand{\normp}{\mathcal{P}^{\text{n}}_m}
\newcommand{\auxp}{\mathcal{P}^{\text{a}}_m}
\newcommand{\vangle}{\angle}
\newcommand{\ffill}{\mathsf{Refine}}
\newcommand{\mth}{\mathsf{Proj}}
\newcommand{\calP}{\mathcal{P}}
\newcommand{\calS}{\Phi}
\newcommand{\sset}{\Phi}
\newcommand{\calA}{\mathcal{A}}
\newcommand{\tcalA}{\widehat{\mathcal{A}}}
\newcommand{\calB}{\mathcal{B}}
\newcommand{\pair}{\phi}
\newcommand{\vpair}{\varphi}
\newcommand{\gadget}{\mathsf{G}}
\newcommand{\onseg}{partition}
\newcommand{\outseg}{apex}
\newcommand{\stype}{near-empty}
\newcommand{\ttype}{empty}
\newcommand{\ntree}{\mathcal{T}}
\newcommand{\rtree}{\mathcal{T}^{R}}
\newcommand{\projref}{\mathsf{Proj\text{-}Refn}}
\newcommand{\hinge}[1]{\Lambda_{#1}}
\newcommand{\rhinge}[1]{\Lambda^{(+)}_{#1}}
\newcommand{\lhinge}[1]{\Lambda^{(-)}_{#1}}
\definecolor{red}{rgb}{0.89, 0.0, 0.13}
\newcommand{\eat}[1]{}
\title{Odd Yao-Yao Graphs are Not  Spanners}
\author{Yifei Jin\\
Tsinghua University, China  \\
jin-yf13@mails.tsinghua.edu.cn \\
\and
Jian Li\\
Tsinghua University, China \\
lijian83@mail.tsinghua.edu.cn \\
\and
Wei Zhan \\
Princeton University, USA \\
weizhan@cs.princeton.edu \\
}
\date{}
\author[1]{Yifei Jin}
\author[2]{Jian Li}
\author[3]{Wei Zhan}
\affil[1]{Tsinghua University, Beijing 100084, China\\
  \texttt{jin-yf13@mails.tsinghua.edu.cn}}
\affil[2]{Tsinghua University, Beijing 100084, China\\
  \texttt{lijian83@mail.tsinghua.edu.cn}}
\affil[3]{Princeton University, USA\\
    \texttt{weizhan@cs.princeton.edu}}
\authorrunning{Yifei Jin, Jian Li, Wei Zhan} %mandatory. First: Use abbreviated first/middle names. Second (only in severe cases): Use first author plus 'et. al.'
\subjclass{I.3.5: Computational Geometry and Object Modeling}% mandatory: Please choose ACM 1998 classifications from http://www.acm.org/about/class/ccs98-html . E.g., cite as "F.1.1 Models of Computation".
\keywords{Odd Yao-Yao Graph, Spanner, Counterexample}% mandatory: Please provide 1-5 keywords
\begin{document}
% \mainmatter  % start of an individual contribution

\maketitle

\begin{abstract}
  
It is a long standing open problem whether Yao-Yao graphs $\YY{k}$ are all spanners~\cite{li2002sparse}.
Bauer and Damian~\cite{bauer2013infinite}  showed that 
all $\YY{6k}$ for $k \geq 6$  are spanners.
Li and Zhan~\cite{li2016almost}  generalized their result and proved that all even Yao-Yao graphs $\YY{2k}$ are spanners (for $k\geq 42$). 
However, their technique cannot be extended to odd Yao-Yao graphs,
and whether they are spanners are still elusive.
In this paper, we show that, surprisingly,
for any integer $k \geq 1$, there exist 
odd Yao-Yao graph $\YY{2k+1}$ instances, which are not spanners. 

\end{abstract}

%%% Local Variables:
%%% mode: latex
%%% TeX-master: "odd"
%%% End:

\section{Introduction}
\label{sec:intro}
Let $\pset$ be a set of points in the Euclidean plane $\R^2$.
The complete Euclidean graph defined
on set $\pset$ is the edge-weighted graph with vertex set $\pset$ and edges connecting all pairs of
points in $\pset$,
where the weight of each edge is the Euclidean distance between its two end points.
%Although the graph preserves most of the spatial information of $\pset$, the graph has too many edges.
Storing the complete graph requires quadratic space, which is very expensive.
Hence, it is desirable to use a sparse subgraph to approximate the complete graph.
This is a classical and well-studied topic in computational
geometry (see e.g., \cite{ gabriel1969new, toussaint1980relative, aurenhammer1991voronoi, yao1982constructing, li2002sparse}).
%There are several criteria for a good approximation  in different contexts.
In this paper, we study the so called \emph{geometric $t$-spanner},
formally defined as follows (see e.g.,~\cite{sack1999handbook}).

\begin{defn} (Geometric $t$-Spanner)
  A graph $G$ is a {\em geometric $t$-spanner} of the complete Euclidean graph if
  (1) $G$ is a subgraph of the complete Euclidean graph;
  and (2) for any pair of points $p$ and $q$ in $\calP$,
  the shortest path between $p$ and $q$ in $G$ is no longer than $t$ times the  Euclidean distance between $p$ and $q$.
\end{defn}

The factor $t$ is called the \emph{stretch factor} or \emph{dilation factor} of the spanner in the literature. 
 If the maximum degree of $G$ is bounded by a constant $k$,  we say that $G$ is a \emph{bounded-degree spanner}.
The concept of geometric spanners was first proposed by L.P. Chew~\cite{chew1986there}.
See the comprehensive survey by Eppstein~\cite{eppstein1999spanning} for related
topics about geometric spanners.
Geometric spanners
have found numerous applications in wireless ad hoc and sensor networks.
We refer the readers to the books by Li~\cite{li2008wireless} and Narasimhan and Smid~\cite{narasimhan2007geometric} for more details.

 \emph{Yao graphs} are one of the first approximations of complete Euclidean graphs, introduced independently by Flinchbaugh and Jones~\cite{flinchbaugh1981strong} and
 Yao~\cite{yao1982constructing}.

\begin{defn}[Yao Graph $\Yao_{k}$]
	\label{def:yao}
 Let $k$ be a fixed integer. Given a set of points $\pset$ in the Euclidean plane $\R^2$, the Yao
 graph $\Yao_{k}(\calP)$ is defined as follows. Let $C_u(\gamma_1, \gamma_2]$ be the cone with apex $u$, which consists
 of the rays with polar angles in the half-open  interval $(\gamma_1, \gamma_2]$. For each point $u \in \pset$,
 $\Yao_k(\calP)$ contains an edge connecting $u$ to a
 nearest neighbor $v$ in each cone $C_u(j\theta, (j+1)\theta]$, for $\theta = 2\pi/k$ and $j \in [0, k-1]$.
 We generally consider Yao graphs as undirected graphs.  For a \emph{directed Yao graph}, we add directed edge
 $\overrightarrow{uv}$ to the graph instead.
\end{defn}

 Molla~\cite{el2009yao} showed that $\Yao_{2}$ and $\Yao_{3}$ may not be spanners. On the other hand, it has been proven that all $\Yao_{k}$
 for $k \geq 4$ are spanners.  Bose et al.~\cite{bose2012pi} proved that $\Yao_{4}$ is a $663$-spanner. Damian and Nelavalli~\cite{damian2017improved} improved this to 54.6 recently.  Barba et al.~\cite{barba2014new} showed
 that  $\Yao_{5}$  is a  $3.74$-spanner. Damian and Raudonis~\cite{damian2010yao} proved that the 
 $\Yao_6$ graph is a $17.64$ spanner. Li et al.~\cite{li2001power, li2002sparse}  first proved that all $\Yao_{k}, k > 6$ are
 spanners with stretch  factor at most $1/(1- 2\sin(\pi/k))$.
 Later Bose et al.~\cite{bose2004approximating, bose2012pi} also obtained the same
 result independently. Recently,  Barba et al.~\cite{barba2014new} reduced  the stretch factor of
 $\Yao_{6}$ from 17.6 to 5.8  and improved the stretch factors to  $1/(1-2\sin(3\pi/4k))$  for odd
 $k \geq 7$.

 However, a Yao graph may not have bounded degree. This can be a serious  limitation in certain wireless
 network applications since each node has very limited energy and communication capacity, and can
 only communicate with a small number of neighbors.
 To address the issue, Li et al.~\cite{li2002sparse} introduced  \emph{Yao-Yao} graphs (or
 \emph{Sparse-Yao} graphs
 in the literature). A Yao-Yao graph  $\YY{k}(\calP)$ is obtained
 by removing some edges from  $\Yao_{k}(\calP)$ as follows:

 \begin{defn} [Yao-Yao Graph $\YY{k}$]
 (1) Construct the directed Yao graph, as in Definition~\ref{def:yao}.
 (2) For each node $u$ and each cone rooted at $u$ containing two or more incoming edges, retain a shortest incoming edge and discard the other incoming edges in the cone.  We can see that the maximum degree in $\YY{k}(\calP)$ is upper-bounded by $2k$.
 \end{defn}

As opposed to Yao graphs, the spanning property of Yao-Yao graphs
is  not well understood yet.
Li et al.~\cite{li2002sparse} provided some empirical evidence, suggesting that $\YY{k}$ graphs are
$t$-spanners for some  sufficiently large  constant $k$. However, there is no theoretical proof
yet, and it is still an open problem~\cite{li2016almost, li2008wireless, bauer2013infinite, li2002sparse}.
It is also listed as Problem 70 in the Open Problems Project.\footnote{\url{http://cs.smith.edu/~orourke/TOPP/P70.html}}

  \begin{conjecture}[see \cite{bauer2013infinite}]\label{cj}
  	There exists a constant $k_0$ such that for any integer $k>k_0$, any Yao-Yao graph $\YY{k}$ is a geometric spanner.
  \end{conjecture}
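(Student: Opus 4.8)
The plan is to \emph{refute} the odd case of the conjecture stated above (Conjecture~\ref{cj}) directly: for every fixed $k \geq 1$ I will exhibit point sets on which $\YY{2k+1}$ fails to be a $t$-spanner for any constant $t$. Since one fixed instance only rules out one value of $t$, I would instead construct a family $\{\pset_m\}_{m \geq 1}$ indexed by a recursion depth $m$, each containing two distinguished points $p_m, q_m$ at a small fixed Euclidean distance, and prove that the shortest $p_m$--$q_m$ path in $\YY{2k+1}(\pset_m)$ grows without bound as $m \to \infty$. Unbounded dilation over the family shows that no constant stretch factor can hold, hence $\YY{2k+1}$ is not a spanner.

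The heart of the argument is a local gadget $\gadget$ that exploits the defining asymmetry of odd Yao-Yao graphs: with $2k+1$ cones there is no cone exactly antipodal to a given one, so the incoming-edge pruning can delete a short crossing edge without leaving a symmetric partner edge to repair the connection. This is exactly the feature on which the even-case spanner proofs of \cite{li2016almost, bauer2013infinite} rely and which \emph{fails} here. Concretely, I would place a small cluster so that a point $u$ on one side and a point $v$ on the other each lie in the other's relevant cone, yet the edge $uv$ is pruned at $v$ because a nearer competitor occupies $v$'s incoming cone. The gadget must be engineered so that (i) the direct crossing edge is removed and (ii) every surviving edge leaving one side for the other is strictly longer, forcing a detour; verifying (i) and (ii) reduces to a finite comparison of polar angles and distances, which I would keep robust using a free scaling and a small perturbation parameter $\e$.

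To convert one detour into unbounded dilation I would nest the gadgets self-similarly via a recursive placement $\recur$, so that each gadget at scale $s$ contains several gadgets at scale $s/\rho$ arranged along its forced detour, and traversing one level costs a constant factor more length than the straight-line span it replaces. The cleanest way to lower-bound \emph{every} $p_m$--$q_m$ path at once is a Jordan-curve argument: at each scale I would exhibit a closed curve $\jordan$ separating $p_m$ from $q_m$ such that every $\YY{2k+1}$ edge crossing $\jordan$ is long, and such that any crossing necessarily enters a sub-gadget whose own separating curve must then be crossed. Iterating down $m$ levels yields a product of crossing costs, giving total path length at least $c^{m}\,|p_mq_m|$ for some $c > 1$, which tends to infinity.

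The main obstacle is \emph{global} consistency of the pruning across scales and orientations. Because the surviving edge set depends on the entire point set, placing the many points the recursion needs risks creating unintended short edges, in particular shortcut edges that cross a separating curve cheaply and collapse the detour, or the accidental pruning of a skeleton edge the detour path relies on. I would therefore need a separation-of-scales invariant guaranteeing that the incoming-edge competition at any vertex is decided only by points of comparable scale, so that the isolated gadget analysis remains valid inside the full construction. Proving this invariant -- a careful angle-and-distance case analysis parametrized uniformly in $k$ and localized to the relevant cones via hinge structures $\hinge{s}$ -- is where essentially all the difficulty lies, and it is also where the oddness of $2k+1$ is indispensable, since the analogous invariant provably fails to force a detour when the number of cones is even.
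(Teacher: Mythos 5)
Your overall strategy is the same as the paper's: refute the odd case by exhibiting a family $\{\calP_m\}$ with a fractal-like recursive skeleton whose forced detours multiply path length by a constant factor per level, and pin the whole thing on the odd-cone asymmetry (the paper isolates this as Property~1: if $\ov{uv}$ is a cone boundary then $\ov{vu}$ is not). You also correctly locate where the difficulty lives. But the proposal has a genuine gap precisely there, and it is not a gap one can wave at: the ``separation-of-scales invariant'' you defer is the entire content of the paper, and the uniform self-similar nesting you propose (each gadget at scale $s$ containing copies at scale $s/\rho$) is exactly what the paper says \emph{cannot} work. Because the surviving edge set of a Yao-Yao graph is a global function of the point set, a straight self-similar recursion produces ``long range connections'' --- surviving edges between points far apart in the intended traversal order --- that collapse the detour. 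The paper's escape is structural, not perturbative: the sawteeth replacing different segments have \emph{different} numbers and sizes of teeth, the gadgets are generated in a DFS preorder so that each new pair's partition points are obtained by \emph{projecting} all previously built points onto it (plus a refinement step to keep sibling pairs within a factor of two in length). Only this projection discipline forces every potential long range connection into a single rigid form (parallel to the parent pair, landing on a known candidate center, Lemmas~4 and~5), and only then can a second layer of auxiliary points, placed at carefully ordered distances $\sigma^{\kappa}\chi^{-1}$ from those centers, break all such edges without creating new ones (Lemma~9). None of this is recoverable from ``a free scaling and a small perturbation parameter $\e$.''

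A secondary, smaller divergence: your lower bound mechanism is a Jordan-curve argument in which every crossing edge is ``long.'' The paper proves something stronger and cleaner --- after the auxiliary points are added there are \emph{no} edges at all between non-adjacent hinge sets, so the shortest path must visit every hinge set in a fixed total order, and the per-level length loss from empty/near-empty pieces is bounded by $O(d_0^{-1})$ so that $\rho = (1-O(d_0^{-1}))\cos^{-1}(\theta/2) > 1$. Your version would additionally require quantifying ``long'' uniformly across scales, which is harder than eliminating the crossings outright. So: right target, right source of hardness, but the construction that actually discharges the hard step is missing, and the specific recursive placement you sketch would fail without the non-uniform projection/refinement machinery.
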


  Now, we briefly review the previous results about Yao-Yao graphs.	
  It is known that $\YY{2}$ and $\YY{3}$ may not be spanners since $\Yao_{2}$ and $\Yao_{3}$ may not be
  spanners~\cite{el2009yao}.
  Damian and
  Molla~\cite{damian2009spanner,el2009yao} proved that $\YY{4}, \YY{6}$ may not be spanners.
  Bauer et al.~\cite{barba2014new} proved that $\YY{5}$ may not be spanners.
  On the positive side, Bauer and Damian~\cite{bauer2013infinite} showed that for any integer $k \geq 6$, any Yao-Yao graph $\YY{6k}$ is a spanner
  with the stretch factor at most $11.67$ and the factor becomes  4.75 for $k \geq 8$. Recently, Li and
  Zhan~\cite{li2016almost} proved that for any integer $ k \geq 42$, any even Yao-Yao graph
  $\YY{2k}$  is a spanner with the stretch factor $6.03 + O(k^{-1})$.

  From these positive results, it is quite tempting
  to believe Conjecture~\ref{cj}.
  However, we show in this paper that, surprisingly,
  Conjecture~\ref{cj} is false for odd Yao-Yao graphs.

  \begin{thm}
    \label{thm:odd}
    For any $k \geq 1$, there exists a class of point set instances
    $\{\calP_m\}_{m\in \mathbb{Z}^+}$
    such that the stretch factor
    of  $ \YY{2k+1}(\calP_m)$ cannot be bounded by any constant,
    as $m$ approaches infinity.\footnote{Here, $m$ is a parameter in our recursive construction. We
      will explain it in detail in Section~\ref{sec:normal}.  Roughly speaking, $m$ is the level of recursion and the number of points in $\calP_m$ increases with $m$. }
  \end{thm}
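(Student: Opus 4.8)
The plan is to exhibit a self-similar point set $\calP_m$ built recursively in $m$ levels, together with a distinguished pair of points $p,q$ at a fixed Euclidean distance (say, normalized to $1$), for which every $p$--$q$ path in $\YY{2k+1}(\calP_m)$ has length growing without bound in $m$. The core phenomenon I would exploit is the asymmetry peculiar to an odd number of cones: with cone angle $\theta=2\pi/(2k+1)$, no cone is exactly antipodal to another, so the direction opposite to a given cone axis falls a half-cone off the boundary between two cones. This half-cone offset is precisely what breaks the balancing argument that makes even Yao--Yao graphs spanners; the construction will be engineered so that the ``natural'' short edge one would like to use to cross from the $p$-side to the $q$-side is eliminated by the Yao--Yao incoming-edge filter (some nearby point supplies a strictly shorter incoming edge in the same cone), while no compensating reverse edge is available, forcing a detour with a consistent rotational drift.

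Concretely, I would first design a constant-size gadget $\gadget$ with two ports, an entry and an exit, having the property that (i) both ports lie close together in the plane, yet (ii) inside $\gadget$ the surviving Yao--Yao edges route any entry-to-exit path along a path whose length exceeds the port separation by a multiplicative factor $1+\delta>1$, equivalently forcing the path to sweep through the full extent of the gadget. The recursion then substitutes a scaled copy of the level-$(m-1)$ construction into each ``edge slot'' of $\gadget$, so that the detour incurred at the top level is inflated by the detours nested inside it. Taking $p,q$ to be the two ports of the level-$m$ copy, the accumulated stretch is at least of order $(1+\delta)^m$ (or $\Omega(m)$, depending on how the scaling is arranged), which is unbounded as $m\to\infty$. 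The geometric placement must be chosen so that the relevant cones at every scale line up with the half-cone offset described above; this alignment is what makes the same blocking effect recur identically at each level.

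The delicate part of the argument is certifying the edge set of $\YY{2k+1}(\calP_m)$ and then proving a lower bound against \emph{all} paths, not merely the obvious one. Because the Yao--Yao incoming filter is a global operation---whether an edge $\overrightarrow{uv}$ survives depends on every point lying in the relevant cone at $v$---I would first establish a clean local-to-global lemma showing that, within each scaled gadget, only a controlled list of edges is present and, crucially, that the blocking edges are absent. For the path lower bound I would use a topological separation argument: for each level $i$ I would construct a Jordan curve $\jordan_i$ that separates $p$ from $q$, arranged in a nested family, so that any $p$--$q$ path must cross each $\jordan_i$; I would then show that each crossing confines a portion of the path to a level-$i$ sub-structure through which it cannot pass cheaply, so the contributions over the $m$ levels sum (or multiply) to an unbounded total.

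I expect the main obstacle to be exactly this global control of the surviving edge set under the incoming-edge filter: it is easy to arrange that the direct crossing edge \emph{would} be a nearest-neighbor edge, but one must simultaneously guarantee that a suitable blocker survives in the cone at the far endpoint, that this blocker does not itself create an unwanted shortcut elsewhere, and that these guarantees are stable under the recursive substitution so that the same configuration reappears at every scale. Getting the quantitative geometry---the exact angles, the scaling ratio between consecutive levels, and the placement of blockers relative to the half-cone offset---to be simultaneously consistent at all levels, and to remain valid for every odd $2k+1$ with $k\ge 1$, is where the bulk of the technical work will lie.
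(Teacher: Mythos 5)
Your strategic outline matches the paper's: a fractal-like recursive construction, exploitation of the fact that for an odd number of cones the reverse of a cone boundary is not a cone boundary (Property~\ref{prop:cone}), blocker points that kill the direct crossing edge in the reverse-Yao step, and a multiplicative $\rho^m$ accumulation of detours certified by an ordered family of separators (the paper's totally ordered hinge sets play the role of your nested Jordan curves). However, there is a genuine gap at the heart of your plan: you propose to build $\calP_m$ by substituting a scaled copy of the level-$(m-1)$ construction into each edge slot of a \emph{fixed, constant-size} gadget. This uniform self-similar substitution is precisely the naive fractal that the paper explains cannot work. The obstruction is the one you yourself flag as the ``delicate part'': the Yao--Yao incoming filter is global, and a uniform recursion produces long range edges between points lying in \emph{different} subtrees at the same or different scales, in configurations that vary from level to level and cannot all be blocked by a bounded local gadget. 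Your hoped-for ``local-to-global lemma'' is false for the uniform construction.

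The paper's resolution requires machinery your proposal does not anticipate. The gadgets are \emph{not} identical: they are generated in DFS preorder, and the partition points of each new pair are obtained by \emph{projecting} every point already constructed in earlier sibling subtrees onto the new segment (followed by a \emph{refinement} step keeping sibling pair lengths within a factor of two of each other, so the projection can be iterated). This projection is exactly what forces every surviving long range connection $\ov{pq}$ between sibling subtrees to be parallel to the parent segment and to land on a designated point of the target gadget (Lemmas~\ref{lm:left} and~\ref{lm:right}); only then can two auxiliary points per candidate center, at carefully chosen geometrically graded distances, break all long range connections without creating new ones (Lemma~\ref{lm:aux}). A further issue you do not address is self-intersection of the fractal: some pieces must be declared empty and not recursed on, and one must verify that the total length so sacrificed is $O(d_0^{-1})$ per level so that $\rho=(1-6d_0^{-1})\cos^{-1}(\theta/2)$ still exceeds $1$. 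Without the non-uniform projection/refinement construction, the edge-set certification you correctly identify as the crux cannot be carried out, so the proposal as stated does not yield a proof.
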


  \subparagraph{Related Work}

      It has been proven that in some special cases, Yao-Yao graphs are spanners~\cite{jia2003local,
      kanj2012certain, wang2003distributed,damian2008simple}. Specifically, it was shown that $\YY{k}$
  graphs are spanners in \emph{civilized graphs}, where the ratio of the maximum
edge length to the minimum edge length is bounded by a constant~\cite{jia2003local,
  kanj2012certain}.

  Besides the Yao and Yao-Yao graph, the \emph{$\Theta$-graph} is another common geometric $t$-spanner.
  The difference between $\Theta$-graphs and Yao graphs is that in a
  $\Theta$-graph, the nearest  neighbor to $u$  in a cone $C$ is a point $v \neq u$ lying in $C$ and minimizing the Euclidean
  distance between $u$ and the orthogonal projection of $v$ onto the bisector of $C$.  It is known that
   except for  $\Theta_{2}$ and $\Theta_{3}$~\cite{el2009yao}, for  $k =
  4$~\cite{barba2013stretch},  $5$~\cite{bose2015theta5}, $6$~\cite{bonichon2010connections}, $\geq
  7$~\cite{ruppert1991approximating,bose2013spanning}, $\Theta_{k}$-graphs are all geometric spanners. We note that,
  unfortunately,  the degrees of $\Theta$-graphs may not be bounded.

      Recently, some variants  of geometric $t$-spanners such as weak $t$-spanners and power
      $t$-spanners have been studied.  In weak $t$-spanners, the path between two points may be
      arbitrarily long, but must remain within a disk of radius $t$-times the Euclidean distance
      between the points. It is known that all Yao-Yao graphs $\YY{k}$  for $k > 6$ are weak
      $t$-spanners~\cite{grunewald2002distributed,schindelhauer2004spanners,schindelhauer2007geometric}. In 
      power $t$-spanners, the Euclidean distance $|\cdot|$ is replaced by $|\cdot|^{\kappa}$ with a
      constant $\kappa \geq 2$. Schindelhauer et al.~\cite{ schindelhauer2004spanners, schindelhauer2007geometric} proved that for $k > 6$, all Yao-Yao graphs $\YY{k}$
      are power $t$-spanners for some constant $t$.  Moreover, it is known that any $t$-spanner is also
      a weak $t_1$-spanner and a power $t_2$-spanner for some $t_1, t_2$ depending only on
      $t$. However, the converse is not true~\cite{schindelhauer2007geometric}.

Our counterexample is inspired by the concept of fractals. 
Fractals have been used to construct  examples for $\beta$-skeleton graphs with unbounded stretch
factors~\cite{eppstein2002beta}. Here a $\beta$-skeleton graph is defined to contain exactly those edges $ab$ such that no point $c$ forms an angle $\angle acb$ greater than $\sin^{-1} 1/\beta$ if $\beta > 1$ or $\pi - \sin^{-1} \beta $ if $\beta < 1$.
%Here a $\beta$-skeleton graph is formed by including an edge $uv$ when no other point $w$ forms an angle $\angle uwv$ larger than some particular bound.
Schindelhauer et al.~\cite{schindelhauer2007geometric} used the same example to prove that there
exist graphs which are weak spanners but not $t$-spanners.
However, their examples cannot serve as counterexamples to the conjecture that odd Yao-Yao graphs
are spanners.
%  In the paper, we obtain the counterexample through modifying the example and
%providing  some tricks to avoid many edges in odd Yao-Yao graphs.

  % special point set

  % power stretch factor

  % Delaunay triangulation and planar graph.

%%% Local Variables:
%%% mode: latex
%%% TeX-master: "odd"
%%% End:

\section{Overview of our Counterexample Construction}
\label{sec:counter3}

We first note that both the counterexamples for $\YY{3}$ and $\YY{5}$ are not weak $t$-spanners~\cite{el2009yao,barba2014new}.
However, Yao-Yao graphs $\YY{k}$ for $ k \geq 7$ are all weak $t$-spanners \cite{grunewald2002distributed,schindelhauer2004spanners, schindelhauer2007geometric}.
Hence, to construct the counterexamples for $\YY{k}$ for $k \geq 7$,  the previous ideas for
$\YY{3}$ and $\YY{5}$ cannot be used. We will construct a class of instances $\{ \calP_{m} \}_{m\in
  \mathbb{Z}^+}$ such that all points in $\calP_m$ are placed in a bounded area.
 Meanwhile, there exist shortest paths in $\YY{2k+1}(\calP_m)$ whose lengths approach infinity as $m$ approaches infinity.

Our example contains two types of points, called \emph{normal points} and \emph{auxiliary
  points}. Denote them by $\normp$ and $\auxp$ respectively and $\calP_{m} = \normp \cup \auxp$. The normal points form the basic skeleton, and the auxiliary points are used to break the edges
connecting any two normal points that are far apart.

We are inspired by the concept of fractals to construct the normal points.  A fractal can be
contained in a bounded area, but its length may diverge.
In our counterexample, the shortest path between two specific
normal points is a fractal-like polygonal path.  Here a polygonal path refers to  a curve specified by a
sequence of points and  consists of the line segments connecting the consecutive points.
Suppose the two specific points are $A$ and $B$, $AB$
is horizontal, and $|AB|=1$.
When $m=0$, the polygonal path is just the line segment $AB$.
When $m$ increases by one, we replace each line segment in the current polygonal path by a {\em
  sawteeth-like} path (see Figure~\ref{fig:fractal0}). If the angle between each segment of  the sawteeth-like path and the base segment
(i.e., the one which is replaced) is $\gamma$, the total length of the path increases by a factor of
$\cos^{-1} \gamma$. An important observation here is that the factor is independent of the number of
sawteeth (see Figure~\ref{fig:fractal1}). If we repeat this process directly,
the length of the resulting path would increase to infinity as
$m$ approaches  infinity  since $\cos^{-1} \gamma > 1$ (see Figure~\ref{fig:fractal2}). However, we need to make sure that such a path is indeed
in a Yao-Yao graph and it is indeed  the shortest path from $A$ to $B$.
There are two technical difficulties we need to overcome.

\begin{figure}[t]
\captionsetup[subfigure]{justification=centering}
  \centering
    \begin{subfigure}{0.4\textwidth}
    \centering
    \includegraphics[width = 1\textwidth]{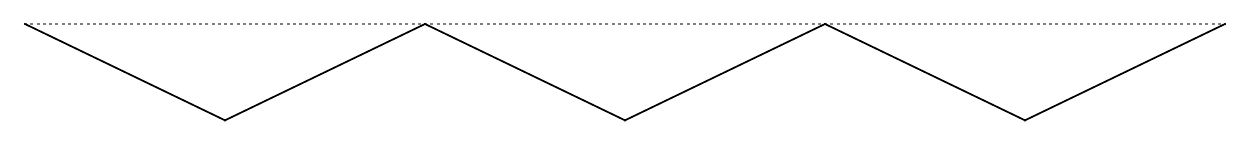}
    \caption{Replace a horizontal segment by a sawteeth-like path.}
    \label{fig:fractal0}
    \end{subfigure}%
    \qquad
    \begin{subfigure}{0.4\textwidth}
    \centering
    \includegraphics[width = 1\textwidth]{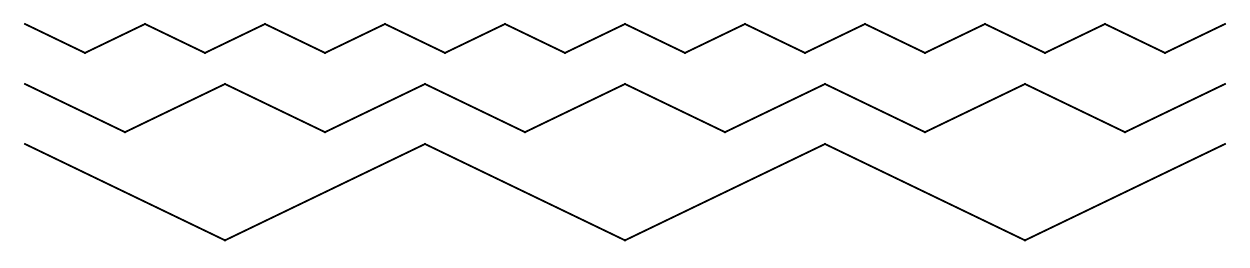}
    \caption{The lengths of
    the sawteeth-like paths are independent of the number of sawteeth.}
    \label{fig:fractal1}
    \end{subfigure}

    \begin{subfigure}{0.55\textwidth}
    \centering
    \includegraphics[width = 1\textwidth]{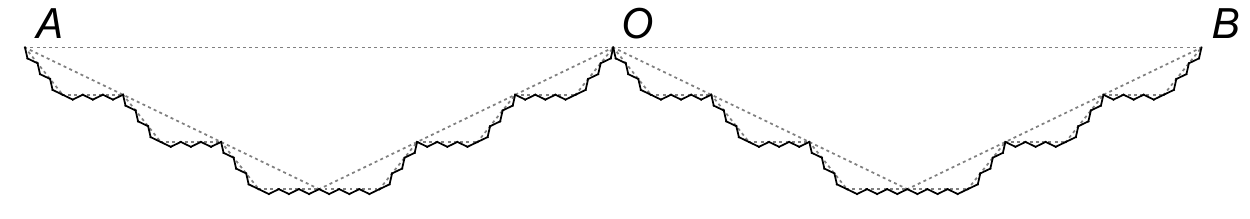}
    \caption{Replace the segments by sawteeth-like path recursively.}
    \label{fig:fractal2}
    \end{subfigure}

    \begin{subfigure}{0.41\textwidth}
    \centering
    \includegraphics[width = 1\textwidth]{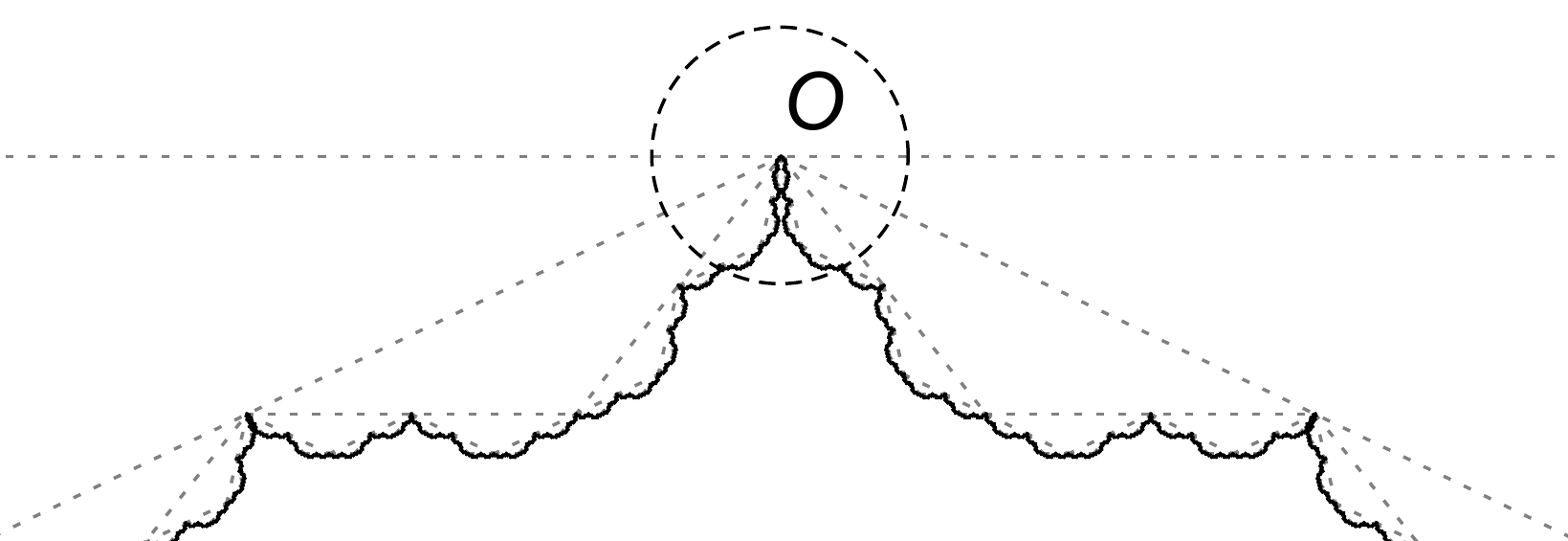}
    \caption{An enlarged view of Figure~\ref{fig:fractal2} around point $O$.}
    \label{fig:fractal3}
    \end{subfigure}

    \begin{subfigure}{0.55\textwidth}
    \centering
    \includegraphics[width = 1\textwidth]{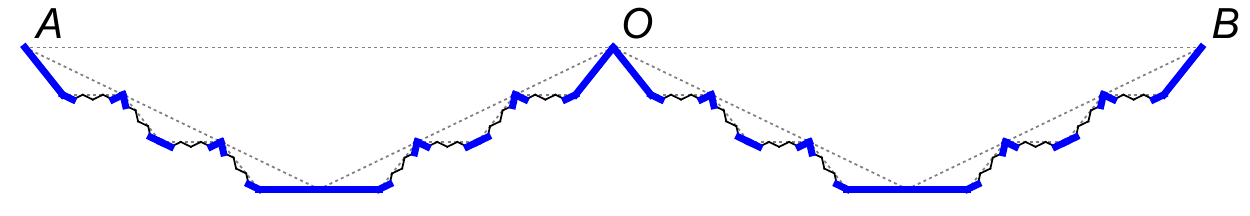}
    \caption{Do not replace the bold segments  further.}
    \label{fig:fractal4}
    \end{subfigure}

    \begin{subfigure}{0.55\textwidth}
    \centering
    \includegraphics[width = 1\textwidth]{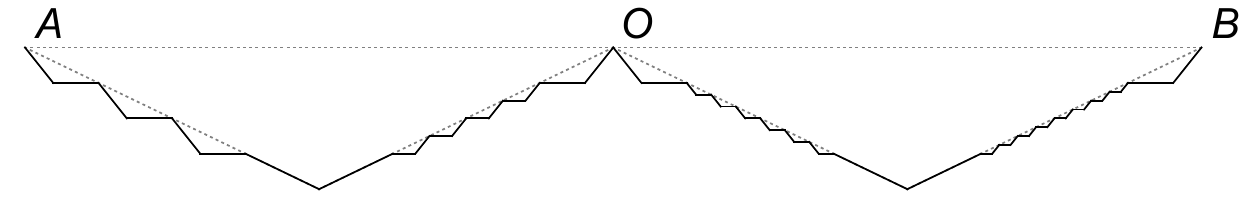}
    \caption{The paths have different numbers of sawteeth and the sizes of sawteeth may not be the same.}
    \label{fig:fractal5}
    \end{subfigure}
  \caption{The overview of the counterexample construction. Figure~\ref{fig:fractal0}-\ref{fig:fractal5} illustrate the fractal and its variants.}
  \label{fig:fractal}
\end{figure}

\begin{enumerate}
	\item
	As $m$ increases, the polygonal path may intersect itself. See Figure~\ref{fig:fractal3}. The polygonal path intersects itself around the point $O$.
	This is relatively easy to handle:	we do not recurse for those segments that may cause self-intersection. See Figure~\ref{fig:fractal4}. We do not replace the bold segment further.
    We need to make sure that the total length of such segments is proportionally small (so that the total length can keep increasing as $m$ increases).	
    	
	\item
	In the Yao-Yao graph defined over the normal points constructed  in the recursion, there may be some edges connecting points that are far apart. Actually, how to break such edges is the main difficulty of the problem.
	We outline the main techniques below.
\end{enumerate}

	First, we {\em do not} replace all current segments using the same sawteeth, like in the usual fractal construction. Actually, for each segment,
      we will choose a polygonal path such that the paths have  different numbers of
      sawteeth and the sizes of the sawteeth in the path may not be the same. See
      Figure~\ref{fig:fractal5}.  Finally,  we construct them in a specific sequential
      order. Actually, we organize the normal points in an $m$-level \emph{recursion tree} $\ntree$
      and generate them in a DFS preorder traversal of the tree. We describe  the details in Section~\ref{sec:normal}.

	Second, we group the normal points into a collection of sets such that each normal point belongs
    to exactly one set. We call such a set a \emph{hinge set}. Refer to Figure~\ref{fig:longshort}
    for an overview.  Then, we specify a \emph{total order} of the hinge sets. Call the edges in the
    Yao-Yao graph $\YY{2k+1}(\normp)$ connecting any two normal points in the same hinge set or two
    adjacent hinge sets (w.r.t. the total order) \emph{hinge connections} and call the other edges
    \emph{long range connections}. We describe the details in Section~\ref{sec:edge}.

     As we will see, all possible long range connections have a relatively simple form.
    Then, we show that
    we can break all long range connections by adding a set $\auxp$ of {\em auxiliary
    points}. Each auxiliary point has a unique \emph{center} which is the normal point closest to it.
    Let the minimum distance between any two normal
    points in $\normp$ be $\Delta$. The distance between an auxiliary point and its center is much less than $\Delta$. Naturally, we can extend  the concepts of hinge set and
    long range connection to include the auxiliary points.  An \emph{extended hinge set} consists of
    the normal points in a hinge set and the auxiliary points centered on these normal points.  We
    will see that the auxiliary points break all long range connections and  introduce no new long
    range connection.  We describe the details in Section~\ref{sec:aux}. 		

	Finally, according to the process above, we can see that the shortest path between the normal
    points $A$ and $B$ in $\YY{2k+1}(\calP_{m})$ for $m\in \mathbb{Z}^+$ should pass through
    all extended hinge sets in order. Thereby, the length of the shortest path between $A$ and
    $B$ diverges as $m$ approaches infinity. We describe the details in
    Section~\ref{sec:length}.

% d-equant partition , d-feasible partition

%%% Local Variables:
%%% mode: latex
%%% TeX-master: "odd"
%%% End:

\section{The Positions of Normal Points}
\label{sec:normal}

In this section, we describe the positions of normal points.
Note that, in the section, we only care about the positions of the points. The segments in any figure of this section are used to illustrate the
relative positions of the points. Those segments may not represent the edges in Yao or Yao-Yao
graphs. See Figure~\ref{fig:overview} for an overview of the positions of the normal points.

\begin{figure}[t]
  \centering
  \includegraphics[width = 1\textwidth]{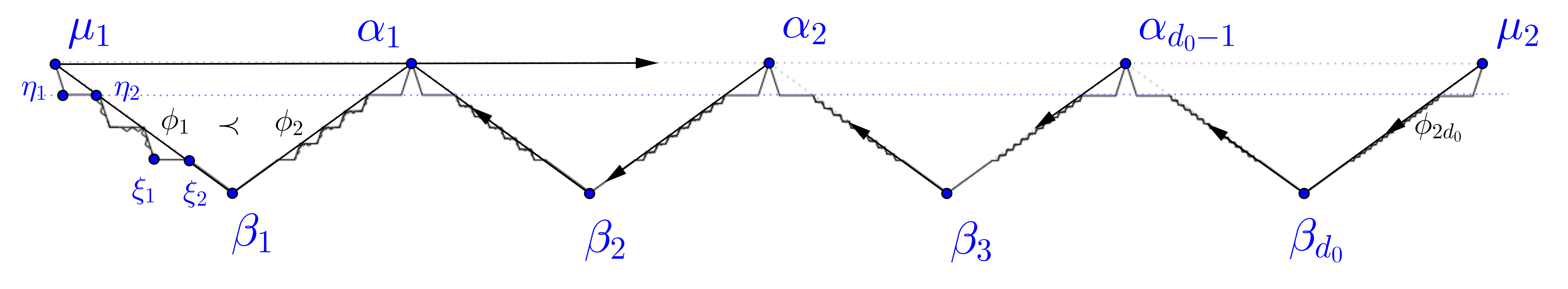}
  \caption{The overview of the positions of normal points. There exists a point at each intersection of these segments.
  $\mu_1\mu_2$ is    horizontal. $\{\alpha_1, \alpha_2, \ldots , $ $\alpha_{d_0 -1}
    \}$ partitions the segment $\mu_1\mu_2$ into $d_0$ equal parts. For each $\beta_i$, $\angle
    \alpha_{i-1}\beta_i\alpha_i = \pi -
    \theta$ and $|\alpha_{i-1}\beta_i| = |\beta_i\alpha_i|$. We call $\{\alpha_1, \alpha_2, \ldots, \alpha_{d_0 -1} \}$ the \onseg\
    set and  $\{\beta_1, \beta_2, \ldots, \beta_{d_0} \}$ the \outseg\ set of pair $(\mu_1,\mu_2)$.
  }
  \label{fig:overview}
\end{figure}

\subsection{Some Basic Concepts}

Let $k \geq 3$ be a fixed positive integer.\footnote{Note that the cases $k=1,2$ have been proved in \cite{el2009yao}.} We consider $\YY{2k+1}$ and let $\theta =2\pi/(2k+1)$.

\begin{defn}[Cone Boundary]
  Consider  any two points $u$ and $ v$.  If the polar angle of $\ov{uv}$ is    $j\theta=j \cdot 2\pi/(2k+1)$
  for   some integer $j \in [0, 2k]$, we call the ray $\ov{uv}$  a cone boundary for point $u $.
\end{defn}

Note that in an odd Yao-Yao graph, if $\ov{uv}$ is a cone boundary, its reverse $\ov{vu}$ is not a cone boundary.
In retrospect, this property is a key difference between odd Yao-Yao graphs and even Yao-Yao graphs,
and our counterexample for odd Yao-Yao graphs will make crucial use of the property.
We make it explicit as follows.

\begin{property}
  \label{prop:cone}
  Consider two points $u $ and $ v$ in $\calP$. If $\ov{uv}$ is a cone boundary in $\YY{2k+1}(\calP)$,
  its reverse $\ov{vu}$ is not a cone boundary.
\end{property}

\begin{defn} [Boundary Pair]
\label{def:pair}
  A boundary pair consists of  two ordered points, denoted by $(w_1, w_2)$, such that $\ov{w_1w_2}$  is a cone boundary of  point $w_1$.
\end{defn}

For convenience, we refer to the word \emph{pair} in the paper as the boundary pair defined in Definition~\ref{def:pair}. 
According to Property~\ref{prop:cone}, if $(w_1, w_2)$ is a pair, its reverse $(w_2, w_1)$ is not a pair. For a pair $\pair = (w_1, w_2)$, we call $w_1$ the first point in $\pair$ and $w_2$ the second point in $\pair$.  Moreover, if a pair $\pair$ is $(u , \cdot)$ or $(\cdot, u)$, we say that the point $u$ belongs to $\pair$ (i.e., $u \in \pair$).

\eat{
  \begin{corollary}
	In odd Yao-Yao graph, if two points $(w_1, w_2)$ is a pair, its inverse $(w_2, w_1)$ is not a
	pair.
  \end{corollary}
}

\topic{Gadget}
Now, we introduce the concept of a gadget generated by a pair
$\pair=(w_1,w_2)$.
Such a gadget is a collection of points which is a superset of $\pair$
(see Figure~\ref{fig:counter3}).
If the recursive level $m$ increases by 1, we
use a gadget generated by pair $\pair$ to replace  $\pair$.

One gadget $\gadget_{\pair}$
consists of three groups of points. We explain them one by one. See
Figure~\ref{fig:counter3} for an example.

\begin{figure}[t]
  \centering
  \includegraphics[width = 1\textwidth]{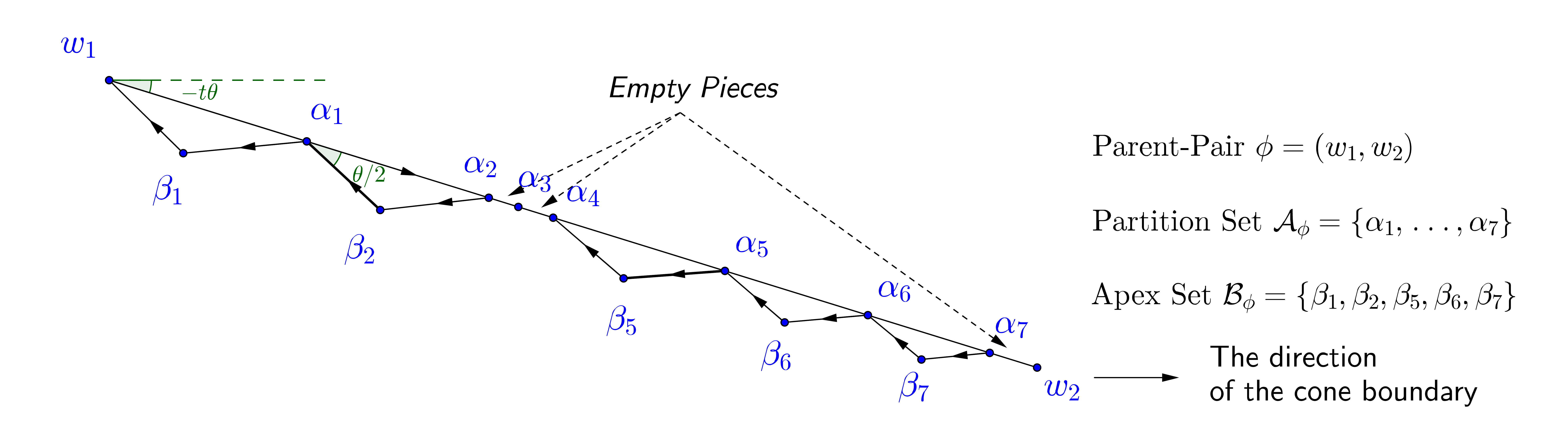}
  \caption{An example of one gadget. $\pair = (w_1,w_2)$ is the parent-pair in the gadget.  $\calA_{\pair} = \{\alpha_1, \alpha_2, $ $\alpha_3,   \ldots,   \alpha_7 \}$ is the \onseg\ set and  $\calB_{\pair} = \{\beta_1, \beta_2, \beta_5,
    \beta_6, \beta_7\}$ is the \outseg\ set.  There  are eight pieces, in which  $w_1\alpha_1, \alpha_1\alpha_2, \alpha_4\alpha_5, \alpha_5\alpha_6, \alpha_6\alpha_7$ are non-empty pieces and $\alpha_2\alpha_3, \alpha_3\alpha_4, \alpha_7 w_2$ are empty pieces. }
  \label{fig:counter3}
\end{figure}

\begin{enumerate}
\item
  The first group is the pair $\pair=(w_1, w_2) $.
  We call the pair the \emph{parent-pair} of the gadget $\gadget_{\pair}$.
\item
  The second group is a set $\calA_{\pair}$ of points on the segment of $(w_1, w_2)$. We call the
  set $\calA_{\pair}$  a \emph{partition set} and call the  points of $\calA_{\pair}$ the
  \emph{\onseg\ points} of  $\pair$.   For example, in  Figure~\ref{fig:counter3}, $\{\alpha_1, \alpha_2, \ldots, \alpha_7\}$
  (here, $|w_1 \alpha_i | < |w_1 \alpha_j|$ if $i < j$) is
  a partition set of $(w_1,  w_ 2)$.
 The set $\calA_{\pair}$ divides the segment into $|\calA_{\pair}|+1$ parts,
  each we call a \emph{piece} of the segment.  There are two types of pieces. One is called  an \emph{empty piece} and the other a
  \emph{non-empty piece}. Whether a piece is empty or not is determined in the process of the construction, which we will
  explain in Section~\ref{subsec:construc}.
\item
  For each non-\ttype\ piece, $\alpha_{i-1}\alpha_{i}$, we
  add a point $\beta_i$
  such that $\angle \alpha_{i-1}\beta_i\alpha_i =
  \pi-\theta$ and  $|\alpha_{i-1}\beta_i| = |\beta_i\alpha_{i}|$.\footnote{Note that the subscript $i$
    of $\beta_i$ is consistent with the subscript of the piece $\alpha_{i-1}\alpha_{i}$. Hence, the
    subscripts may not be consecutive among all $\beta_i$s.}
  All $\beta_i$s are on the same
  side of $w_1w_2$. We call such a point $\beta_i$ an \emph{\outseg\ point}  of  $(w_1,
  w_2)$.  Let $\calB_{\pair}$ be the set of apex points generated by $\pair$, which is called
  the \emph{apex set} of pair $\pair$.
  $\calB_{\pair}$ is the third group of points. For any empty piece,  we do not
  add the corresponding apex point.   In Figure~\ref{fig:counter3},
  $\{\beta_1, \beta_2,\beta_5, \beta_6, \beta_7\}$ is an apex set of $(w_1, w_2)$.
\end{enumerate}

We summarize the above construction in the following definition.

\begin{defn}[Gadget]
  \label{defn:gadget}
  A gadget $\gadget_{\pair}$ generated by a pair $\pair$ is a set of points which  consists of  the pair $\pair$, a partition set $\calA_{\pair}$ and an apex set
  $\calB_{\pair}$ of $\pair$.  We denote the gadget by $\gadget_{\pair}[\calA_{\pair}, \calB_{\pair}]$.
\end{defn}

\eat{
For convenience, we say that the \emph{segment of a pair} is the segment connecting the two points of
the pair and the \emph{length of a pair} is the length of that segment.
}

Consider a gadget $\gadget_{\pair}[\calA_{\pair}, \calB_{\pair}]$, where $\pair = (w_1, w_2)$. For
any non-\ttype\ piece $\alpha_{i-1}\alpha_i$ and
the corresponding \outseg\ point $\beta_i$, the rays $\ov{\beta_{i}\alpha_{i-1}}$ and $\ov{\alpha_i\beta_i}$
(note the order of the points) are cone boundaries.\footnote{ Suppose the polar angle of $w_1w_2$ is
  $-t\theta$. Note that $(2k+1) \theta = 2\pi$. Then, we can obtain that the polar angle of
  $\ov{\alpha_i\beta_i}$ is $(k-t+1)\theta$ and the polar angle of $\ov{\beta_{i}\alpha_{i-1}}$ is
  $(k-t)\theta$. Hence, $\ov{\alpha_i\beta_i}$ and $\ov{\beta_{i}\alpha_{i-1}}$ are cone boundaries.}  Thus, each point $\beta_i \in \calB_{\pair}$ induces two pairs   $(\beta_i,
\alpha_{i-1} )$ and $ (\alpha_i, \beta_i)$.
We call all pairs $(\beta_i, \alpha_{i-1} )$ and $ (\alpha_i, \beta_i)$ induced by points in $\calB_{\pair}$ the
\emph{child-pairs} of $(w_1, w_2)$, and we say that they are
{\em siblings} of each other.
Now, we define the order of the child-pairs of pair $(w_1,w_2)$,
based on their distances to $w_1$.  Here,  the distance
from a point $w$ to a pair $\pair$ is the shortest distance from $w$ to any point of $\pair$.

\begin{defn}[The Order of the Child-pairs]
  \label{defn:order}
  Consider a gadget $\gadget_{(w_1, w_2)}$.
  Suppose $\sset$ is the set of the child-pairs of $(w_1,
  w_2)$.
  Consider two pairs $\pair, \vpair$ in $\sset$.
  Define the order $\pair \prec  \vpair$, if $\pair$ is closer to $w_1$ than  $\vpair$.
\end{defn}
For example, in Figure~\ref{fig:counter3}, $  (\beta_2,\alpha_1) \prec (\alpha_5, \beta_5)$. We emphasize that the order of the child-pairs depends on the direction of their parent-pair.

\subsection{The Recursion Tree}
\label{subsec:tree}

\begin{figure}[t]
  \centering
  \begin{minipage}[center]{1.0\linewidth}
    \centering
    \includegraphics[width = 0.8\textwidth]{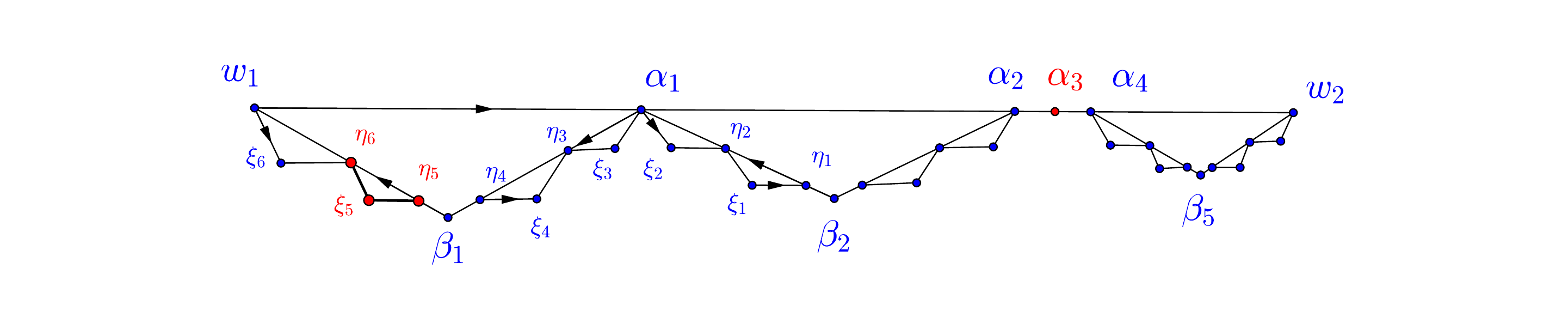}
    \caption{An example of the gadgets which are generated in a recursive manner. $\alpha_3$ is an isolated partition point.
      The arrow of a segment indicates the order of two points in the
      pair. For example, the arrow from $w_1$ to $w_2$ indicates
      that $(w_1, w_2)$ is a pair.}
    \label{fig:recur}
  \end{minipage}

  \begin{minipage}[center]{1.0\linewidth}
    \centering
    \includegraphics[width = 0.8\textwidth]{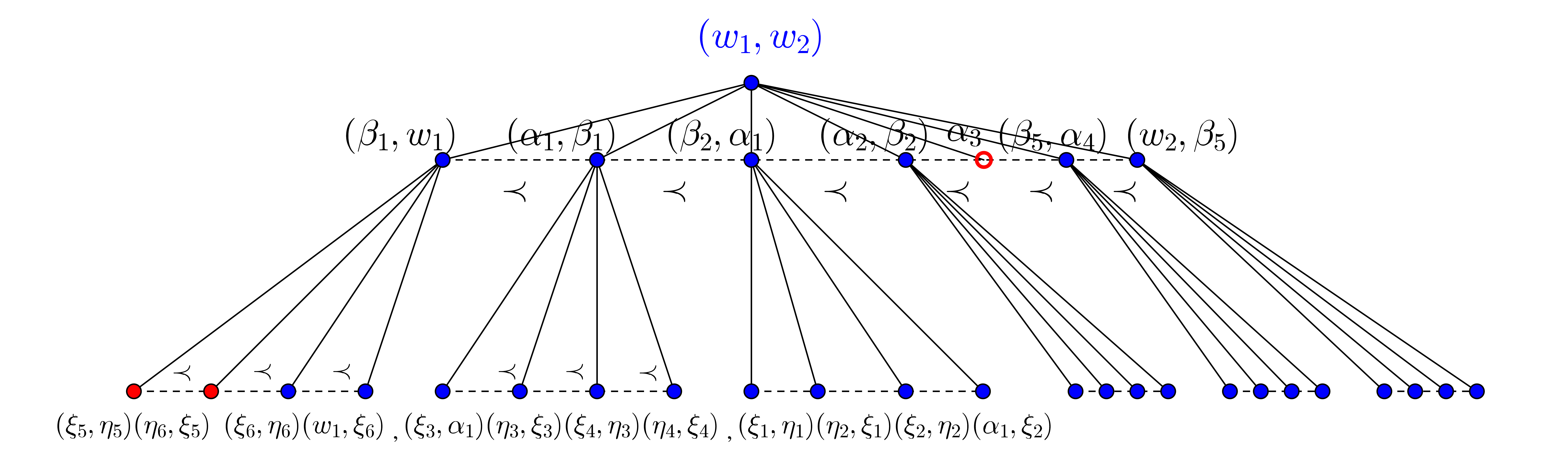}
    \caption{
    The recursion tree of our construction. Each node of the tree represents a pair (e.g.,
      $(\beta_1, \omega_1)$) or a
      point (e.g., $\alpha_3$) in Figure~\ref{fig:recur}.
      Pair $(w_1,w_2)$ is the root at $\level{0}$. Any
      pair at  $\level{(i+1)}$ is generated from a pair at $\level{i}$.
      %Note that pair $(\xi_5,\eta_5)$ is the leftmost pair.
 }
    \label{fig:tree}
  \end{minipage}
\end{figure}

In this subsection, we construct  an $m$-level tree. When the recursion level increases by 1, we need to replace each current pair by  a gadget generated by the pair. The recursion can be naturally represented as a tree $\ntree$.
Each node of the tree represents either a pair or a point.
To avoid confusion, we use \emph{point} to express
a point in $\mathbb{R}^2$ and \emph{node} to express a vertex in the tree.
The pair $(\mu_1,\mu_2)$ is the root of the tree
($\level{0}$).
The child-pairs of $(\mu_1,\mu_2)$ are the child-nodes of the root
(they are at $\level{1}$).
Recursively, each child-pair of a pair $\pair$  is a child-node of the node $\pair$ in $\ntree$.
Besides, there are some {\em  partition points} of the
\ttype\ pieces (e.g., the point $\alpha_3$ in Figure~\ref{fig:recur})
which may not belong to any pair. We call it an \emph{isolated} point. Let an isolated  point be a leaf in  $\ntree$
and the parent of such a point be its parent pair. For example, the parent of $\alpha_3$
is the pair $(\omega_1, \omega_2)$ in Figure~\ref{fig:recur}.
We provide the recursion tree in Figure~\ref{fig:tree}, which corresponds to the points in Figure~\ref{fig:recur}.

\eat{
The pairs with the same parent are siblings. According to
Definition~\ref{defn:order}, given two sibling pairs $\pair$ and $\vpair$,
we say $\pair$ is \emph{to the left} of
$\vpair$ in tree $\ntree$ if $\pair \prec \vpair$. We note that here
``\emph{left}"does not mean $\pair$ is on the left hand side of
$\vpair$ geometrically. }

The nodes with the same parent are siblings. According to Definition~\ref{defn:order},
 we define a total order ``$\prec$''of them.  In tree $\ntree$, if ``$\vpair \prec \pair$'', we place $\vpair$ to the left of $\pair$, (e.g., $(\xi_5, \eta_5)$ is at the left of $(\eta_6, \xi_5)$ in Figure~\ref{fig:tree}).
However, ``$\vpair \prec \pair$'' does not mean that $\vpair$ is on the left hand side of $\pair$ geometrically.
For example, in Figure~\ref{fig:recur}, pair $(\xi_5,\eta_5) \prec (\eta_6,\xi_5 )$ in
the tree $\ntree$, but in the Euclidean plane, point $\eta_5$ is on the right side of $\eta_6$.

For a pair $\pair$ (corresponding to a node in $\ntree$),
we use $\ntree_{\pair}$ to denote the subtree rooted at $\pair$ (including $\pair$), or all the points involved in the subtree.
%We call every point in $\ntree_{\pair}$ a \emph{descendant point } of $\pair$.

Our counterexample $\calP_m$ corresponds to a recursion tree with $m$ levels.
We have not yet specified how to
choose the partition set for each gadget and decide which pieces are empty for each pair.
We will do it in the next subsection.
We note that we {\em do not}
construct the tree level by level, but rather according to the  DFS preorder.

\subsection{The Construction}
\label{subsec:construc}

Now, we describe the process of generating the $m$-level recursion tree $\ntree$. See Figure~\ref{fig:steps} for an example.
We call a pair \emph{a leaf-pair} if it is a leaf node in the tree and \emph{an internal-pair} otherwise.
W.l.o.g, we assume that the root of $\ntree$ is  $(\mu_1, \mu_2)$ and $\mu_1\mu_2$ is horizontal.
The tree is generated according to the DFS preorder, starting  from the root.
When we are visiting an internal-pair, we generate its gadget.
Note that generating its gadget is equivalent to
generating its children in $\ntree$.
We, however, do not
visit those children immediately after their generation. They will
be visited later according to the DFS preorder.
Whether a pair is a leaf or not
is determined as the gadget being created. Note that not all leaf-pairs are at $\level{m}$.   

The process generating the gadget for an internal-pair includes two steps, which are called \emph{projection} and \emph{refinement}. We will explain the detail soon.  We denote the procedure
to construct the recursion tree $\ntree$ by $\mathsf{GenGadget}(\pair)$ and the pseudocode can be found in  Algorithm~\ref{alg:gadget}.
We call the points generated by Algorithm~\ref{alg:gadget} \emph{normal points} and denote  them by $\normp$ where $m$ is the level of the tree and n represents the
word ``normal''.

\topic{Root gadget}
 Let $d_0$ be a large positive constant integer.\footnote{$d_0$ depends on $k$, but on the number of points. We will determine the exact value of $d_0$ in
  Section~\ref{sec:length}.} Consider a pair  $ \pair = (\mu_1, \mu_2)$.   Let $\calA_{\pair}$ be its
partition set which contains  points
$$
\alpha_{i}  =  \mu_1 \cdot \frac{d_0-i}{d_0}+ \mu_2 \cdot \frac{i}{d_0}, i \in [1, d_0-1].
$$
For convenience,  let $\alpha_0 = \mu_1, \alpha_{d_0} = \mu_2$.   The points in $\calA_{\pair}$ partition the segment $\mu_1\mu_2$
into $d_0$ pieces with equal length $|\mu_1\mu_2|/ d_0$. All pieces in the root gadget are non-empty.
For each piece $\alpha_{i-1}\alpha_i$,
we add an \outseg\ point $\beta_i$  below $\mu_1\mu_2$. Let $\calB_{\pair} = \{\beta_i\}_{i\in[1,d_{0}]}$ be the apex set.  See
Figure~\ref{fig:step1} for an example.

\begin{figure}[t]
  \centering
  \includegraphics[width = 1\textwidth]{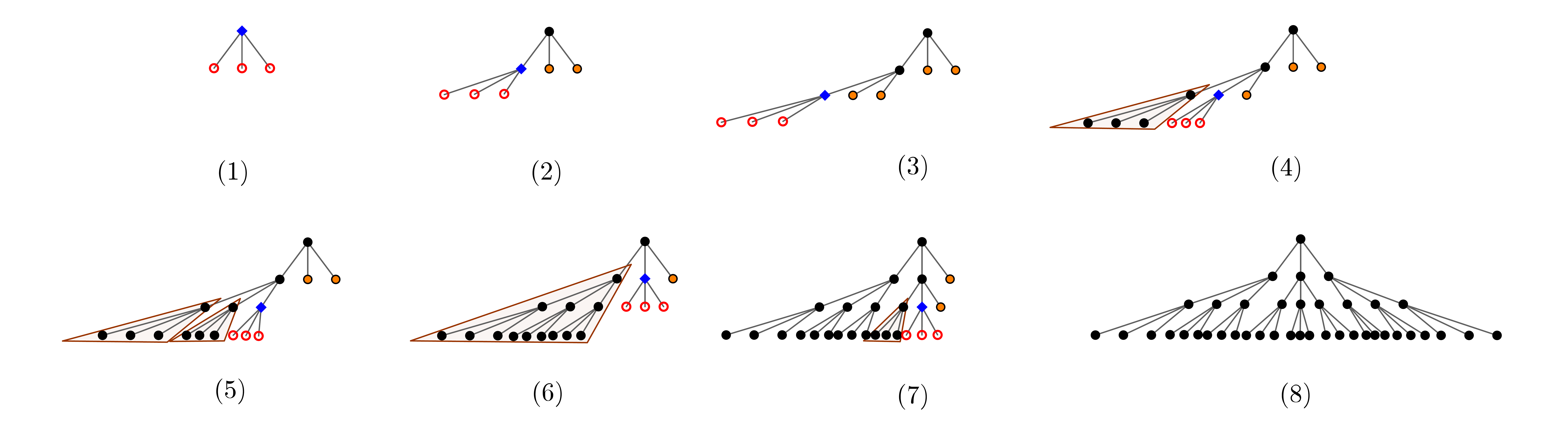}
  \caption{ The process of generating a tree according to the DFS preorder.  In each subfigure,     \drawrectangle{blue, thick, rotate = 45} represents a
    node we are visiting.  The nodes  generated in the step are denoted by  \drawcircle{red, thick}.
    \drawcircle{black, fill=black} represents a node  which has already been
    visited. \drawcircle{black,  fill=orange} represents a node which has been created but not
    visited yet. The nodes covered by light brown triangles are related to the projection process. }
  \label{fig:steps}
\end{figure}

\begin{algorithm}[t]
	\caption{$\mathsf{GenGadget}(\pair)$: Generate the Normal Points in $\ntree_{\pair}$}
	\label{alg:gadget}
	\uIf {$\pair$ is a leaf-pair }{
		Return \;
	}
	\Else{
		$\gadget_{\pair}\leftarrow \projref(\pair)$\;
	}
	\ForEach{child-pair $\vpair$ of $\pair$}{
		$\mathsf{GenGadget}(\vpair)$ \;
	}
\end{algorithm}

\eat{
 Next, we discuss the  process \emph{projection} and \emph{refinement}. For convenience, we say that
 the \emph{segment   of a pair} is the segment connecting the two points of the pair and the
 \emph{length of a pair} is the length of that segment.  In the process, we always maintain the
 following property.

\begin{property}
\label{prop:p2}
Consider an internal-pair $\pair$. Suppose  $\vpair$ is a sibling of $\pair$. The length of pair $\vpair$ is at least half of the length of pair $\pair$.
\end{property}
}

\begin{figure}[t]
  \centering
  \includegraphics[width = 1\textwidth]{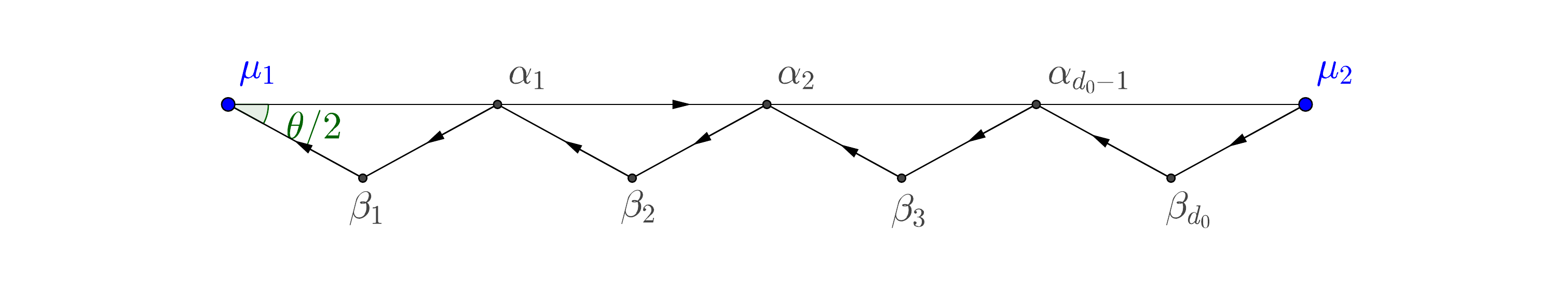}
  \caption{The root gadget $\gadget_{\pair}(\calA_{\pair}, \calB_{\pair})$ where $\pair = (\mu_1,
    \mu_2)$. $\mu_1\mu_2$ is horizontal. $\calA_{\pair}$ is  the equidistant partition. Each piece
    is non-\ttype.}
  \label{fig:step1}
\end{figure}

\topic{Projection and Refinement} Let $\ntree_{\pair}$ be
  the set of points in the subtree rooted at $\pair$.
The projection and refinement process for $\pair$ is slightly more
complicated, as it depends on the subtrees $\ntree_{\vpair}$
rooted at the siblings $\vpair$ of $\pair $ such that $\vpair \prec \pair$.  Recall that when we visit a pair
$\pair$ (and generate $\gadget_{\pair}$) in the tree $\ntree$ according to the  DFS
preorder, we have already visited all pairs $\vpair \prec \pair$.

The projection and refinement generate the partition points of pair $\pair$.
The purpose of the projection is to restrict all possible \emph{long range connections} to a relatively simple form.
See Section~\ref{sec:edge} for the details.  The purpose of the refinement is to make the sibling pairs have relatively the same length, 
hence, make it possible to repeat the projection process recursively. 
Formally speaking, the refinement maintains the following property over the construction. 
\eat{
In Section~\ref{sec:edge}, we will see that the projection restricts all possible \emph{long range connections} to a relatively simple form.
Meanwhile, the refinement maintains Property~\ref{prop:p2} over the construction.
}

\begin{property}
\label{prop:p2}
We call  the segment connecting the two points of the pair the \emph{segment of the pair} and call the length of that segment  the \emph{length of the pair}.  
Consider an internal-pair $\pair$. Suppose  $\vpair$ is a sibling of $\pair$. The length of pair $\vpair$ is at least half of the length of pair $\pair$.
\end{property}

\topic{-- Projection}
Consider a pair $(\beta, \alpha)$ with the set $\Phi$ being its child-pairs. We decide whether a pair in $\Phi$ is a leaf-pair or an internal-pair after introducing the process projection and refinement. 
We provide that the  property of the order here and prove it in the end of the section. 
%Here, we claim that among the pairs in $\Phi$, there is no leaf-pair between two internal-pairs based on our construction.

\begin{property}
\label{prop:leaf-internal}
Consider a pair $(\beta, \alpha)$ with the set $\Phi$ of its child-pairs.
For $\pair_1, \pair_2, \pair_3 \in \Phi$, if $\pair_1 $ $ \prec $ $\pair_2$ $ \prec $$ \pair_3$ and $\pair_1$ and $\pair_3$ are two internal-pairs, then $\pair_2$ is an
internal-pair.
\end{property}

\eat{
Consider a pair $(\beta, \alpha)$ with the set $\Phi$ of its child-pairs. We decide whether a pair in $\Phi$ is a leaf-pair or an internal-pair after introducing the process projection and refinement. Before that, we claim that among the pairs in $\Phi$, there is no leaf-pair between two internal-pairs based on our construction. It means that for $\pair_1, \pair_2, \pair_3 \in \Phi$, if
$\pair_1 $ $ \prec $ $\pair_2$ $ \prec $$ \pair_3$ and $\pair_1$ and $\pair_3$ are two internal-pairs, then $\pair_2$ is an
internal-pair. Besides, we define the first internal-pair in the direction $\ov{\beta\alpha}$ as the
\emph{first internal-pair} of $\Phi$.
}

Next, we  describe the projection operation for a pair $\pair \in \Phi$.  W.l.o.g., suppose $\pair$ is
an internal-pair since only internal-pairs have children.   We define the first internal-pair in direction $\ov{\beta\alpha}$ as the \emph{first internal-pair} of $\Phi$.
Depending on whether $\pair$ is the first internal-pair of $\Phi$, there are two cases.
% Given a pair $(\beta,\alpha)$ whose child-pair set is $\sset$,  consider a pair $\pair$ in $\sset$.

\begin{figure}[t]
  \centering
  \includegraphics[width = 0.45\textwidth]{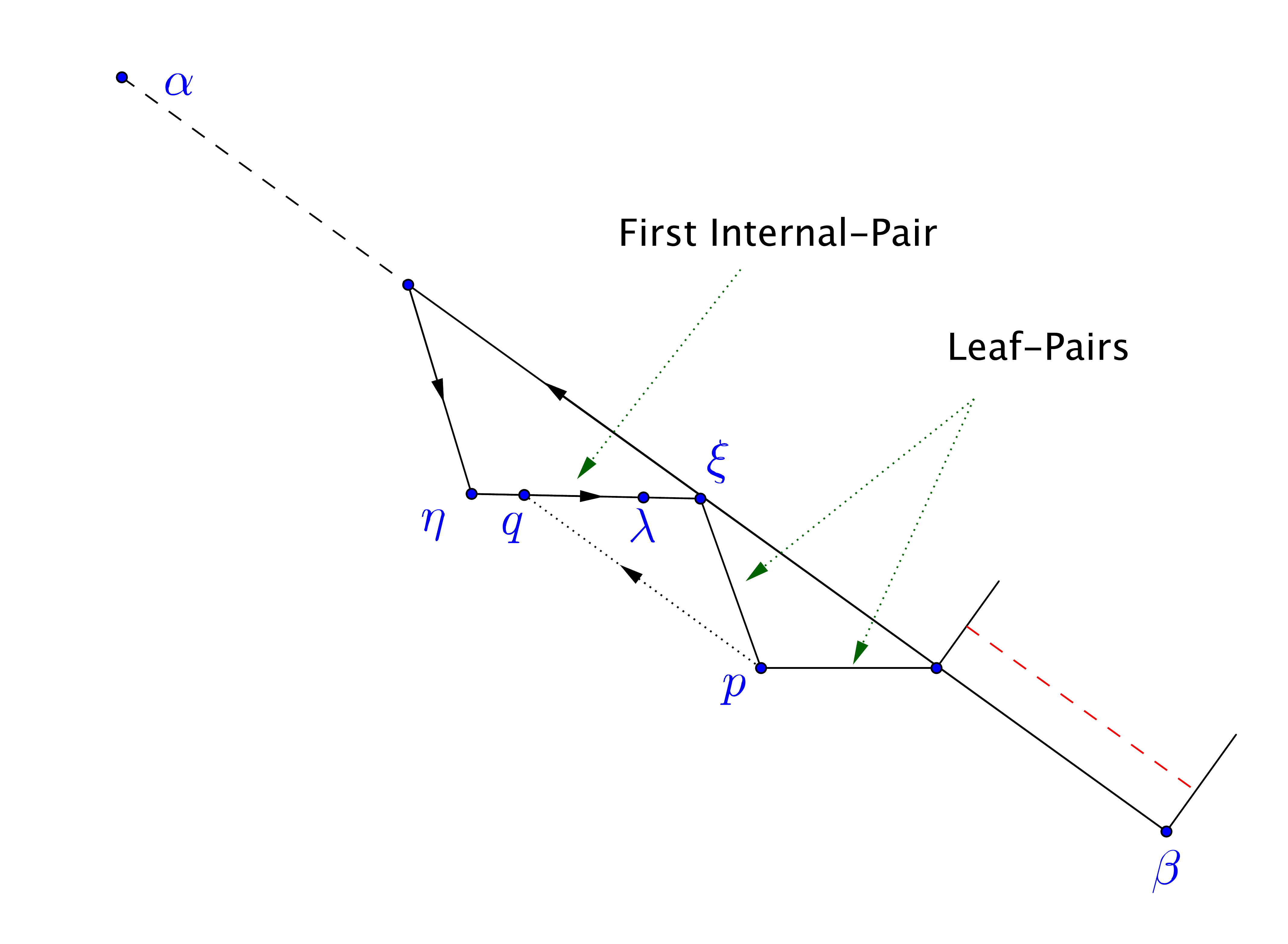}
  \caption{An example of the projection for the first internal-pair $\pair = (\eta, \xi)$. First, we add a point $\lambda$
  such that $|\lambda \xi| = \delta/d_{0}$ where $\delta$ is the length $|\eta \xi|$. Second,  for each leaf-pair $\vpair \prec \pair$, project
  its apex point $p$  to the segment of   $\pair$ along the direction $\protect\overrightarrow{\beta \alpha}$, i.e., add the point $q$ in the figure. }
  \label{fig:secondlast}
\end{figure}

\begin{figure}[t]
  \centering
  \includegraphics[width = 1\textwidth]{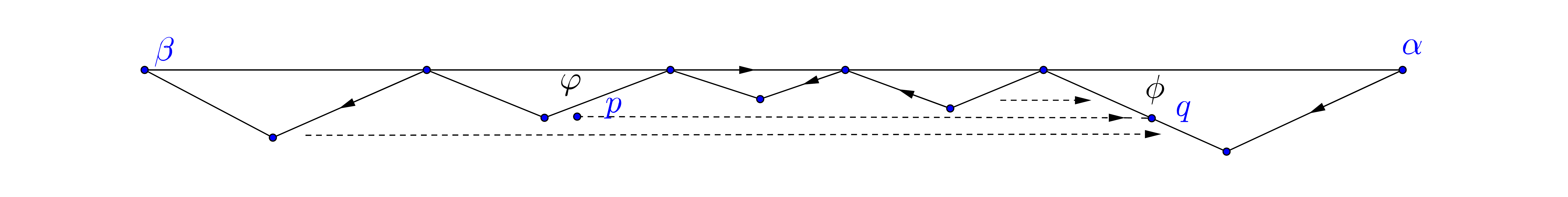}
  \caption{The projection for pair $\pair$. Here, $p$ is a point in subtree $\ntree_{\vpair}$.  $q$
    is a projection point of $p$, i.e., the point on segment of pair $\pair$ such that
    $pq$ is parallel to $\beta\alpha$.  The set $\mth[\bigcup_{\vpair \prec \pair,\vpair \in \Phi} \ntree_{\vpair}]$
    consists of all projection points of  $\bigcup_{\vpair \prec \pair,\vpair \in \Phi} \ntree_{\vpair}$  on
    segment of $\pair$. }
  \label{fig:matching}
\end{figure}

\begin{itemize}
\item Pair $\pair$ is the first internal-pair of $\Phi$: In Figure~\ref{fig:secondlast},  suppose pair $\pair = (\eta,
  \xi)$ is the first internal child-pair of $(\beta, \alpha)$ and
 the length of $\pair$ is $\delta$. Point $\xi$ is the partition point in $\pair$.
First, we add a point $\lambda$ on the segment of $\pair$ such  that $ |\xi\lambda| = \delta/ d_{0}$.
Second, for each leaf-pair $\vpair \prec \pair$, project the  apex point in $\vpair$ to the segment
of  $\pair$ along the direction $\overrightarrow{\beta\alpha}$, \footnote{If the projected point falls outside the segment of $\pair$, we do not need to add a normal point.} e.g., project $p$ to $q$ in Figure~\ref{fig:secondlast}.  Note that the length of leaf-pair
$\vpair$ is at   least $\delta/2$ according to Property~\ref{prop:p2}. Thus, there is no point
between $\lambda$ and $\xi$ as long as $d_0 > 2$. Formally, we denote the operation by
\begin{align}
\label{eqn:proj01}
 \tcalA_{\pair}\leftarrow\mth\left[ \bigcup\nolimits_{\vpair \prec \pair,\vpair \in \Phi} \ntree_{\vpair} \right] \cup \lambda.
\end{align}

\item Pair $\pair$ is not the first  internal-pair: According to the DFS preorder,  we have already
  constructed the subtrees   rooted at  $\vpair \prec \pair$. We project all points $p \in \bigcup_{\vpair
    \prec \pair, \vpair \in \Phi} \ntree_{\vpair}, $ to the segment of $\pair$ along the direction $\overrightarrow{\beta\alpha}$. Let the
  partition set $\tcalA_{\pair}$ of $\pair$ be the set of the projected points falling inside the segment of $\pair$. If several points
  overlap, we keep only one of them. See Figure~\ref{fig:matching} for an example. Formally, we denote the operation by
  \begin{align}
 \label{eqn:proj02}
 \tcalA_{\pair}\leftarrow\mth\left[ \bigcup\nolimits_{\vpair \prec \pair,\vpair \in \Phi} \ntree_{\vpair} \right].
\end{align}
  
\end{itemize}

\topic{-- Refinement}
After the projection, we obtain a candidate partition set $\tcalA_{\pair}$ of $\pair$ (defined in~\eqref{eqn:proj01} and ~\eqref{eqn:proj02}).
However, note that the length between the pieces may differ a lot. In order to maintain Property~\ref{prop:p2},
we add some other points to ensure that all non-empty pieces of $\pair$ have approximately the same length.
We call this process  the \emph{refinement} operation.

 W.l.o.g., suppose pair $\pair$ has unit length and $|\tcalA_{\pair}| = n $ and $n > d_0$. 
 \footnote{
 If $n \leq d_0$, we repeatedly split the inner pieces
(i.e., all pieces except for the two pieces incident on the points of $\pair$ ) into two equal-length pieces until the number of the points in $\tcalA_{\pair}$ is larger than $d_0$. }

\eat{
See Figure~\ref{fig:thirdlast} for an example. The point $q_1, q_2$
and $q_3$ are projected points. But $|q_1q_2| \neq |q_2q_3|$.
}
\eat{
\begin{figure}[t]
  \centering
  \includegraphics[width = 0.5\textwidth]{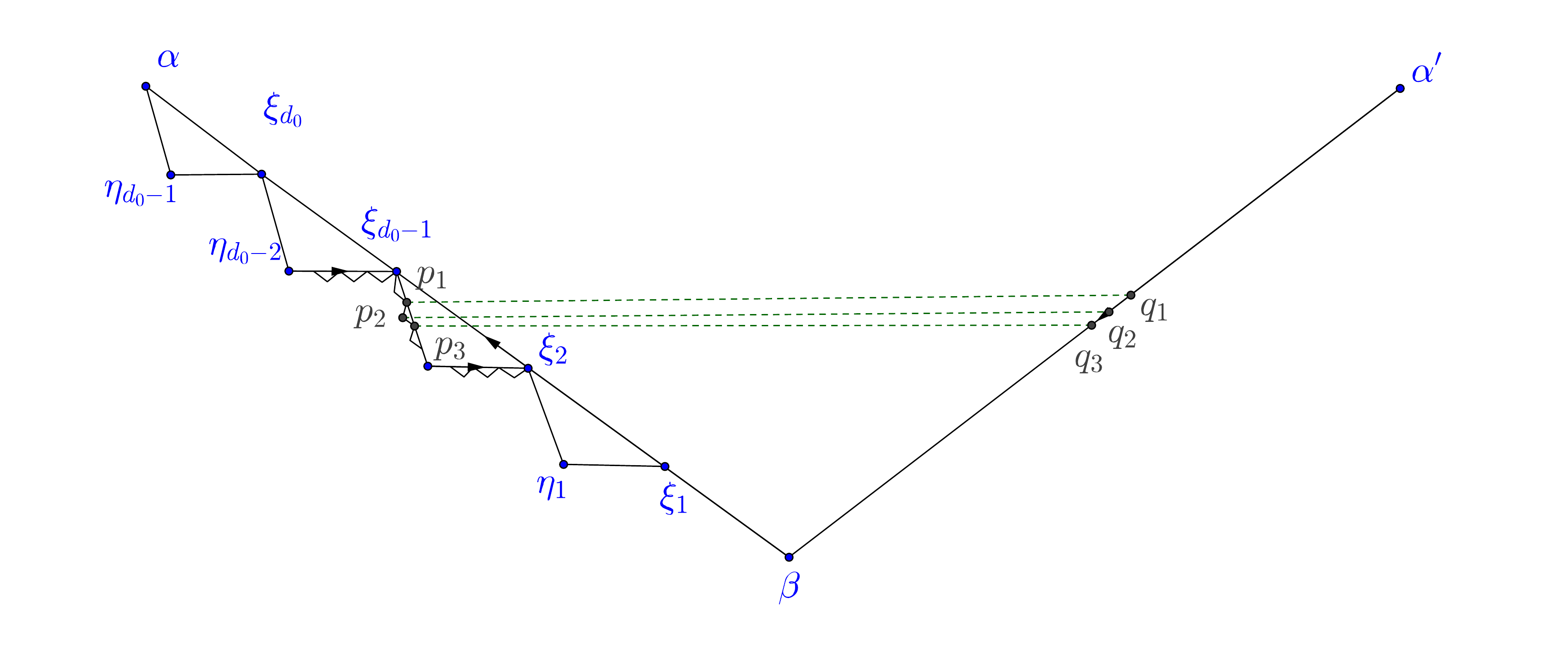}
  \caption{The lengths of the  pieces after projection may be different. Here, the projection points  of $\{p_1, p_2, p_3\}$ are $\{ q_1, q_2, q_3\}$. But $| q_1 q_2| \neq | q_2 q_3|$.   }
  \label{fig:thirdlast}
\end{figure}
}

Suppose $\pair = (u_1, u_2)$. We distinguish into two cases based on whether the first point $u_1$ is
a partition point or an apex point. 

\begin{itemize}
\item  If $u_1$ is an apex point, we mark the piece incident on $u_2$. See Figure~\ref{fig:refinement:01} for an illustration, in which 
 $(u_1, u_2) = (\beta, \alpha_1)$ and piece $\alpha_1\eta_1$ is the marked piece.

\item  If $u_1$ is a partition point, we mark the pieces incident on $u_1$ and $u_2$. See Figure~\ref{fig:refinement:02} for an illustration, in which 
$(u_1, u_2)  = (\alpha_2, \beta)$ and piece $\alpha_2\eta_2$  and $\xi_2\beta$ are the marked pieces.
\end{itemize}

We do not add any point in the marked pieces under refinement.  Consider two sibling pairs $ (\beta, \alpha_1)$ and $ (\alpha_2, \beta)$ where $\beta$ is an apex point. Suppose 
$\alpha_1\eta_1, \alpha_2\eta_2, \xi_2\beta$ are the marked pieces of the two pairs and $\xi_1$ is the point on the segment $\beta\alpha_1$ which is projected to $\xi_2$.\footnote{Point $\xi_1$ must exist since $\xi_2$ is a projected point and there is no point 
in the marked piece $\xi_2 \beta$.} 
Then  $|\beta\xi_1| =  |\beta\xi_2|$ and $|\alpha_1 \eta_1| = |\alpha_2 \eta_2|$ after the refinement. See Figure~\ref{fig:refinement} for an example.

\eat{
\begin{itemize}
\item  If $u_1$ is an apex point, we mark the piece incident on $u_2$ (e.g., piece $\alpha_1\eta_1$ if $(u_1, u_2)$ is $(\beta, \alpha_1)$ in
  Figure~\ref{fig:refinement}).

\item  If $u_1$ is a partition point, we mark the pieces incident on $u_1$ and $u_2$ (e.g., piece
  $\alpha_2\eta_2$  and $\xi_2\beta$ if $(u_1, u_2) $ is $(\alpha_2, \beta)$) in Figure~\ref{fig:refinement}).
\end{itemize}
 }

Denote the length of the $i$th piece (defined by $\tcalA_{\pair}$)
by $\delta_i$.  Let $\delta_{o} =  1/n^2$.  Except for the marked pieces\footnote{Keeping  the  marked pieces  unchanged maintains Property~\ref{prop:p1} and helps a lot to decompose the normal points into hinge sets. See Section~\ref{sec:edge} for details.}, for
  each other piece  which is at least twice longer than $\delta_{o}$, we place
  $\floor{\delta_i/\delta_{o}}-1$ equidistant points on the piece, which divide the piece into
  $\floor{\delta_i/\delta_{o}}$ equal-length  parts.

We call this process the \emph{refinement} and denote
the resulting point set by
\begin{align}
\label{eqn:refine}
 \calA_{\pair}\leftarrow\ffill[\tcalA_{\pair}].
\end{align}

The number of points added in the refinement process is at most
$O(n^2)$ since  the segment of pair $\pair$ has unit length and $\delta_{o} \geq 1/n^2$.
We call each piece whose length is  less than $\delta_{o}$ a \emph{short piece}. The short
pieces remain unchanged before and after the refinement. Moreover, the refinement does not introduce
any new short piece for the pair.

\begin{figure}[t]
  \centering
      \centering
    \captionsetup[subfigure]{justification=centering}
    \begin{subfigure}{0.6\textwidth}
      \centering
        \includegraphics[width = 1\textwidth]{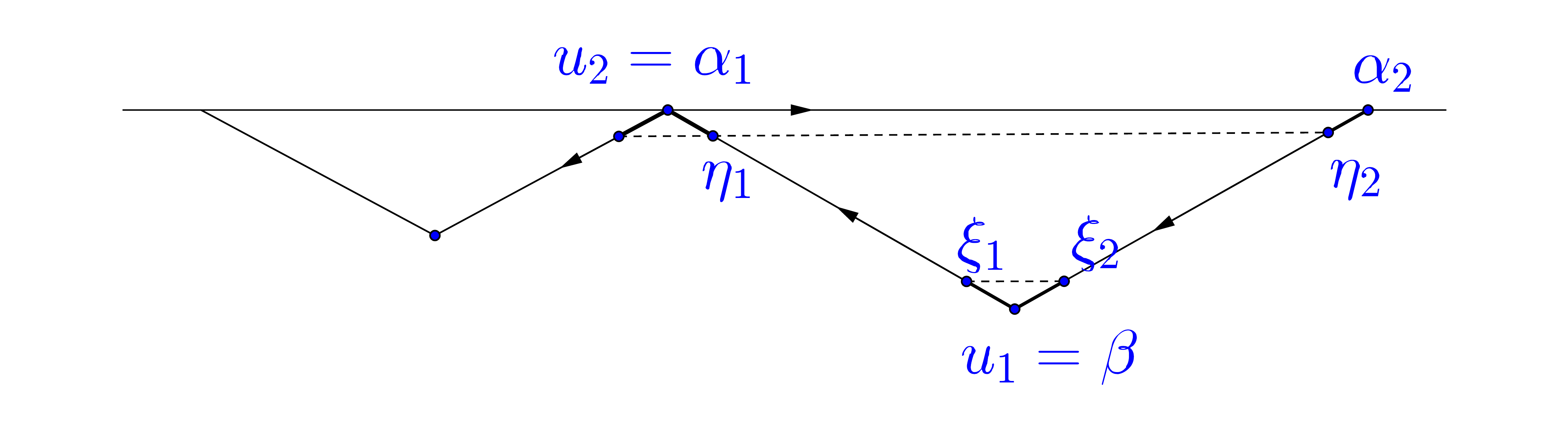}
      \caption{ $(u_1, u_2) = (\beta, \alpha_1)$}
      \label{fig:refinement:01}
    \end{subfigure}
    
    \begin{subfigure}{0.6\textwidth}
      \centering
        \includegraphics[width = 1\textwidth]{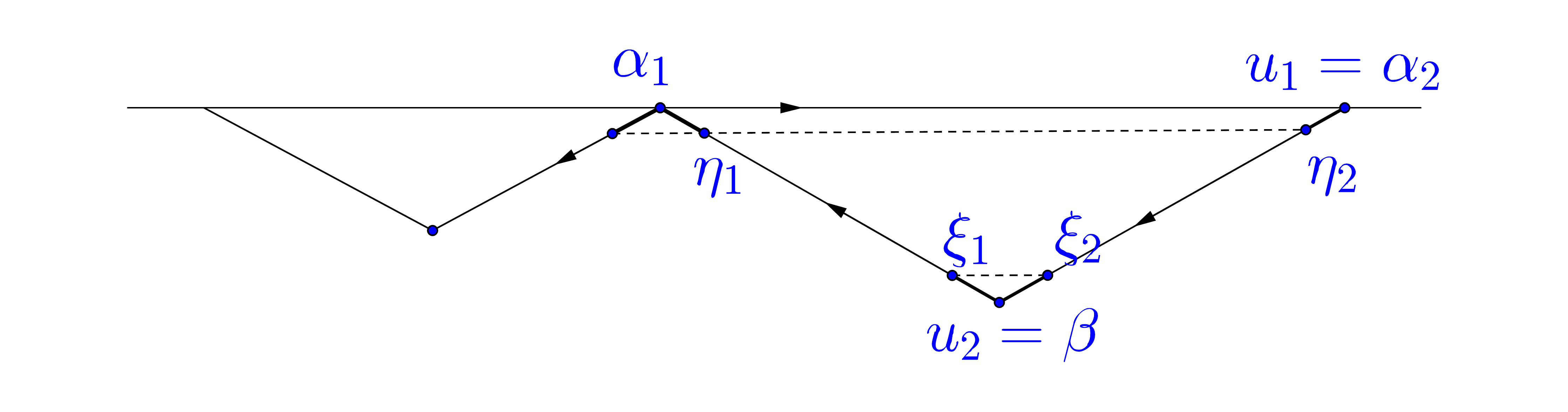}
      \caption{ $(u_1, u_2) = (\alpha_2, \beta)$}
      \label{fig:refinement:02}
    \end{subfigure}
  \caption{ The two cases for refinement. The first case is $\pair = (\beta, \alpha_1)$ in which the first point is an apex point. We mark the piece incident on $\alpha_1$, i.e., the piece $\alpha_1 \eta_1$. The second case is $\pair = (\alpha_2, \beta)$, in which the first point is a partition point. We mark the two pieces incident on $\alpha_2$ and $\beta$ respectively, i.e., the pieces $\alpha_2 \eta_2$ and $\xi_2 \beta$.  Note that after refinement, $|\beta\xi_1| =  |\beta\xi_2|$ and $|\alpha_1 \eta_1| = |\alpha_2 \eta_2|$ since there is no point added on the marked pieces after refinement. }
  \label{fig:refinement}
\end{figure}

\topic{Deciding Emptiness, Leaf-Pairs and Internal-Pairs}
Next, we discuss the principle to decide whether  a piece is empty or non-empty. 
See Figure~\ref{fig:shallow} for an illustration.
 Consider a pair $\pair $ whose  apex point is $\beta $ and partition point is $\alpha$.\footnote{Note that the
  first point of a pair can be either apex point or partition point. Here, $\pair = (\alpha, \beta)$
  or $\pair = (\beta, \alpha)$ depending on whether  first point of $\pair$ is apex point or not.
} We let  the piece incident on the apex point $\beta$ and
the short pieces be \ttype\ and the other pieces be non-\ttype.

For each non-\ttype\ piece, we generate one apex point. The apex set
$\calB_{\pair}$ induces the set $\Phi$ of child-pairs of $\pair$. The types of these pairs are determined as follows.
 Let the three pairs closest to $\alpha$ and two pairs closest to $\beta$ be \emph{leaf-pairs}. We do not further expand the tree from the leaf-pairs. Let the 
other pairs be the \emph{internal-pairs}. 
Naturally, there is no leaf-pair between
any two internal-pairs among pairs in $\Phi$. Hence, Property~\ref{prop:leaf-internal} maintains in the construction. Further, we can see that each other normal point belongs to at most two pairs, except for the two points in the root pair.

For convenience, we call the piece generating two
leaf-pairs \emph{a near-empty piece} and generating one leaf-pair and one internal-pair
\emph{a half-empty piece}.   Note that the near-\ttype\ and half-\ttype\ pieces are special non-\ttype\ pieces.  We can see that, except for the \stype\ piece
incident on the partition point in $\pair$ (e.g., $\mu_1\xi_{d_0}$ in Figure~\ref{fig:shallow}), the
maximum length among the non-\ttype\ pieces is at most twice longer than the minimum one according
to refinement.

\begin{figure}[t]
  \centering
  \includegraphics[width = 0.7\textwidth]{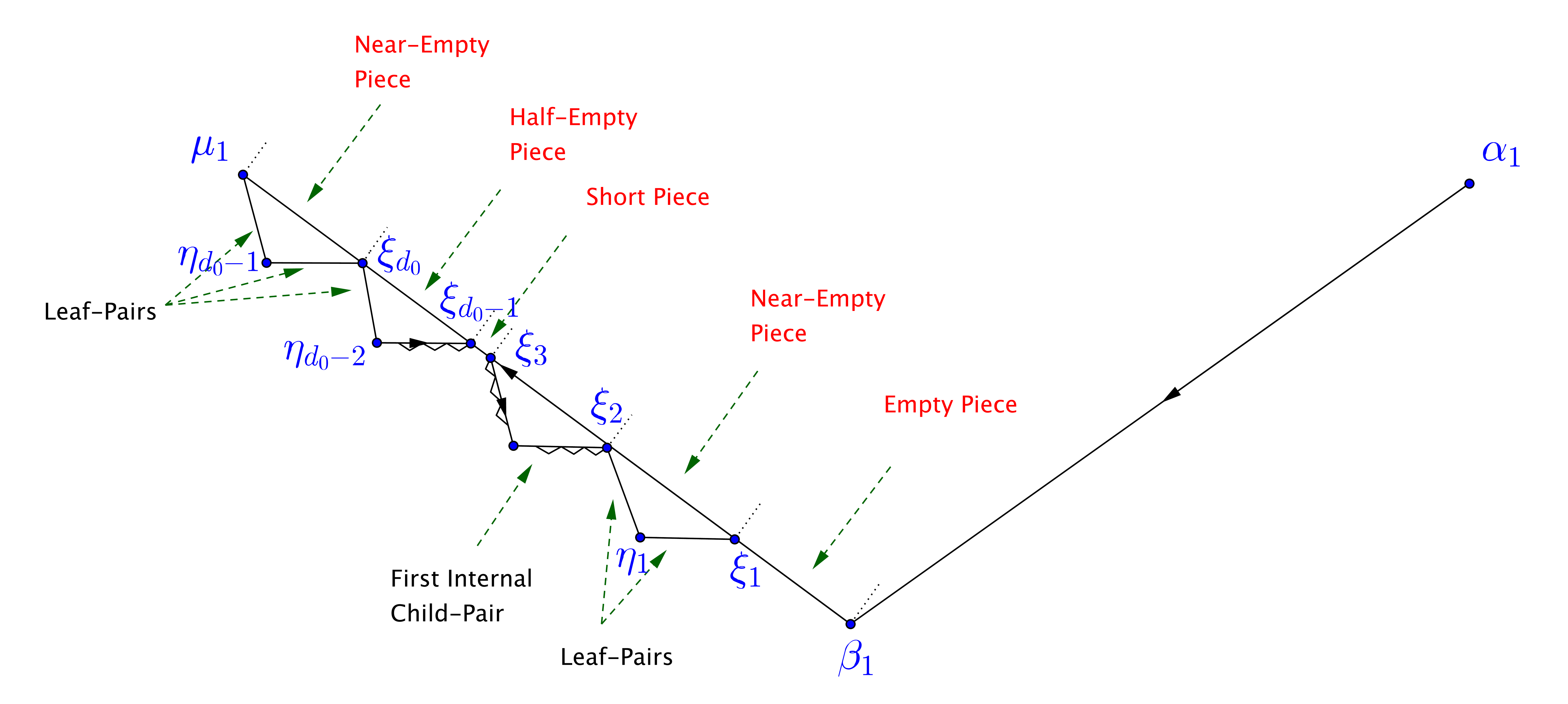}
  \caption{ The figure illustrates the emptiness of each piece.  Consider the pair $(\beta_1, \mu_1)$
    with partition points after refinement.  Segment  $\mu_1 \xi_{d_0}$ and $\xi_1\xi_2$ are two
    \stype\  pieces and $\beta_{1}\xi_1$ is an \ttype\  piece. Pair $( \mu_1, \eta_{d_0-1})$ and $(\eta_{d_0-1},  \xi_{d_0})$, $(\xi_{d_0}, \eta_{d_0-2})$, $(\xi_2, \eta_1),$ $(\eta_1, \xi_1)$  are the five
    leaf-pairs. $\xi_{d_0-1}\xi_{d_0}$  is the half-empty piece. Pair $( \eta_2, \xi_2)$ is the first internal-pair. }
  \label{fig:shallow}
\end{figure}

Overall, after the projection and refinement process, we can generate the gadget for any pair in the tree. We denote this process by
\begin{align}
\label{eqn:gene}
\gadget_{\pair}\leftarrow \projref(\pair).
\end{align}

\begin{figure}[t]
    \centering
    \includegraphics[width = 0.8\textwidth]{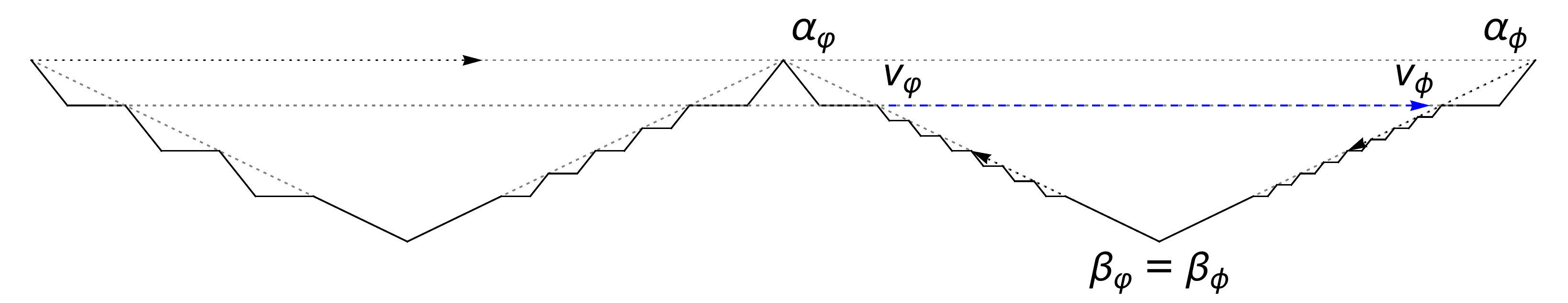}
    \caption{The figure illustrates Property~\ref{prop:p1}. After projection and refinement, $|\alpha_{\pair}v_{\pair}| = |\alpha_{\vpair}v_{\vpair}|$.}
    \label{fig:projequal}
\end{figure}

\begin{property}
\label{prop:p1}
Consider two sibling pairs $\pair$ and $\vpair$. Suppose both of them are internal-pairs and have
 partition point sets $\calA_{\pair}$ and $\calA_{\vpair}$ respectively.  Suppose $\alpha_{\pair} \in \pair$ and $\alpha_{\vpair} \in \vpair$, and both $\alpha_{\pair}$ and $\alpha_{\vpair}$ are partition points.  The  point in   $\calA_{\pair}$ closest to $\alpha_{\pair}$ is
 $v_{\pair}$. Meanwhile,  the point in $\calA_{\vpair}$ closest to $\alpha_{\vpair}$ is
  $v_{\vpair}$. Then $|\alpha_{\pair}v_{\pair}| = |\alpha_{\vpair}v_{\vpair}|$. See Figure~\ref{fig:projequal} for an example.
\end{property}
\begin{proof}
  W.l.o.g., we prove that any two adjacent siblings satisfy the property. Suppose  $\pair$ and $\vpair$ are adjacent siblings. W.l.o.g., assume $\vpair \prec \pair$. $\pair$ has the  candidate partition set  $\tcalA_{\pair}$ after projection. Suppose the
   point in $\tcalA_{\pair}$ closest to $\alpha_{\pair}$  is $\hat{v}_{\pair}$.    According to the
  projection, we know $|\alpha_{\pair}\hat{v}_{\pair}| = |\alpha_{\vpair}v_{\vpair}|$. Since we do not add
  any new point between $\alpha_{\pair}$ $\hat{v}_{\pair}$ after
  refinement,  $\hat{v}_{\pair}$ and $v_{\pair}$ are the same point. Hence, any two adjacent siblings have
  the property.
\end{proof}

\begin{cor}
\label{cor:p1}
  Consider a pair $\pair$  with  partition point set $\calA_{\pair}$. Suppose $\alpha \in \pair$ and $\alpha$ is a partition point. Among all pieces determined by $\calA_{\pair}$, the piece incident on
  $\alpha$ has the maximum length.
\end{cor}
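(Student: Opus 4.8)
The plan is to compute the exact length of the piece incident on $\alpha$ and then show that every other piece of $\pair$ is strictly shorter. First I would use Property~\ref{prop:p1} to reduce the length of the $\alpha$-incident piece to a single universal value. Since $\pair$ is a (non-root) internal-pair, its partition point $\alpha$ is one endpoint of its segment, so the piece incident on $\alpha$ runs from $\alpha$ to the closest point $v_{\pair}\in\calA_{\pair}$; that is exactly the quantity $|\alpha_{\pair}v_{\pair}|$ appearing in Property~\ref{prop:p1}. Among the internal siblings of $\pair$ there is a first internal-pair $\pair_0$, of some length $\delta$, and the projection step for the first internal-pair explicitly adds the point $\lambda$ with $|\xi\lambda|=\delta/d_0$, which remains the partition point closest to $\xi$ (nothing is placed between $\lambda$ and $\xi$ once $d_0>2$). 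Hence the $\alpha$-incident piece of $\pair_0$ has length exactly $\delta/d_0$, and by Property~\ref{prop:p1} the $\alpha$-incident piece of $\pair$ has the same absolute length $\delta/d_0$.

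Next I would bound every non-marked piece. By the refinement, each piece of length at least $2\delta_o$ is subdivided into parts all shorter than $2\delta_o$, while pieces already shorter than $2\delta_o$ (including the short pieces of length below $\delta_o$) are left untouched; so after refinement every non-marked piece has length less than $2\delta_o$. Converting $\delta_o=1/n^2$ back to absolute scale gives $\delta_o=\ell_{\pair}/n^2$ where $\ell_{\pair}$ is the length of $\pair$, and Property~\ref{prop:p2} gives $\ell_{\pair}\le 2\delta$. Since $n>d_0$ we have $n^2>d_0^2\ge 4d_0$, so $2\delta_o\le 4\delta/n^2<\delta/d_0$. Thus every non-marked piece is strictly shorter than the $\alpha$-incident piece.

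The hard part will be the one piece that escapes the refinement bound, namely the marked piece incident on the apex $\beta$, which stays unsubdivided only in the case where $\pair=(\alpha,\beta)$ has its partition point first. To control it I would exploit the sibling $\pair'=(\beta,\alpha_1)$ sharing the apex $\beta$: in $\pair'$ the apex is the first point, so its apex-incident piece $\beta\xi_1$ is not marked and is therefore refined down to length below $2\delta_o$. The construction already records the consistency $|\beta\xi_2|=|\beta\xi_1|$ after refinement, so the marked apex-incident piece $\xi_2\beta$ of $\pair$ also has length below $2\delta_o<\delta/d_0$. The short and empty pieces are trivially below $\delta_o$, so they pose no problem.

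Collecting the bounds finishes the argument: the $\alpha$-incident piece has length $\delta/d_0$, while every other piece — non-marked (below $2\delta_o$), short or empty (below $\delta_o$), and the apex-incident marked piece (below $2\delta_o$) — is strictly smaller, which is exactly the claim that the piece incident on $\alpha$ attains the maximum length. The only quantitative input is the inequality $n^2>4d_0$, which holds because $d_0$ is a large fixed constant and $n>d_0$; the root pair needs no argument since its partition is equidistant and the incident piece is tied for the maximum.
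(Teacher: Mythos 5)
Your proof is correct and follows essentially the same route as the paper's: pin the $\alpha$-incident piece to the exact value $\delta/d_0$ coming from the point $\lambda$ added for the first internal-pair, propagate that value to all internal siblings via Property~\ref{prop:p1}, and use the refinement to force every other piece strictly below it. You are in fact more explicit than the paper's own short induction, in particular in bounding the marked apex-incident piece through the apex-sharing sibling --- an argument the paper only supplies later, in the proof of Property~\ref{prop:piece2}.
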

\begin{proof}
It is not difficult to check the root pair holds the property. Then, consider a pair $\hat{\pair}$ with its child pair set $\hat{\Phi}$. We prove that  any pair in $\hat{\Phi}$
holds the property when $\hat{\pair}$ holds the property.   Suppose $\pair $ is the first internal-pair in $\hat{\Phi}$, the corollary is trivially true for $\pair$
  according to the projection process (the first projection case). Otherwise, according to Property~\ref{prop:p1} and the projection process, we know that for any $\pair$ in $\hat{\Phi}$, the piece incident on the partition point in $\pair$ has
  the same length. Thus, any pair in $\hat{\Phi}$ holds the property.
\end{proof}

Now, we prove Property~\ref{prop:p2} that we claimed at the beginning of the construction.

\begin{proofofprop}{\ref{prop:p2}}
  Consider an arbitrary pair with partition point $\alpha$ and the set $\Phi$ of its child-pairs. Consider an internal-pair $\pair \in \Phi$.
  Note that the length of a child-pair  is determined by its corresponding partition piece.
  According to the construction, except for the near-empty piece incident on $\alpha$, the length of any non-empty piece
  is at most twice and  at least half the length of another one.  Thus, except for the sibling pairs generated by the piece incident on $\alpha$, the length of any pair $\vpair \prec \pair$ is at least half of the length of $\pair$.
  Finally, the piece incident on $\alpha$ only induces two leaf-pairs and has the              maximum length among other \ttype\ pieces of
  $\pair$ according to Corollary~\ref{cor:p1}. Hence, we have proven the  property. \hfill  $\blacktriangleleft$ %$\square$
\end{proofofprop}

Finally, we summarize the properties of  half-empty, near-empty,  and empty pieces below.

\begin{property}
  \label{prop:piece2}
  Consider an internal-pair $\pair$ with partition point set $\calA_{\pair}$.
 Suppose the length of $\pair$ is $\delta$. The pieces determined by $\calA_{\pair}$ have the
 following properties.
  \begin{itemize}
  \item The sum of  lengths of empty pieces is less than $2\delta/d_{0}$.
  \item  There are two near-empty pieces with sum of lengths  less than $3\delta/d_{0}$.
  \item  There is one half-empty piece  with length less than $\delta/d_0$.
  \item The sum of lengths of empty, near-empty and half-empty pieces  is less than $6\delta/d_0$.
  \end{itemize}

\end{property}

\begin{proof}

  Consider the first property.  Suppose $\beta \in \pair$ and $\beta$ is an apex point.
  There are two kinds of empty pieces.  One is the  short pieces   and the other  is  a
  piece incident on  $\beta$ (denoted by $\xi\beta$). First, the sum of lengths of the short pieces is less than $\delta/d_0$.
  Because the length of each short piece is less than $\delta /n^2$ and there are less than $n$ short
  pieces where $n > d_0$ is the number of partition points in $\tcalA$ after projection and before refinement.  On the other hand,   we prove that  the length of $\xi\beta$  is less   than $\delta/d_{0}$. If  $\beta$ is the first point in
  $\pair$ (refer to $\pair = (\beta, \alpha_1)$ in Figure~\ref{fig:refinement}), according to refinement (the first refinement case), the length of $\xi\beta$  is less
  than $\delta/d_{0}$. Next consider the case that $\beta$ is the second point in $\pair$ $\pair = (\alpha_1, \beta)$ in Figure~\ref{fig:refinement}). Suppose
  $\pair$ shares the point $\beta$ with its sibling $\vpair$. Hence, $\pair$ and $\vpair$ share the point
  $\beta$ and $\beta$ is the first point in $\vpair$.  Denote the  piece of $\vpair$ incident on $\beta$ by   $\eta\beta$. In this case, we
  have that $\pair$ and $\vpair$ have the same length and $|\xi\beta| =   |\eta\beta|$. Since $\beta$ is
  the first point in $\vpair$, we have proven that $|\eta\beta| \leq \delta/d_{0}  $. Thus, $|\xi\beta|  \leq \delta/d_{0} $.

  Consider the second property. Suppose $\alpha$ is a partition point and $\beta$ is an apex point, and $\alpha, \beta \in \pair $.
	First, consider  the \stype\ piece incident on $\alpha$.   If $\pair$ is the first internal-pair of its parent, according to projection, we add a point $\lambda$ on
  the segment of $\pair$ such that $ |\lambda \alpha| =  \delta/d_0$.   Otherwise, according to
  Property~\ref{prop:p2} and~\ref{prop:p1}, we know the piece incident on $\alpha$ is at most
  $2\delta/d_0$. Second, we consider the other \stype\ piece closer to $\beta$. Its length is no
  more than  $\delta/d_0$ based on refinement. Thus, the sum of lengths of \stype\ pieces is less
  than $3\delta/d_{0}$.

  For the third property,  through refinement, the length of  half-empty piece is  less than
  $\delta/d_{0}$.   Above all, we get the fourth property.
\end{proof}

%%% Local Variables:
%%% mode: latex
%%% TeX-master: "odd"
%%% End:

\newcommand{\horder}{\prec_h}

\section{Hinge Set Decomposition of  the Normal Points}
\label{sec:edge}
All points introduced so far are referred to as normal points and  their positions have been defined 
 exactly.  Recall that we denote the set of normal points by $\normp$.  In this section,  decompose $\normp$ into a collection of sets of points
 such that each normal point exactly belongs to one set.  We call these sets \emph{hinge sets}.
  See Figure~\ref{fig:longshort} for an overview of the hinge set decomposition.

 Based on the hinge sets, the edges among normal
points in $\YY{2k+1}(\normp)$ can be organized in the clear way. For convenience, we regard the Yao-Yao graph as a directed graph. Recall the construction of
the directed Yao-Yao graph in Algorithm~\ref{alg:yy}. Note that $C_u(\gamma_1, \gamma_2]$
represents the cone with apex $u$ and consisting of the rays with polar angles in the half-open
interval $(\gamma_1, \gamma_2]$ in counterclockwise. We call the first iteration (line 2 to 5) the \emph{Yao-step} and
call the second iteration (line 6 to 9) the \emph{Reverse-Yao step}.

Then, we define a \emph{total order} among hinge sets. We call an edge in the Yao-Yao graph \emph{a hinge connection} which connects any two points in the same  hinge set or in two adjacent hinge sets
w.r.t. the total order.   Call other edges   \emph{long range connections}. In Section~\ref{sec:aux}, we prove that we can break all long range connections without introducing
new ones by adding some auxiliary points. In Section~\ref{sec:length},  we show that, in the graph
with only hinge connections, the shortest path between the two points of the root pair approaches infinity.

\begin{algorithm}[t]
  \caption{Construct the Yao-Yao graph}
  \label{alg:yy}
  \KwData{A point set $\pset$ and an integer $k \geq 2$}
  \KwResult{$\YY{2k+1}(\pset)$  }
  Initialize: $\theta = 2\pi / (2k+1) $ and two empty graphs $\Yao_{2k+1}$ and $\YY{2k+1}$ \;
  \ForEach{point $u$ in $\pset$}
  {
    \ForEach{$j$ in $[0,2k]$}{
      Select $v $ in $ C_{u}(j\theta, (j+1)\theta]$ such that $|uv|$ is the shortest \;
      Add edge $\ov{uv}$ into $\Yao_{2k+1}$ \;
    }
  }
  \ForEach{point $u$ in $\pset$}
  {
    \ForEach{$j$ in $[0,2k]$}{
      Select $v$ in $ C_{u}(j\theta, (j+1)\theta]$, $\ov{vu}\in \Yao_{2k+1}$ such that $|uv|$ is the shortest \;
      Add edge $\ov{vu}$ into $\YY{2k+1}$ \;
    }
  }
  \KwRet{$\YY{2k+1}$} \;

\end{algorithm}

\begin{figure}[t]
    \centering
    \includegraphics[width = 0.95\textwidth]{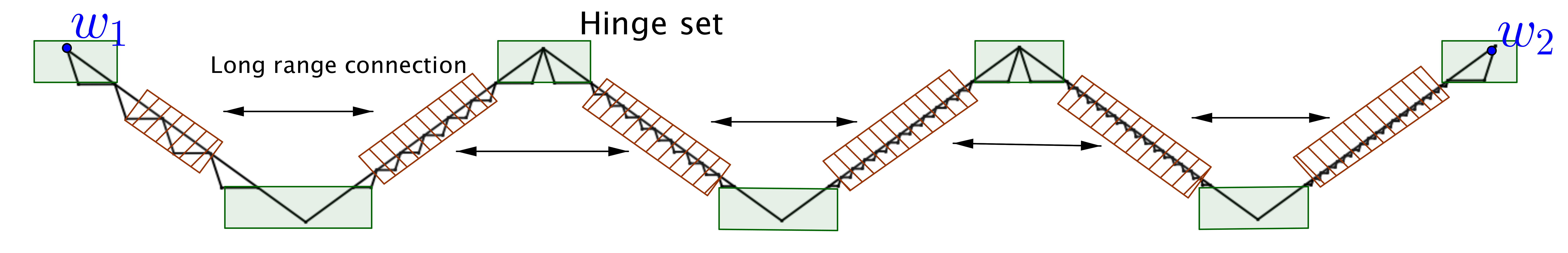}
    \caption{The overview of hinge set decomposition. Roughly speaking, each set of points covered by a green rectangle \drawcover{ color = {rgb,255:red,0; green, 100; blue,0}, fill ={rgb,255:red,0; green,100; blue,0}, fill opacity = 0.1} is a hinge set. Recursively, we can further decompose the points covered by shadowed rectangle \drawcover{ color = {rgb,255:red,153; green,51; blue,0}, pattern=north east lines, pattern color={rgb,255:red,153; green,51; blue,0}} into hinge sets.   The hinge connections are the edges between any two points in a hinge set  or between two adjacent hinge sets.  The other edges in the Yao-Yao graph are long range connections.     }.
    \label{fig:longshort}
\end{figure}

\begin{figure}[t]
\captionsetup[subfigure]{justification=centering}
    \centering
    \begin{subfigure}{0.5\textwidth}
    \includegraphics[width = 1\textwidth]{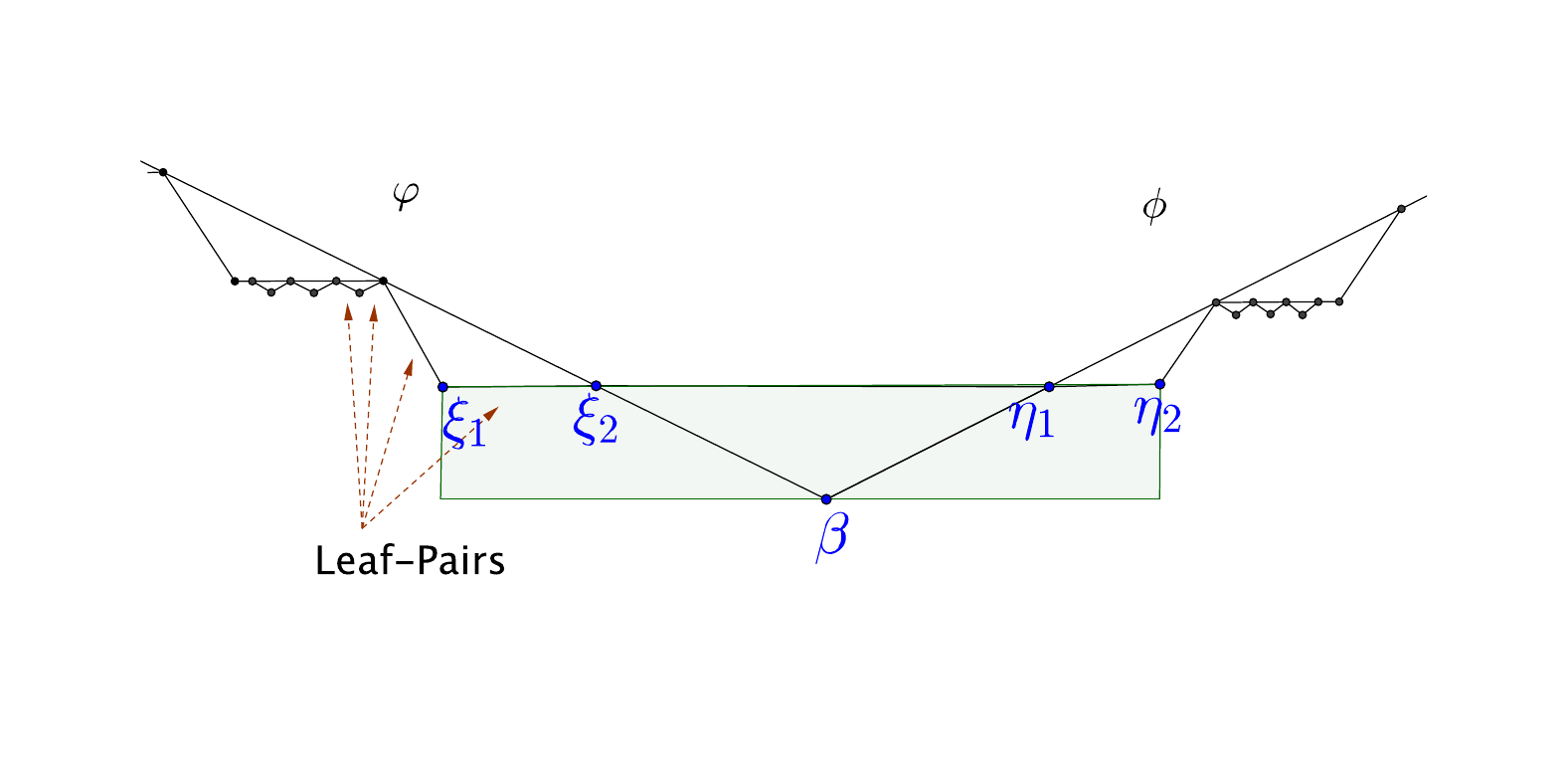}
    \caption{The hinge set centered on  an  apex point}
      \label{fig:hinge:01}
      \end{subfigure}%
    \begin{subfigure}{0.5\textwidth}
      \includegraphics[width = 1\textwidth]{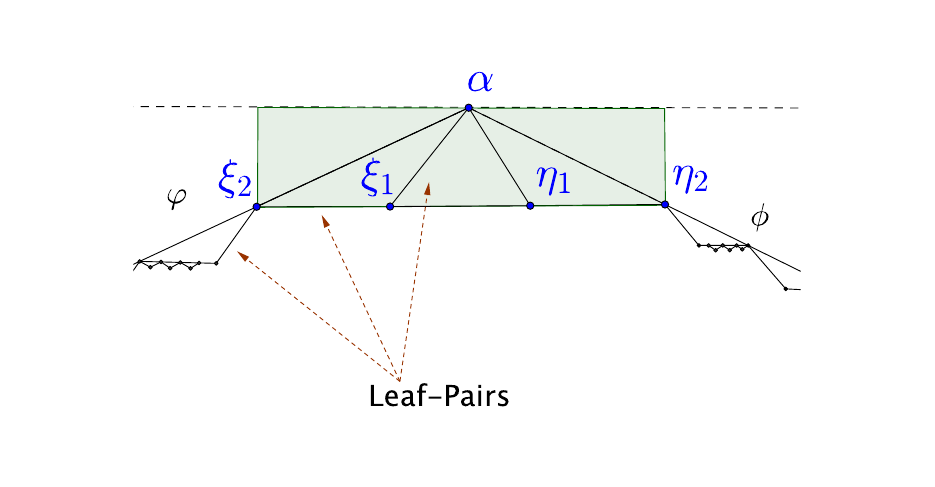}
        \caption{The hinge set centered on  a partition point}
      \label{fig:hinge:02}
    \end{subfigure}

    \begin{subfigure}{0.5\textwidth}
    \includegraphics[width = 1\textwidth]{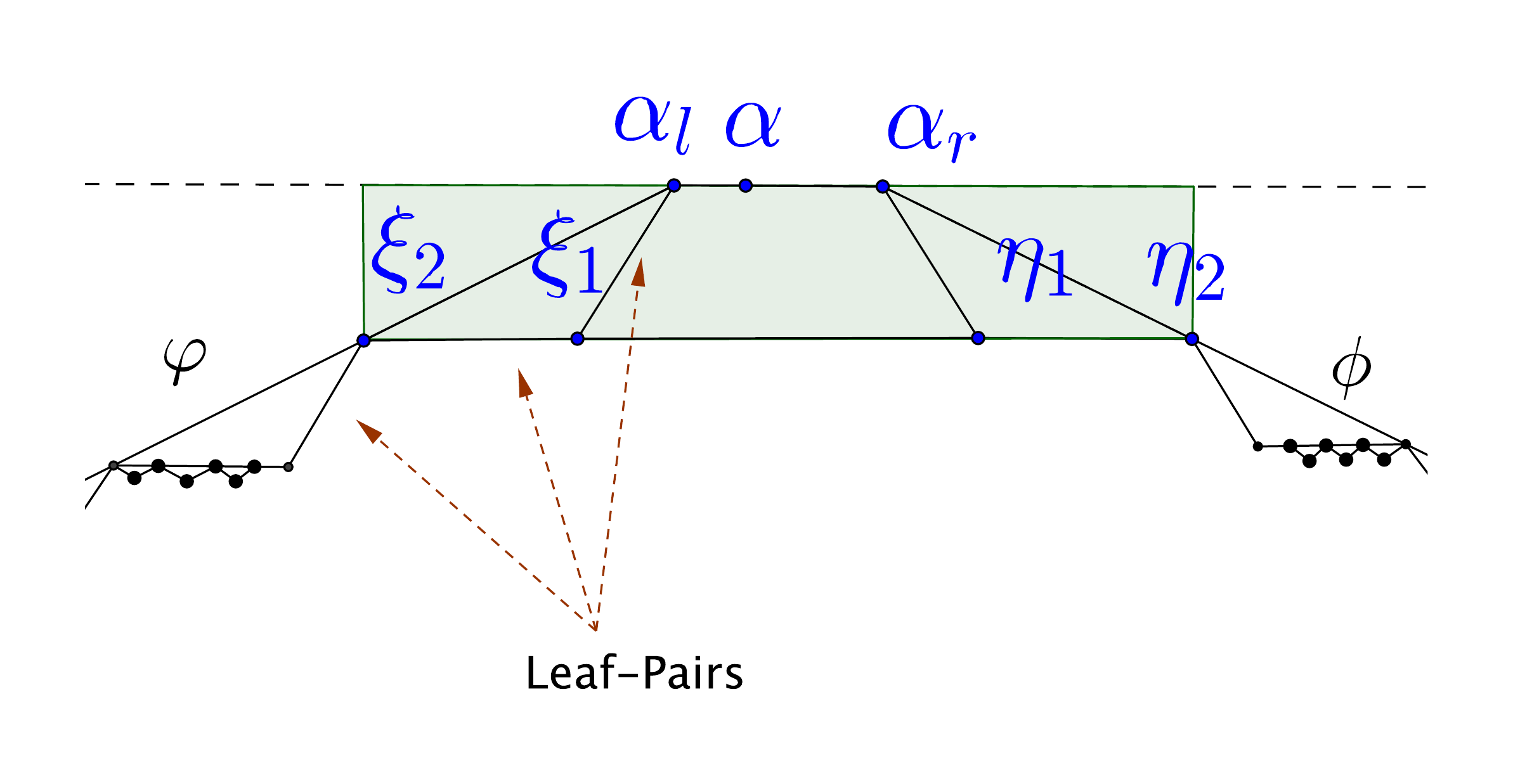}
    \caption{Merge two hinge sets to a new one}
      \label{fig:hinge:03}
    \end{subfigure}%
    \begin{subfigure}{0.5\textwidth}
      \includegraphics[width = 1\textwidth]{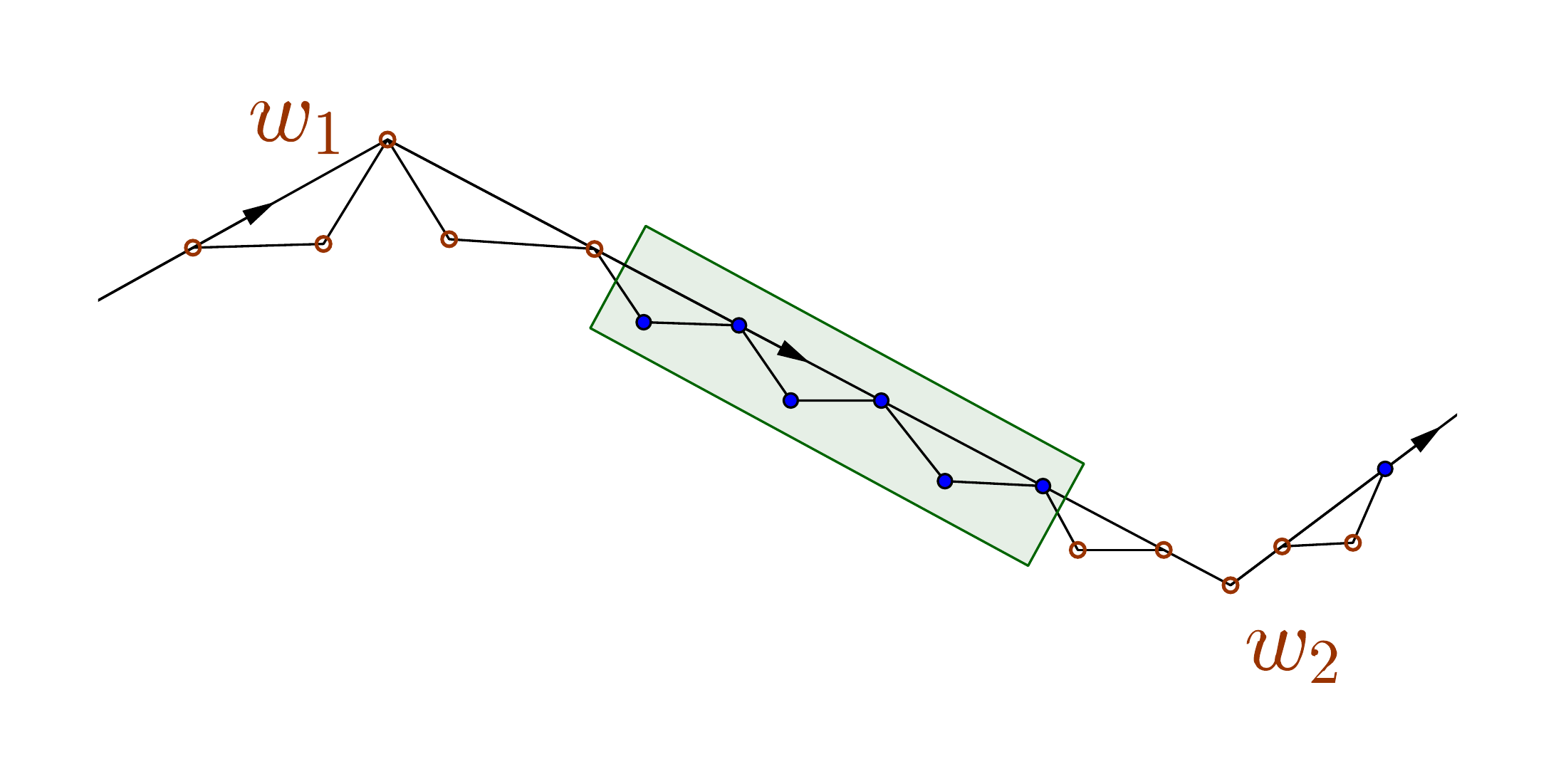}
        \caption{The hinge set consisting of the leaf-pairs at $\level{m}$. $(w_1, w_2)$ is a pair at $\level{(m-1)}$}
      \label{fig:hinge:04}
    \end{subfigure}

    \caption{The hinge sets centered on a point in an internal-pair.}
    \label{fig:hinge}
\end{figure}

\subsection{Hinge Set Decomposition}
\label{subsec:hinge}
 We discuss the process to decompose the set $\normp$ into \emph{hinge
  sets} such that each point in $\normp$ belongs to exactly one hinge set. Briefly speaking, each
hinge set is a set of points which are close  geometrically.

Consider a pair $\hat{\pair}$ at $\level{l}$ ($l < m-1$) with partition point set $\calA_{\hat{\pair}}$ and
apex point set $\calB_{\hat{\pair}}$. Denote the set of the child-pairs of $\hat{\pair}$ by $\hat{\Phi}$.  Recall that we say a point $u$ belongs to $\pair$ (i.e., $u \in \pair$), if the $\pair$ is $(u , \cdot)$ or $(\cdot, u)$. Just for convenient to describe, we call some point \emph{center} of a hinge set and other points \emph{affiliated point}. Formally, the hinge sets are defined as follows. 

\eat{Consider point $u$ in an internal pair $\pair$ on
$\level{t}$, $t \leq m-1$.  According to our construction in Section~\ref{subsec:construc}, the
closest child-pair of $\pair$ to $u$ is a leaf-pair. We group the leaf-pair and $u$ as a hinge
set. Finally, there are remaining some leaf-pairs on $\level{m}$. We group the remaining leaf-pairs with the same 
parent on $\level{(m-1)}$ as a hinge set. Then, we describe the details as follows.
}

\begin{itemize}
\item The hinge set centered on a point $\beta \in \calB_{\hat{\pair}}$ such that $\beta$ belongs
  to one or two internal-pairs in $\hat{\Phi}$:\footnote{
    $\beta$ must belong to two child-pairs of $\hat{\pair}$ since each $\beta$ induces two pairs.  However, $\beta$ may belong to two leaf-pairs (i.e., do not belong to any internal-pair). In this case, $\beta$ is affiliated to a hinge set centered on other point.} 
    We denote the two internal-pairs by $\vpair$ and  $\pair$.\footnote{If one of the two child-pairs is a leaf-pair,  let $\vpair = \emptyset$.}
    See Figure~\ref{fig:hinge:01} for an illustration.
  The hinge set centered at $\beta$ includes: $\beta$ itself, the  child-pair of $\vpair$ closest  to $\beta$
  (denote the pair by $(\xi_1, \xi_2)$) and the  child-pair of $\pair$ closest
  to $\beta$  (denoted the pair by $(\eta_1, \eta_2)$). $ \xi_1, \xi_2, $
  $\eta_1, \eta_2$ are affiliated points. According to the way to determine the leaf-pairs (see  Section~\ref{sec:normal}), they only belong to leaf-pairs. 
  
  \eat{Second, suppose $\beta$ only belongs to
    leaf-pairs,  there is no hinge set centered on $\beta$, i.e., $\beta$ is affiliated to a hinge set centered on another point.
   Actually, in this case, $\beta$ belongs to hinge set
    centered on the points of its parent pair (i.e., $\hat{\pair}$).}

\item The hinge set centered on a point  $\alpha \in \calA_{\hat{\pair}}$ such that $\alpha$ belongs to one or two internal-pairs in $\hat{\Phi}$, or $\alpha$ is an isolated partition point:
    \begin{itemize}
      \item First, suppose  $\alpha$ belongs to one
  or two internal-pairs in $\hat{\Phi}$, which we denote  as  $\vpair$ and $\pair$.\footnote{
    If $\alpha$ belongs to only one internal-pair of $\hat{\Phi}$, let $\vpair = \emptyset$.
} See Figure~\ref{fig:hinge:02}.
  The  hinge set centered on  $\alpha$ includes: $\alpha$ itself, the two
  child-pairs  closest to $\alpha$ of $\vpair$ and $\pair$  (denote the pairs by $(\xi_2, \xi_1)$ and  $(\eta_1, \eta_2)$) respectively.  $ \xi_1, \xi_2, $
  $\eta_1, \eta_2$ are affiliated points which only belong to leaf-pairs.
      \item Second, $\alpha$ is an isolated point in $\calA_{\ntree}$, i.e.,  $\alpha$ is an end point of a short
  piece and  does not belong to any internal-pair in
  $\hat{\Phi}$. See Figure~\ref{fig:hinge:03}. Then, for each direction of segment of
  $\hat{\pair}$,  we find the closest non-isolated  point in $\calA_{\hat{\pair}}$. Denote them by $\alpha_{l}$ and $\alpha_{r}$. Merge the two hinge
  sets centered on $\alpha_{l}$ and $\alpha_{r}$ as a new one and add $\alpha$ to the new hinge
  set.
    \end{itemize}

\end{itemize}

W.l.o.g., we process the points $\mu_1$ and $\mu_2$ in the root pair in the same way as the partition points in $\calA_{(\mu_1, \mu_2)}$.
So far, some points at $\level{m}$ still do not belong to any hinge set.
\begin{itemize}
\item The hinge set consisting of the leaf-pairs at $\level{m}$: Consider any pair $\pair =
  (w_1,w_2)$ at $\level{(m-1)}$. Define the set difference of $\calA_{\pair}\cup \calB_{\pair}$  and the hinge sets centered on  $w_1$ and $w_2$ as a hinge set. \footnote{Although these points form the leaf-pairs at $\level{m}$, these  leaf-pairs are the ``candidate internal-pairs'' to generate  the points at $\level{(m+1)}$. }  See Figure~\ref{fig:hinge:04}.
\end{itemize}
\eat{
  For each maximal set consisting of the points which belong to $\calA_{\pair}\cup \calB_{\pair}$ but do not belong to the hinge sets centered on  $w_1$ and $w_2$,
  we define it as a hinge set.
  }

Overall, we decompose the points  $\normp$ into a collection of hinge sets.
\begin{lemma1}
  Each point $p$ in $\normp$ belongs to exactly one hinge set.
\end{lemma1}

\begin{proof}
\eat{
 Note that a point is a center of a hinge set if and only if it belongs to an internal-pair,
 and these points does not belongs to other hinge set except for the hinge set centered on themselves.
 The point only belongs to leaf-pairs cannot be a center point of a hinge set. Concretely speaking,
  If  $\beta$ is one  point of an internal-pair, $\beta$ only belongs the hinge set centered on
  itself. The two  child-pairs of $\vpair$ incident on $\xi_1$ are leaf-pairs. Thus, there is no hinge set centered on
  $\xi_1$, i.e., $\xi_1$ only belongs to the hinge set centered on $\beta$. Similarly, $\xi_2,
  \eta_1, \eta_2$ only belong to the hinge set centered on $\beta$. Next, consider the second
  case. $\alpha$ is one point of an internal-pair. Thus, $\alpha$ only belongs to the hinge set
  centered on itself. Then, any  point in $\{ \xi_1, \xi_2,  \eta_1, \eta_2 \}$ is only incident on
  leaf-pairs. Thus, there is no hinge set centered on them. They only belong to the hinge set
  centered on $\alpha$. Finally, since any point only has one parent in the tree $\ntree$, the
  points of the third case only belong to one hinge set. Overall, each point in $\normp$ belongs to
  at most one hinge set.
    The center point of the hinge set belongs to some internal-pair. The other points in the hinge set only belongs to leaf-pairs.
Meanwhile, third type hinge sets only include points on level-$m$, hence, do not include the points belonging to any internal pair. Thus, any hinge set have at most one point which belongs to some internal-pair.
Consider a point which belongs to some internal-pair.

    Consider any point of $\calA_{\hat{\pair}}$ and apex point set $\calB_{\hat{\pair}}$. If it is
  incident on any internal-pair among the child-pairs of $\hat{\pair}$, it should be a center of a
  hinge set. If the point is an isolated point of  $\calA_{\hat{\pair}}$, it merges two hinge sets and
  belongs to the new hinge set. The remaining points are the points which are only incident on
  leaf-pairs of the child-pairs of $\hat{\pair}$. Note that for any pair $\hat{\pair}$ at $\level{t}, t\leq m-2$,
  only five child-pairs of $\hat{\pair}$ are leaf-pairs, and only four points of $\calA_{\hat{\pair}} \cup
  \calB_{\hat{\pair}}$ are only incident on leaf-pairs. We have assigned the four points to the two hinge sets
  centered on the points of $\hat{\pair}$.

}

  First, we prove that  any two hinge sets are  not overlapping. It means that
 any point in a hinge set does not belong to any other hinge set.  First, consider a point $\lambda$ which only belongs to a leaf-pairs i.e., an affiliated point in the first two type hinge sets (see $\xi_1, \xi_2, \eta_1, \eta_2$ in Figure~\ref{fig:hinge:01}~\ref{fig:hinge:02}~\ref{fig:hinge:03}) or a point in a third type hinge set. It has unique parent-pair $\pair$ such that $\lambda \in \gadget_{\pair}$.  Let $\pair = (\alpha, \beta)$. If $\vpair$ is the closest child-pair to $\alpha$, $\lambda$ belongs to the hinge set centered on $\alpha$. Or if $\vpair$ is the closest child-pair to $\beta$, $\lambda$ belongs to the hinge set centered on $\beta$. Otherwise, $\lambda$ belongs to a third type hinge set. It is not difficult to check that the three cases do non overlap. Thus, point $\lambda$ belongs to at most one hinge set. Next, consider  a point $\lambda$ which belongs to some internal-pair or is an isolated partition point.  Then, $\lambda$  can only belong to the first two type hinge sets. $\lambda$ cannot be a an affiliated point for any hinge set since affiliated point only belongs to leaf-pairs. Besides, according to the definition, the hinge set centered on $\lambda$ is unique. Therefore, $\lambda$ belongs to at most one hinge set.

  On the other hand, we prove that each point in $\normp$ belongs to at least one hinge set.
  First, any point of $\level{m}$ belongs to a hinge set according to the third case. Second, consider a point $\lambda$ on $\level{l}, l<m$.
  If $\lambda$ belong to any internal-pair, it should be a center of a  hinge set. If $\lambda$ is an isolated partition point,
  it merges two hinge sets and  belongs to the new hinge set. If $\lambda$ only belongs to a leaf-pair $\vpair$ and $\vpair $ is a child-pair of $\pair = (\alpha, \beta)$, then, based on the way to determine the leaf-pairs,  $\lambda$ belongs to hinge set centered on $\alpha$ or $\beta$.

  Overall, each point in $\normp$ belongs to exactly one hinge set.
\end{proof}

\begin{figure}[t]
    \centering
    \includegraphics[width = 0.8\textwidth]{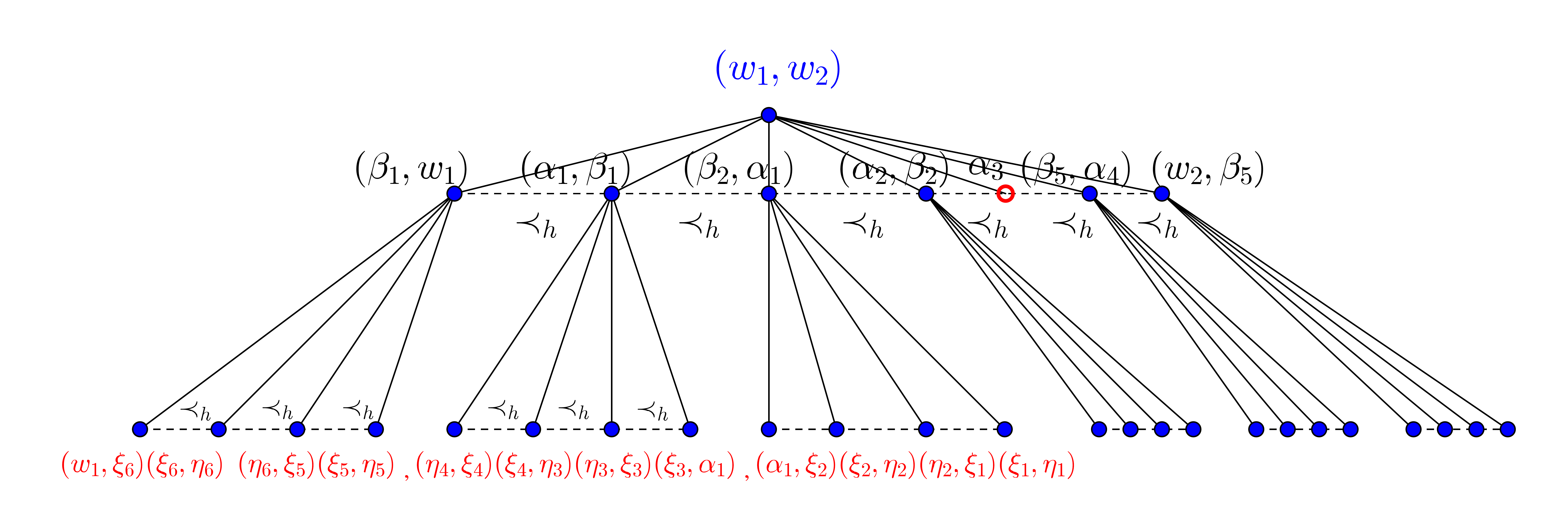}
    \caption{ The illustration for order $\horder$. In the example, notice that the order of the \level{2} in $\rtree$ is
     different with the order in $\ntree$ (see Figure~\ref{fig:tree}).\protect\footnotemark }
    \label{fig:rtree}
\end{figure}
\footnotetext{Note that $\rtree$ only contains the internal nodes of $\ntree$. Consider that \level{2} nodes still have their child nodes. Thus, we do not remove
     the pairs on \level{2} in the example.}

\topic{Order of the hinge sets}  We define the total order of all hinge sets. We denote the order by
``$\horder$'', which is different from the previous order ``$\prec$''. The $\horder$ is in
fact consistent with the order of traversing the fractal path from $\mu_1$ to $\mu_2$.
Rigorously, we define $\horder$ below. For comparison, in Figure~\ref{fig:rtree}, we reorganize the tree in Figure~\ref{fig:tree} according to the order $\horder$.

First, consider the root pair $(\mu_1, \mu_2)$. We denote the hinge set centered on $\mu_1$  by $\hinge{\mu_1}$ and denote the hinge set centered on $\mu_2$  by $\hinge{\mu_2}$.
 Define $\hinge{\mu_1}$ as the first hinge set  and $\hinge{\mu_2}$ as the last hinge set w.r.t. $\horder$. Then, $\hinge{\mu_1} \horder \hinge{\mu_2}$.

Second, we define the orders of other hinge sets. Consider an internal-pair $\pair$ with parent pair $(w_1, w_2)$ (or $(w_2, w_1)$) and $\hinge{w_1} \horder \hinge{w_2}$. Note that  there are two hinge sets centered on the points in $\pair$ respectively. We call the one closer to (in Euclidean distance) $\hinge{w_1}$ the \emph{former hinge set} of $\pair$, denoted by $\lhinge{\pair}$.  Call the other the \emph{latter hinge set} of $\pair$,
denoted by $\rhinge{\pair}$. Let $\hinge{w_1} \horder \lhinge{\pair}\horder\rhinge{\pair} \horder \hinge{w_2}$. Besides, recall that for any internal-pair $\pair$ at $\level{(m-1)}$, the points in $\calA_{\pair} \cup \calB_{\pair}$
 but not in $ \lhinge{\pair}\cup \rhinge{\pair}$ also form a hinge set. We denote it by $\hinge{\pair}$ and define $\lhinge{\pair}\horder  \hinge{\pair} \horder \rhinge{\pair}$.

Note that we have organized all pairs in the recursion tree $\ntree$.
We can transform the tree consisting of all \textbf{internal nodes} of $\ntree$ to a
topological equivalent tree $\rtree$ which has a different ordering of the nodes. The order of the
sibling pairs in $\rtree$ is determined by  their Euclidean distances to the former hinge set of
their parent. Overall, the ordering $\horder$ of the hinge sets can
be defined by a DFS traversing of $\rtree$. When we reach a pair $\pair $ at $\level{l}$ ($l < m-1$)
 for the first time\footnote{$\level{(m-1)}$ is the second to last level of $\ntree$ and the last level of $\rtree$.}, we visit its former hinge set $\lhinge{\pair}$. Next, we recursively traverse its child-pairs in the order we just defined. Then we return to the pair and visit its latter hinge set $\rhinge{\pair}$. When we reach a pair
$\pair$ at $\level{(m-1)}$, we visit  $\lhinge{\pair}, \hinge{\pair},  \rhinge{\pair}$ in order and return.\footnote{
Note that two adjacent sibling pairs share the same hinge set.
So the same hinge set may be visited twice, and the two visits are adjacent in the total order. So it does not affect the order between
two distinct hinge sets.} We denote the procedure by
$\mathsf{TravelHinge(\pair)}$ and the pseudocode can be found in Algorithm~\ref{alg:hinge}.

\subsection{Long Range Connection}
\label{subsec:long}
We call the edges connecting two non-adjacent hinge sets \emph{long range connections}.
\begin{defn}[Long range connection]
  A \emph{long range connection} is an edge connecting two points in
  two non-adjacent hinge sets.
\end{defn}

\begin{algorithm}[t]
	\caption{$\mathsf{TravelHinge}(\pair)$: Travel the hinge sets in the tree $\rtree_{\pair}$}
	\label{alg:hinge}
      $\mathsf{Visit}(\lhinge{\pair})$ \;
      \uIf{ $\pair$ is at $\level{l} $ ($l < m-1$)}{
      \ForEach{child-pair $\vpair$ of $\pair$ in $\rtree_{\pair}$}{
		$\mathsf{TravelHinge}(\vpair)$ \;
      }
      }
      \Else{
        $\mathsf{Visit}(\hinge{\pair}) $
      }
      $\mathsf{Visit}(\rhinge{\pair})$ \;
\end{algorithm}

If there is no long range connection, the total order of the hinge sets corresponds to the ordering of the shortest path from $\mu_1$
 to $\mu_2$ in the final construction.  It means that  each hinge set has at least one point on the
shortest path between $\mu_1$ and $\mu_2$ and the order of these points is consistent with $\horder$.
However, there indeed exist long range connections among normal points. In order to achieve the above purpose, we should
break the edges connecting two non-adjacent hinge sets. Fortunately, the long range connections in $\normp$ have relatively simple form. We claim that after  introducing some auxiliary points (in Section~\ref{sec:aux}),
we can cut the long range connections  without introducing any  new long range connections. Hence, only adjacent hinge sets in the above order $\horder$
have edges in the Yao-Yao graph.

Now, we examine the long range connections in $\YY{2k+1}(\normp)$.
  First, we show that we only need to consider the long
range connections between the points in $\ntree_{\pair}$ and $\ntree_{\vpair}$ for any two sibling pairs
$\pair$ and $\vpair$. Recall that $\ntree_{\pair}$ denotes the subtree rooted at $\pair$ (including $\pair$).
If there exist two points $p \in \ntree_{\pair} - \ntree_{\pair} \cap \ntree_{\vpair}$
and $q \in \ntree_{\vpair} - \ntree_{\pair} \cap \ntree_{\vpair}$ such that $pq$ is a long range connection, we say there is a long range connection between $\ntree_{\pair}$ and $\ntree_{\vpair}$.  
%  Finally, we organize all hinge sets in an order. 

\begin{claim}
\label{claim:nolong}
 Suppose that  for any two sibling pairs $\pair$ and $\vpair$ in $\ntree$ at $\level{l}$ for $l \leq m-1$, there is
no long range connection between the points in $\ntree_{\pair}$ and $\ntree_{\vpair}$.
 Then, there
is no long range connection.
\end{claim}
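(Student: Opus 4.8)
The plan is to argue by contradiction and to \emph{localize} any long range connection to a single pair of sibling subtrees, via a lowest-common-ancestor argument in the recursion tree $\ntree$. Suppose toward a contradiction that a long range connection exists, i.e.\ an edge $pq$ of $\YY{2k+1}(\normp)$ whose endpoints lie in two non-adjacent hinge sets with respect to $\horder$. Among all tree nodes $\pi$ with $p,q\in\ntree_{\pi}$, let $\pair^*$ be one of maximal depth; this is well defined since the root pair $(\mu_1,\mu_2)$ contains every normal point. Because $p\neq q$, the node $\pair^*$ cannot be a leaf, so it is an internal-pair and was expanded into a gadget. Since $\ntree_{\pair^*}$ is the union of the child-subtrees of $\pair^*$ (recall that even the two points of $\pair^*$ themselves reappear in the child-pairs generated by the two extreme pieces of its gadget), there are children $\pair$ and $\vpair$ of $\pair^*$ with $p\in\ntree_{\pair}$ and $q\in\ntree_{\vpair}$.

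The core of the reduction is to choose $\pair$ and $\vpair$ \emph{distinct} and so that $p,q$ avoid the shared part $\ntree_{\pair}\cap\ntree_{\vpair}$, matching exactly the definition of a long range connection between sibling subtrees. By the maximality of $\pair^*$, no single child of $\pair^*$ contains both $p$ and $q$; hence any child-subtree holding $p$ cannot hold $q$, and conversely. If $p$ lies in a unique child-subtree we take that child as $\pair$, while if $p$ is a point shared by two sibling child-pairs we let $\pair$ be either of them -- this is the only way a point can belong to two child-subtrees, since (as established earlier) every normal point other than $\mu_1,\mu_2$ belongs to at most two pairs, and those two pairs are siblings. We choose $\vpair$ symmetrically for $q$. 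In all cases $\pair\neq\vpair$, $q\notin\ntree_{\pair}$, and $p\notin\ntree_{\vpair}$, so $p\in\ntree_{\pair}\setminus(\ntree_{\pair}\cap\ntree_{\vpair})$ and $q\in\ntree_{\vpair}\setminus(\ntree_{\pair}\cap\ntree_{\vpair})$. Since non-adjacency of hinge sets is a global property of $p$ and $q$ and is unaffected by this reassignment, the edge $pq$ is precisely a long range connection between the sibling subtrees $\ntree_{\pair}$ and $\ntree_{\vpair}$, contradicting the hypothesis whenever $\pair,\vpair$ lie at $\level{l}$ with $l\le m-1$.

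The step I expect to be the main obstacle is the handling of the deepest level together with the shared-point bookkeeping. The clean reduction above covers every case in which $\pair^*$ sits at $\level{l}$ with $l\le m-2$, so that its children are at $\level{(l+1)}$ with $l+1\le m-1$. The delicate remaining case is when $\pair^*$ is at $\level{(m-1)}$: then $\pair$ and $\vpair$ are leaf-pairs at $\level{m}$ inside one gadget, and $\ntree_{\pair},\ntree_{\vpair}$ are singletons not reached by the hypothesis. Here all the ``middle'' points of the gadget form the single hinge set $\hinge{\pair^*}$, so the only non-adjacent pair available is $\lhinge{\pair^*}$ and $\rhinge{\pair^*}$, which sit near the two endpoints of $\pair^*$; one must verify separately that no Yao--Yao edge joins these two extremes within a single gadget, since the intervening points of $\hinge{\pair^*}$ block such an edge. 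I would isolate this as a small base-case lemma about a single gadget and treat the higher levels uniformly by the lowest-common-ancestor reduction, taking care that the shared-point reassignment never violates the ``remove the intersection'' clause.
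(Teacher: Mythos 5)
Your overall strategy---localize a hypothetical long range connection to the lowest common ancestor in $\ntree$ and then read it off as a connection between two sibling subtrees---is the same reduction the paper performs, just organized around the two endpoints $p,q$ rather than around the hinge sets containing them, and your instinct to isolate the deepest level as a separate base case is sound (in the paper that role is played by Observation~\ref{obs:varphi} and the ``third type hinge set'' cases). However, there is a concrete gap in the step where you assert that $\ntree_{\pair^*}$ is covered by the child-subtrees of $\pair^*$ because ``even the two points of $\pair^*$ themselves reappear in the child-pairs generated by the two extreme pieces of its gadget.'' This is true for the partition-point endpoint of $\pair^*$ (the near-empty piece incident on it does generate a leaf-pair containing it), but it is false for the apex-point endpoint: by construction the piece incident on the apex point is an \emph{empty} piece, no apex point is generated for it, and hence no child-pair of $\pair^*$ contains that endpoint. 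The same problem occurs for isolated partition points of $\gadget_{\pair^*}$, which are leaves of $\ntree$ that are points rather than pairs and so are never reached by a hypothesis quantified over sibling \emph{pairs}. Consequently, if $p$ is the shared apex point of $\pair^*$ and its sibling $\vpair^*$ while $q$ lies strictly inside $\ntree_{\pair^*}$, your reduction produces no pair of siblings to which the hypothesis applies: $p$ belongs to $\ntree_{\pair^*}\cap\ntree_{\vpair^*}$, so the ``minus the intersection'' clause removes it there, and it lies in no child-subtree of $\pair^*$. These boundary points need their own argument (e.g., that the closest child-pair of $\pair^*$ to such a point screens it from everything deeper, in the spirit of Observation~\ref{obs:varphi}), and a one-line slip aside, note also that a leaf-pair has two points, not one, so the claim ``$\pair^*$ cannot be a leaf because $p\neq q$'' should instead be argued by observing that the two points of a leaf-pair lie in a single hinge set.

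To be fair, the paper's own proof is weakest at exactly the same place: its ``second case'' (Figure~\ref{fig:hinge2treec}), where a pair containing one hinge-set center is an ancestor of the pairs containing the other, is dispatched by an appeal to ``reduction to the first case'' that does not really dispose of edges emanating from the shared point itself. So your write-up matches the paper in approach and in its strong points, but the missing case above is a genuine hole that you would need to patch with an explicit single-gadget argument, alongside the level-$m$ base case you already identified.
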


\begin{figure}[t]
\captionsetup[subfigure]{justification=centering}
    \centering
    \begin{subfigure}{0.5\textwidth}
    \includegraphics[width = 1\textwidth]{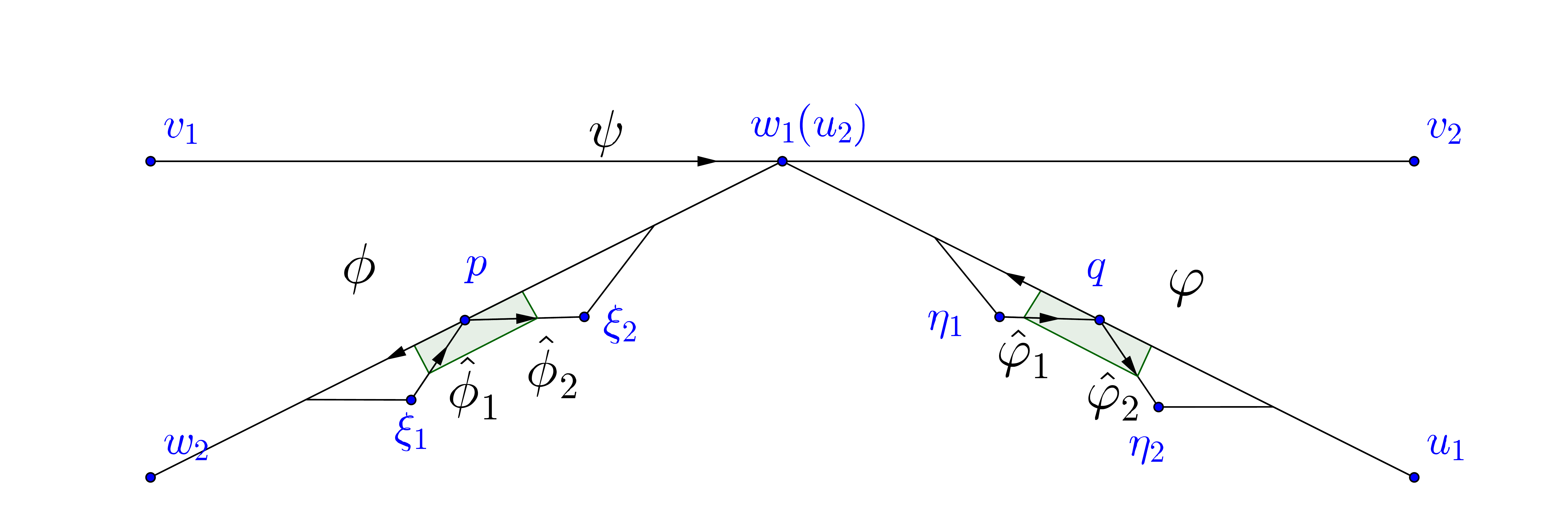}
    \caption{$\psi = (v_1,v_2), \pair = (w_1, w_2), \vpair  = (u_1, u_2),   $
  $  \hat{\vpair}_{1} = ( \eta_1,q), \hat{\vpair}_{2} = (q, \eta_2),$  $ \hat{\phi}_{1} = (\xi_1, p), $
    $\hat{\phi}_{2} = ( p, \xi_2)$, $w_1$ may equal to $u_2$ }
      \label{fig:hinge2treea}
      \end{subfigure}%
    \begin{subfigure}{0.5\textwidth}
      \includegraphics[width = 1\textwidth]{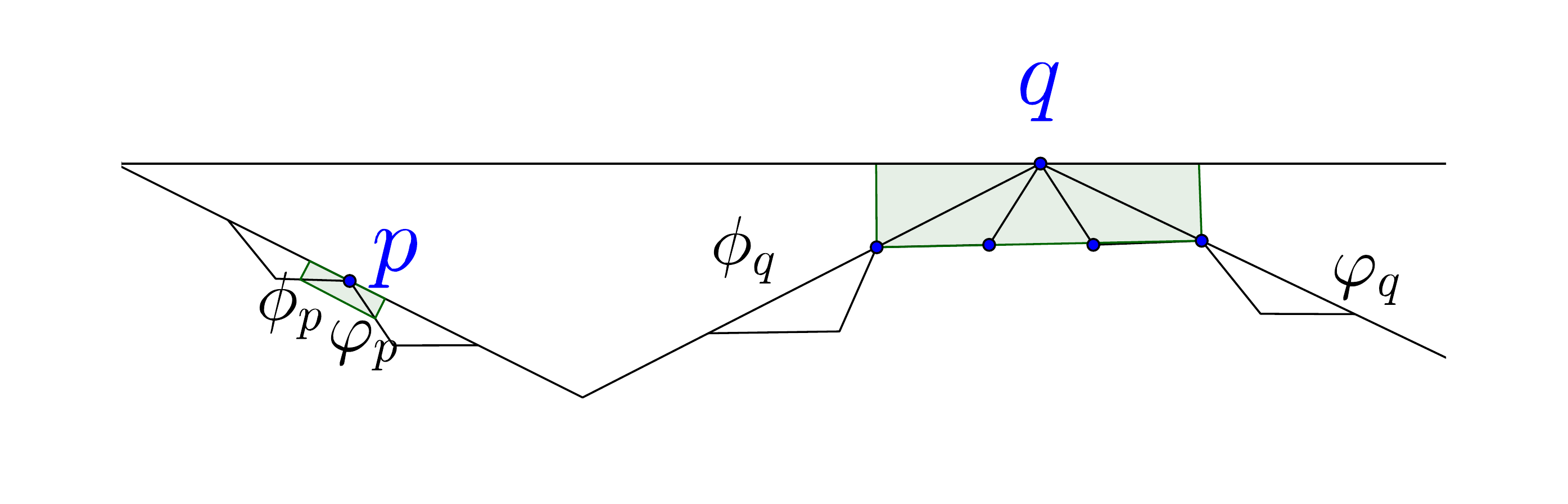}
    \caption{ $  \psi = \vpair  = (v_1, v_2), \pair = (w_1, w_2),  $   $  \hat{\vpair}_{1} = ( q, \eta_1), $
     $\hat{\vpair}_{2} = (\eta_2, q),$  $ \hat{\phi}_{1} = (\xi_1, p), $   $\hat{\phi}_{2} = ( p, \xi_2)$,
       $w_1$ may equal to $\eta_1$}
      \label{fig:hinge2treea2}
    \end{subfigure}

    \begin{subfigure}{0.5\textwidth}
          \includegraphics[width = 1\textwidth]{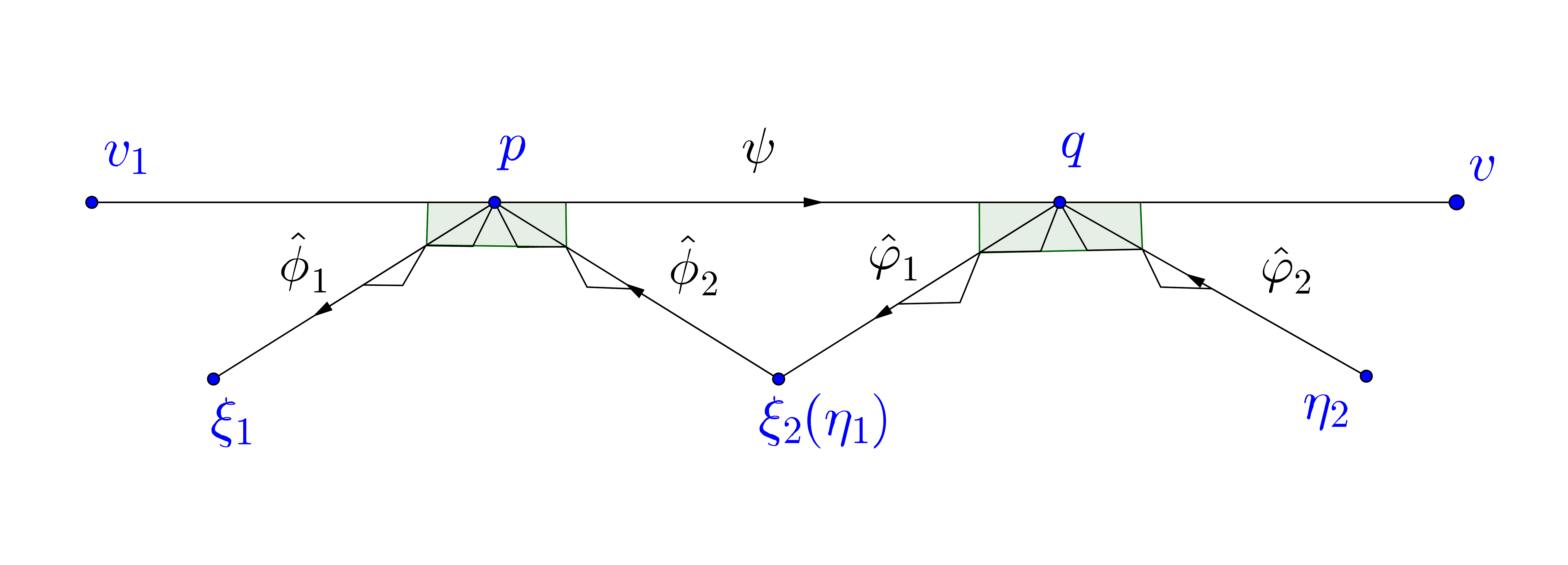}
    \caption{$\psi =\pair = \vpair = (v_1,v_2), \hat{\vpair}_{1} = (q, \eta_1),\hat{\vpair}_{2} = (\eta_2, q),$
    $ \hat{\phi}_{1} = (p, \xi_1), $  $\hat{\phi}_{2} = (\xi_2, p)$ , $\xi_2$ may equal to $\eta_1$}
      \label{fig:hinge2treeb}
    \end{subfigure}%
    \begin{subfigure}{0.5\textwidth}
    \includegraphics[width = 1\textwidth]{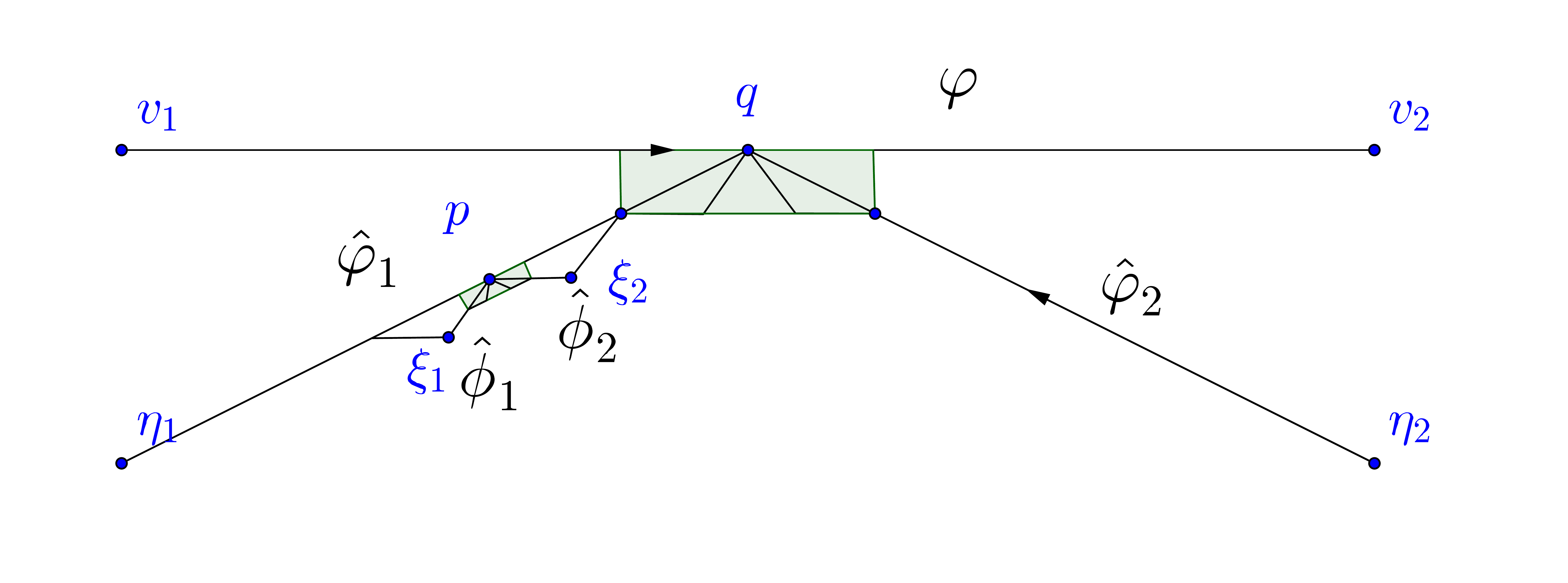}
    \caption{$\vpair = (v_1,v_2), \hat{\vpair}_{1} = (q, \eta_1),\hat{\vpair}_{2} = (\eta_2, q),$  \newline  $ \hat{\phi}_{1} = (\xi_1, p), $ $\hat{\phi}_{2} = ( p, \xi_2)$ }
      \label{fig:hinge2treec}
    \end{subfigure}
    \caption{The possible relative positions of $p$ and $q$. Although, in the figure, $p$ and $q$ are partition points, it  does not induce any new case when $p$ or $q$ is  an apex point according to our divided condition in the proof.}
    \label{fig:hinge2tree}
\end{figure}

\begin{proof}

Consider two non-adjacent hinge sets $\hinge{p}$ and $\hinge{q}$ (if two hinge sets are adjacent, the edges between them are hinge connections) and $\lambda_1 \in \hinge{p}$ and $\lambda_2 \in \hinge{q}$.  We prove that we can find two sibling pairs $\pair$ and $\vpair$ such that $\ntree_{\pair}$ and $\ntree_{\vpair}$ contains $\lambda_1$ and $\lambda_2$ respectively.
Then, we consider the possible cases about the two non-adjacent hinge sets.

First, we consider the case that  each of the two hinge sets is centered on a point
of some internal-pair. Denote the two center points by $p$ and $q$ and the two hinge sets by $\hinge{p}$ and
$\hinge{q}$.  $p$ belongs to one or two adjacent internal-pairs. W.l.o.g., suppose they are $\hat{\pair}_1$ and
$\hat{\pair}_2$ ($\hat{\pair}_2 = \emptyset$ if $p$ only belongs to one internal-pair).  Meanwhile, $q$
belongs to one or two adjacent internal-pairs. Suppose they are $\hat{\vpair}_1$ and $\hat{\vpair}_2$.
 W.l.o.g., suppose the level of $\hat{\vpair}_1$ and $\hat{\vpair}_2$ is no more than the level of $\hat{\pair}_1$ and
$\hat{\pair}_2$. Then, we distinguish  two cases. In the first one,  none of
$\hat{\vpair}_1$ and $\hat{\vpair}_2$ is an ancestor of   $\hat{\pair}_1$ and $\hat{\pair}_2$. Otherwise, it is the second case.

Consider the first case. Suppose the closest common ancestor of $\hat{\vpair}_1$, $\hat{\vpair}_2$, $\hat{\pair}_1$ and
$\hat{\pair}_2$ is pair $\psi$ in $\ntree$.  If $p$ and $q$ do not belong to $\calA_{\psi} \cup \calB_{\psi}$,
there are  two different child-pairs $\pair$ and $\vpair$ of $\psi$ (see Figure~\ref{fig:hinge2treea}),
such that $\hinge{p}$  belongs to $\ntree_{\pair}$ and $\hinge{q}$ belongs to $\ntree_{\vpair}$. Since
points between $\ntree_{\pair}$ and $\ntree_{\vpair}$ have no long range connection according to the assumption,
 points between $\hinge{p}$ and $\hinge{q}$ have no long range connection.
Then consider  the case that $q$  belongs to $\calA_{\psi} \cup \calB_{\psi}$ (see Figure~\ref{fig:hinge2treea2}). Note that $\hinge{q}$ is a  subset of $\ntree_{\hat{\vpair}_1} \cup
 \ntree_{\hat{\vpair}_2}$. Because there is no long range connections for the points  between $\ntree_{\hat{\vpair}_1}, \ntree_{\hat{\vpair}_2}$ and
 $\ntree_{\pair}$, $\hinge{p}$ and $\hinge{q}$ have no long
 range connection. Finally, if both $p$ and $q$ belong to $\calA_{\psi} \cup \calB_{\psi}$ (see
 Figure~\ref{fig:hinge2treeb}), since there is no long range connection for points  between  $\ntree_{\hat{\pair}_1},
 \ntree_{\hat{\pair}_2}$ and  $\ntree_{\hat{\vpair}_1},  \ntree_{\hat{\vpair}_2}$,  $\hinge{p}$ and $\hinge{q}$ have no long
 range connection.

Consider the second case.  See Figure~\ref{fig:hinge2treec}. W.l.o.g.,  suppose $\hat{\pair}_1$ and
$\hat{\pair}_2$ are in  the subtree of $\ntree_{\hat{\vpair}_1}$.
$\hinge{q}$ is a subset of    $\ntree_{\hat{\vpair}_1} \cup \ntree_{\hat{\vpair}_2}$. Moreover, according to the assumption,
the points between $\ntree_{\hat{\vpair}_1} $ and $ \ntree_{\hat{\vpair}_2}$ have no long rang connections.
Since $\hat{\pair}_1$ and $\hat{\pair}_2$ are in  the subtree of
$\ntree_{\hat{\vpair}_1}$, we know that the points in  $\hinge{q} \cap \ntree_{\hat{\vpair}_2}$ have no long range
connections to $\hinge{p}$. Then we consider the long range connection between  $\hinge{q} \cap
\ntree_{\hat{\vpair}_1}$ and $\hinge{p}$. Actually, it is reduced to the first case, thus  they have no long range connection.

Above all, we have discussed the case that  each of the two hinge sets  is centered on a point of some
internal-pair.  Next, suppose that at least one of the hinge sets  is a third type hinge set which contains only leaf-pairs at
$\level{m}$.  If both of them are the third type hinge sets,
denoted by $\hinge{\pair}$ and $\hinge{\vpair}$, there must exist two sibling pairs $\hat{\pair}$
and $\hat{\vpair}$ such that $\hinge{\pair} \in \ntree_{\hat{\pair}}$ and $\hinge{\vpair} \in
\ntree_{\hat{\vpair}}$. According to the hypothesis that there is
no long range connection between the points in $\ntree_{\hat{\pair}}$ and $\ntree_{\hat{\vpair}}$, there is no long range connection
between $\hinge{\pair}$ and $\hinge{\vpair}$. Finally, consider the case that  there is only one
third type hinge set, denoted by $\hinge{\pair}$ and the other is centered on $q$, denoted by
$\hinge{q}$. Suppose $q$ is the shared point of $\hat{\vpair}_2$ and $\hat{\vpair}_1$. We distinguish two
cases according to whether $\pair \in \ntree_{\hat{\vpair}_1} \cup \ntree_{\hat{\vpair}_2} $ or not. As we
have discussed above, we can prove that there is no long range connection between $\hinge{q}$ and $\hinge{\pair}$.
\end{proof}

\begin{figure}[t]
    \centering
    \includegraphics[width = 0.85\textwidth]{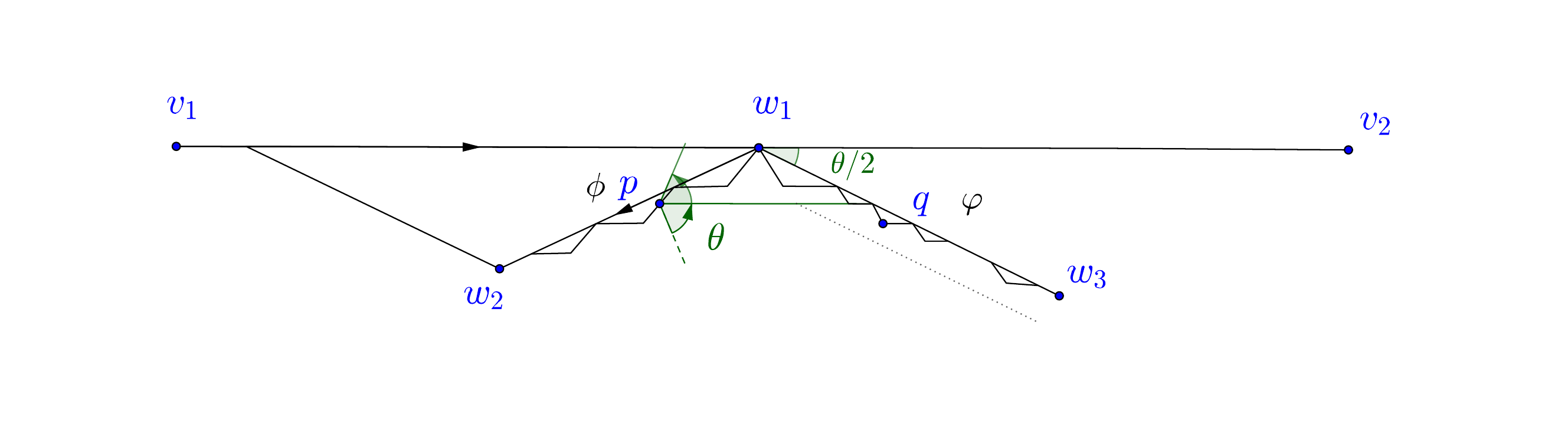}
    \caption{The points of $\ntree_{\vpair}$ locate in  at most two cones of $p$. }
    \label{fig:twocone}
\end{figure}

According to Claim~\ref{claim:nolong}, next, we discuss the possible long range connections between $\ntree_{\pair}$ and $\ntree_{\vpair}$
for two sibling pairs $\pair$ and $\vpair$.  Suppose $p$  belongs to $\ntree_{\pair}$ and $q$
belongs to $\ntree_{\vpair}$.  In the following, we prove that if  the directed edge $\ov{pq}
$ is an edge in $\YY{2k+1}(\normp)$, then  $\pair \prec \vpair$. Moreover, note that the points of
$\ntree_{\vpair}$ locate in  at most two cones of $p$. See Figure~\ref{fig:twocone} for an illustration. $p \in \ntree_{\pair}$ and $\ntree_{\vpair}$ locates in two cones of $p$ according to the angular relation. We prove that for each point $p$, only one of
the two cones may contain a long range connection.  Intuitively, these properties (Observation~\ref{obs:varphi}, Lemma~\ref{lm:left} and~\ref{lm:right}) result from Property~\ref{prop:cone} which does not hold for even Yao-Yao graphs. 

We prove the properties formally below. Consider two sibling pairs $\pair \prec \vpair$. First, suppose $\vpair$ is a leaf-pairs (see Observation~\ref{obs:varphi}). Second, we consider that $\vpair$ is an internal-pair (see Lemma~\ref{lm:left} and~\ref{lm:right}).

\begin{obs}
  \label{obs:varphi}
  Consider two sibling pairs $\pair$ and $\vpair$ such that $\pair \prec \vpair$. If $\vpair$ is  a leaf-pair, there is no long range connection between $\ntree_{\pair}$ and $\ntree_{\vpair}$ (i.e., $\vpair$ itself).
\end{obs}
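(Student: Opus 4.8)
The plan is to fix coordinates so that the segment of the common parent pair is horizontal with its first endpoint $w_1$ on the left, and to exploit the fact that a leaf-pair contributes only two points to $\ntree$. Since $\pair\prec\vpair$, every point of $\ntree_\pair$ lies on the $w_1$-side of $\vpair=(v_1,v_2)$, so from any $p\in\ntree_\pair$ both $v_1$ and $v_2$ lie in the forward (rightward) direction toward $w_2$. Being only two points, $v_1$ and $v_2$ occupy at most two cones of $p$; by the angular relations forced by the sawtooth construction (the directions $\ov{v_1v_2}$ and the child-pair directions are cone boundaries at the angles $(k-t)\theta$ and $(k-t+1)\theta$ computed earlier), these two cones are adjacent and their common boundary emanates from $p$ along the direction of $\ov{v_1v_2}$. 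I would suppose, toward a contradiction, that some $p\in\ntree_\pair$ is joined to $q\in\{v_1,v_2\}$ by an edge of $\YY{2k+1}(\normp)$ that is a long range connection, and analyze the two candidate cones separately.

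First I would dispose of one of the two cones by a blocking argument. In exactly one of the cones there is an intermediate normal point --- one belonging to $\pair$ itself or to a sibling subtree $\ntree_\psi$ with $\pair\preceq\psi\prec\vpair$ --- that lies strictly closer to $p$ than $q$ does; this point prevents $q$ from being the Yao-nearest neighbour of $p$ in that cone, so no Yao edge, and hence no Yao-Yao edge, from $p$ reaches $\vpair$ through it. This is precisely the assertion that at most one of the two cones can carry a long range connection. The existence of the blocking point follows from the monotone left-to-right layout of the sibling subtrees together with Property~\ref{prop:p2} and Corollary~\ref{cor:p1}, which keep the intermediate pieces (and therefore the subtrees they carry) comparable in size, so that they indeed populate the blocked cone between $p$ and $q$.

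For the remaining cone, Property~\ref{prop:cone} does the decisive work. Because $\ov{v_1v_2}$ is a cone boundary, its reverse $\ov{v_2v_1}$ is not, which breaks the symmetry between $v_1$ and $v_2$ and singles out which of the two can be an endpoint of a surviving edge. I would then invoke the Reverse-Yao step at that endpoint $q$: the cone of $q$ pointing back toward $\ntree_\pair$ already contains a strictly closer normal point $p'$ lying in the hinge set adjacent to the hinge set $H$ that contains $v_1,v_2$, and the Yao edge $\ov{p'q}$ is shorter than any edge $\ov{pq}$ coming from a deeper part of $\ntree_\pair$. Hence the Reverse-Yao step discards $\ov{pq}$ unless $p$ itself lies in the adjacent hinge set, in which case $\ov{pq}$ is a hinge connection rather than a long range connection. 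The symmetric edge $\ov{qp}$ is ruled out the same way, reading the Yao and Reverse-Yao roles of $p$ and $q$ in the opposite order.

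The hard part will be the middle step: identifying, for the one non-blocked cone, exactly which of $v_1,v_2$ can serve as an endpoint and producing the closer competitor $p'$ in the adjacent hinge set. This requires combining the exact cone-boundary angles of the sawtooth with Property~\ref{prop:cone} to eliminate the wrong orientation, followed by a short distance comparison that uses the near-uniform piece lengths guaranteed by the refinement. Everything else --- the coordinate setup, the two-cone observation, and the blocking argument in the first cone --- is routine bookkeeping once the angles are pinned down.
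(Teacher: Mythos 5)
Your overall strategy --- block one cone in the Yao step with a nearer intermediate point, and kill the other cone in the reverse-Yao step at the $\vpair$-endpoint using a shorter competing edge from the adjacent hinge set --- is exactly the mechanism of the paper's proof. But what you have written is a plan, not a proof: the step you defer as ``the hard part'' is the entire content of the paper's argument, and nothing in your sketch actually produces the blocking point or the competitor $p'$. The paper names them explicitly: writing $\vpair=(w_3,w_1)$ and letting $(\eta_1,\eta_2)$ be the leaf-pair of $\gadget_{\pair}$ closest to $w_1$, the single point $\eta_2$ does both jobs at once. It lies in the same cone of $w_3$ as any $p\in\ntree_{\pair}$ with $|\eta_2 w_3|<|pw_3|$, so it blocks $\ov{w_3p}$ in the Yao step; and since $\ov{\eta_2 w_3}$ survives the Yao step (it is a hinge connection), the same comparison makes $w_3$ reject $\ov{pw_3}$ in the reverse-Yao step. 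No delicate angle analysis via Property~\ref{prop:cone} is required beyond knowing which leaf-pair of $\gadget_{\pair}$ sits next to the relevant endpoint; the distance comparison is immediate because $(\eta_1,\eta_2)$ is by construction the child-pair of $\pair$ nearest that endpoint. Until you exhibit such a point and verify the two cone memberships and the two inequalities, the observation is not proved.

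Two further concrete issues. First, you never treat the case $\ntree_{\pair}\cap\ntree_{\vpair}=\{w_1\}$, i.e., when $\pair$ and the leaf-pair $\vpair$ share an endpoint; the paper splits on exactly this, and in the disjoint case it must additionally rule out edges incident to the second point $w_1$ of $\vpair$, not only to $w_3$. Second, your claim that the common boundary of the two candidate cones of $p$ points ``along the direction of $\ov{v_1v_2}$'' is stated with $v_1,v_2$ denoting the points of the leaf-pair; the relevant boundary is parallel to the \emph{parent} pair's segment (the projection direction), not to $\vpair$ itself, which differs by $\theta/2$ and would misplace the blocking point if carried through literally.
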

\begin{proof}
  See Figure~\ref{fig:leftleaf} for an illustration. Suppose $\vpair =(w_3, w_1)$. First, we consider the case that $\pair$ and $\vpair$ share one point. Let  $\pair = (w_1, w_2)$. Then, $\ntree_{\pair} \cap \ntree_{\vpair} = w_1$. We should prove that for any $p \in \ntree_{\pair} - w_1$ (i.e., $p = p^{(1)}$ in Figure~\ref{fig:leftleaf}), there is no edge $pw_3$ in Yao-Yao graph.  Let $(\eta_1, \eta_2)$ be the pair in $\gadget_{\pair}$  closest to $w_1$. There is no edge $\ov{w_3p}$ in the Yao-step since $\eta_2$ and $p$ are in the same cone of $w_3$ and $|\eta w_3| < |p w_3|$.  Note that $\eta_2 w_3$ is a hinge connection. If directed edge $\ov{pw_3}$ is accepted in the Yao-step, $\ov{p w_3}$ cannot be accepted in the reverse-Yao step since $\ov{\eta_2 w_3}$ exists in the Yao-step, and $\eta_2$ and $p$ are in the same cone of $w_3$ and $|\eta_2 w_3| < |p w_3|$. Thus, there is no long range connection between $\ntree_{\pair}$ and $\vpair$.
  
  Second, we consider the that $\pair$ and $\vpair$ do not overlap, i.e., Then, $\ntree_{\pair} \cap \ntree_{\vpair} = \emptyset$.   W.l.o.g., let $\pair = (w_2, w_4)$ and $p \in \pair$ (i.e., $p = p^{(2)}$). Similar to the first case, we can prove that there is no long range connection $pw_3$. Then we consider the point $w_1$. There is edge from $w_1$ to $p$ in the Yao-step since $\eta_1$ and $p$ are in the same cone of $w_1$ and $|w_1\eta_1| < |w_1p|$. Besides, if directed edge $\ov{pw_1}$ is accepted in the Yao-step, $\ov{p w_1}$ cannot be accepted in the reverse-Yao step since $\ov{\eta_1 w_1}$ exists in the Yao-step and $|\eta_1 w_1| < |pw_1|$.
    \end{proof} 
  \eat{ 
   W.l.o.g., suppose $ \pair = (w_1, w_2)$ and $\vpair =(w_3, w_1)$. Then, $\ntree_{\pair} \cap \ntree_{\vpair} = w_1$. We should prove that for any $p \in \ntree_{\pair} - w_1$, there is no edge $pw_3$ in Yao-Yao graph.  Let $\eta$ be the point in $\gadget_{\pair}$  closest to $w_1$. There is no edge $\ov{w_3p}$ in the Yao-step since $\eta$ and $p$ are in the same cone of $w_3$ and $|\eta w_3| < |p w_3|$.  Note that $\eta w_3$ is a hinge connection. If directed edge $\ov{pw_3}$ is accepted in the Yao-step, $\ov{p w_3}$ cannot be accepted in the reverse-Yao step since $\eta w_3$ exists in the Yao-step, and $\eta$ and $p$ are in the same cone of $w_3$ and $|\eta w_3| < |p w_3|$. Thus, there is no long range connection between $\ntree_{\pair}$ and $\vpair$.
   }

\begin{figure}[t]
  \centering
    \label{fig:case1leaf}
    \includegraphics[width = 0.75\textwidth]{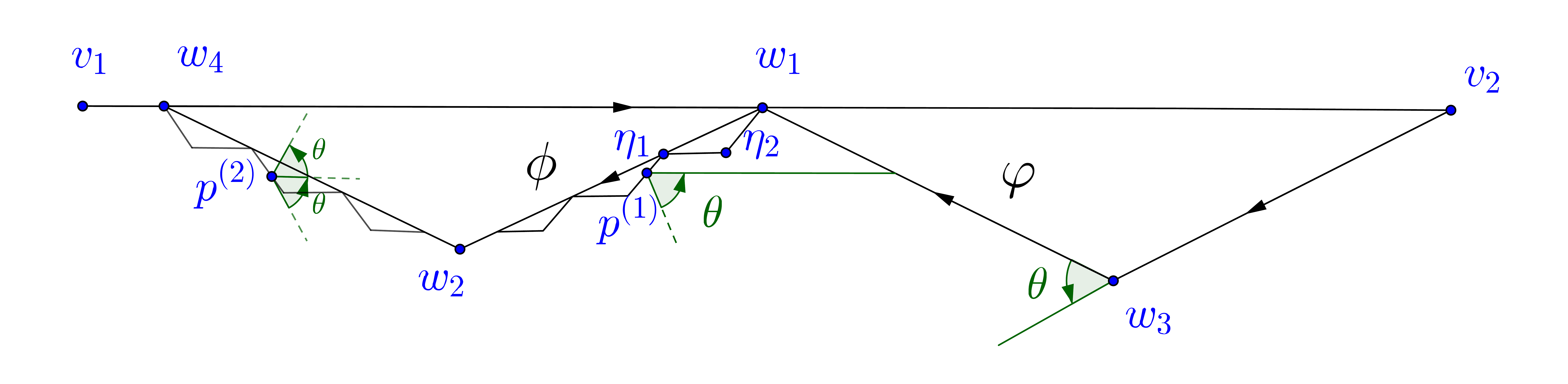}
  \caption{$\pair$ and $\vpair$ are sibling pairs such that $\pair \prec \vpair$. $\vpair$ is  a leaf-pair. There is no long range connection between $\ntree_{\pair}$ and the points of $\ntree_{\vpair}$ (i.e., $\vpair$ itself).}
  \label{fig:leftleaf}
\end{figure}

%Then, we discuss the possible long range connections between $\ntree_{\pair}$ and $\ntree_{\vpair}$.
Given a pair $(v_1, v_2)$ with child-pair set $\sset$, consider two sibling pairs $\pair$
and $\vpair$ in $\sset$ where $\pair = (w_1, w_2)$.  For convenience, let $\angle
u_1u_2$  be the polar angle of vector $u_1u_2$. Let  $\vangle(u_1u_2, v_1v_2)$  be $\angle
v_1v_2 - \angle u_1u_2 $, i.e., the angle from $u_1u_2$ to $v_1v_2$ in the counterclockwise direction.

Recall that there are two kinds of normal points according to the definition of gadget: partition
points and apex points. According to the type of point $w_1$ and the relative
position between $\pair = (w_1, w_2)$ and $(v_1,v_2)$, there are four cases: (1) $w_1$ is a \onseg\ point and
$\pair$ is on the right side of $v_1v_2$, (2) $w_1$ is an \outseg\ point and $\pair$ is on the left
side of $v_1v_2$,  (3) $w_1$ is a \onseg\ point and $\pair$ is on the left side of
$v_1v_2$, (4) $w_1$ is an \outseg\ point and $\pair$ is on the right side of $v_1v_2$.  See
Figure~\ref{fig:left} and~\ref{fig:right} for illustrations.

We prove the possible long rang connections between the points of $\ntree_{\pair}$ and
$\ntree_{\vpair}$ case by case.
Lemma~\ref{lm:left} covers  case (1) and case (2) which satisfy the condition
$\vangle(v_1v_2, w_2w_1) =  \theta/2 $. Lemma~\ref{lm:right} covers case (3) and case (4) which
satisfy the condition $\vangle(v_1v_2, w_2w_1) = -\theta/2$.

\begin{lemma1}
  \label{lm:left}
  Given a pair $(v_1, v_2)$ with child-pair set $\sset$, consider two sibling pairs $\pair$
  and $\vpair$ in $\sset$ where $\pair = (w_1, w_2)$. $\pair$ and $\vpair$ are at
  $\level{l}$ for $l \leq m-1$. Suppose point $p$ belongs to $\ntree_{\pair}$ and $q$ belongs to
  $\ntree_{\vpair}$.  If  $\vangle(v_1v_2, w_2w_1) = \theta/2$ and there is a directed edge from $p $ to $q$ in   $\YY{2k+1}(\normp)$, then
  $\vangle(v_1v_2, pq) = 0$ (i.e., $pq$ is parallel to $v_1v_2$), and $q$ is a point in the gadget
  $\gadget_{\vpair}$ generated by $\vpair$.
\end{lemma1}

\begin{proof}

\begin{figure}[t]
\captionsetup[subfigure]{justification=centering}
  \centering
  \begin{subfigure}{0.8\textwidth}
      \includegraphics[width = 1\textwidth]{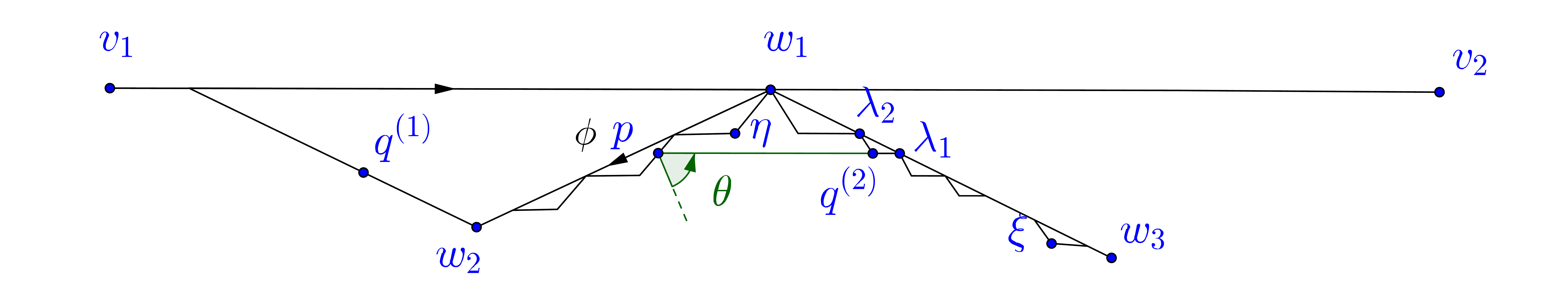}
  \caption{Case 1: $w_1$ is the \onseg\  point and $\pair$ is on the right side of $v_1v_2$}
    \label{fig:case1}
    \end{subfigure}

    \begin{subfigure}{0.8\textwidth}
    \includegraphics[width = 1\textwidth]{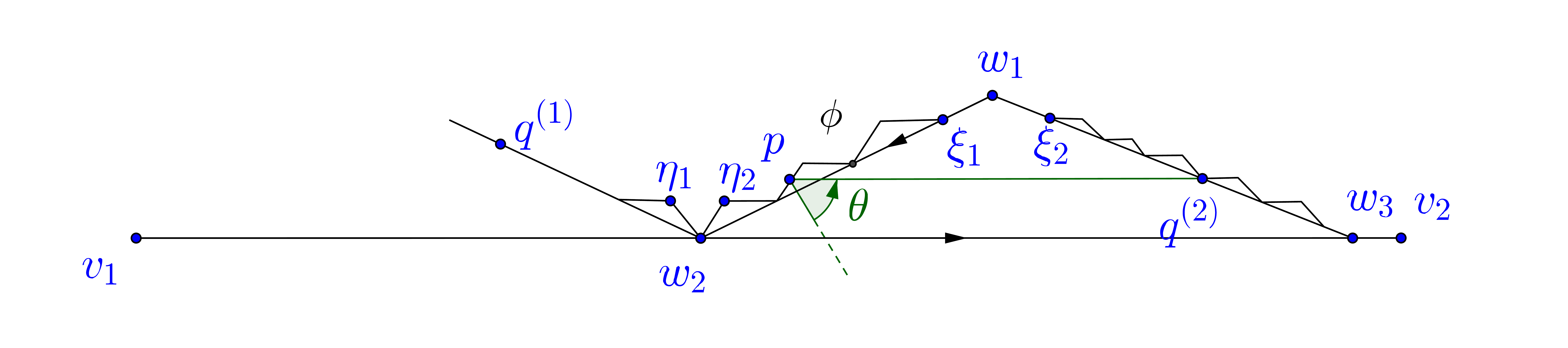}
  \caption{Case 2: $w_1$ is the \outseg\ point  and $\pair$ is on the left side of $v_1v_2$}
    \label{fig:case4}
    \end{subfigure}
  \caption{The two cases about $\vangle(v_1v_2, w_2w_1) = \theta/2$. Here $\pair = (w_1, w_2)$ and  $p \in \ntree_{\pair}$. }
  \label{fig:left}
\end{figure}
 As we have discussed  above, there are two cases under the conditions.   Consider case (1). See
 Figure~\ref{fig:case1}. First, we prove that $\vangle(v_1v_2, pq) $
  should belong to $(-\theta/2, 0]$.   If $q$ belongs to $\ntree_{\vpair}$  and $\vpair \prec
  \pair$ (i.e., $q = q^{(1)} $ in Figure~\ref{fig:case1}),  $w_2$ and $q$  are in the same cone of
  $p$. There is   no edge   from $p$ to $q$  since $|pw_2| < |pq|$ and the edge
  $pq$ is rejected in the Yao-step.  Then
  consider that $q$ belongs to $\ntree_{\vpair}$ and $\vpair \succ  \pair$ (i.e., $q = q^{(2)} $ in
  Figure~\ref{fig:case1}).   According to Observation~\ref{obs:varphi}, we safely assume that $\vpair$ is an
  internal-pair. Denote the point in $\gadget_{\pair}$ closest to $w_1$  by $\eta$. If
  $\vangle(v_1v_2, pq) > 0$, $\eta$ and $q$ are in the same  cone of $p$ since   $w_1\eta$ has the
  maximum length among its  sibling pairs according to Corollary~\ref{cor:p1}. Thus,  there is no
  edge from $p$ to $q$ in the Yao-step since $|p\eta| < |pq|$. Thus, $\vangle(v_1v_2, pq) \in
  (-\theta/2, 0]$. Then, we prove that $\vangle(v_1v_2, pq) = 0$. Suppose the projection point of
  $p$ to pair $\vpair$  is $\lambda_1$ (the $\lambda_1$ must exist according to the
  projection process) and $q^{(2)}$ is an apex point of the piece  $\lambda_1\lambda_2$.   Note that $\theta <
  \pi/3$ for $k \geq 3$ and the maximum length among child internal-pairs  of $\vpair$  is at most twice
  longer than the minimum one (see Property~\ref{prop:p2}).  It is not difficult to check that the point closest to $p$ in cone $C_{p}(-\theta, 0]$
  is $q^{(2)}$. Thus, $q = q^{(2)}$ and $pq$ is parallel to   $v_1v_2$.

  Note that there is a degenerated case in which the projection $\lambda_1$ is an end point of
  an empty  piece. Thus, we do not generate the corresponding apex point $q$.  See
  Figure~\ref{fig:dege1} for an  illustration. $(\xi_2,\xi_1)$ is the pair closest to $\lambda_1$.
  Note that for $k \geq 3$, the angle $ \angle p \lambda_1\xi_2 > \pi/2$ and $(\xi_1,
  \xi_2) $ is a leaf. Thus, in this degenerated case, the  point  closest to $p$ in cone
  $C_{p}(-\theta, 0]$ is  $\lambda_1$. $p\lambda_1$ is also parallel to $v_1v_2$.  Thus,  the lemma is still true.
  We can process  the degenerated case in the same framework in the following and do not distinguish the degenerated
  case particularly.

\begin{figure}[t]
\captionsetup[subfigure]{justification=centering}
  \centering
  \begin{subfigure}{0.5\textwidth}
  \includegraphics[width = 1\textwidth]{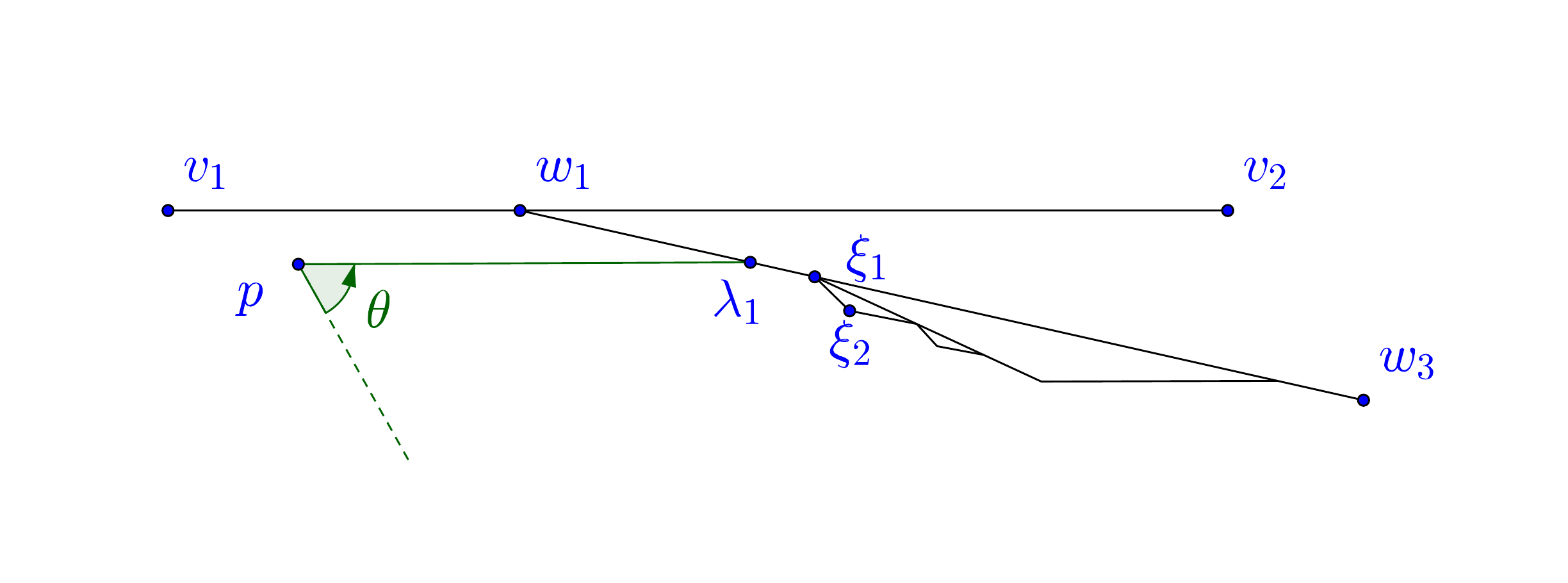}
  \caption{The degenerated case of case 1 in Lemma~\ref{lm:left}}
    \label{fig:dege1}
    \end{subfigure}%
  \begin{subfigure}{0.5\textwidth}
   \includegraphics[width = 1\textwidth]{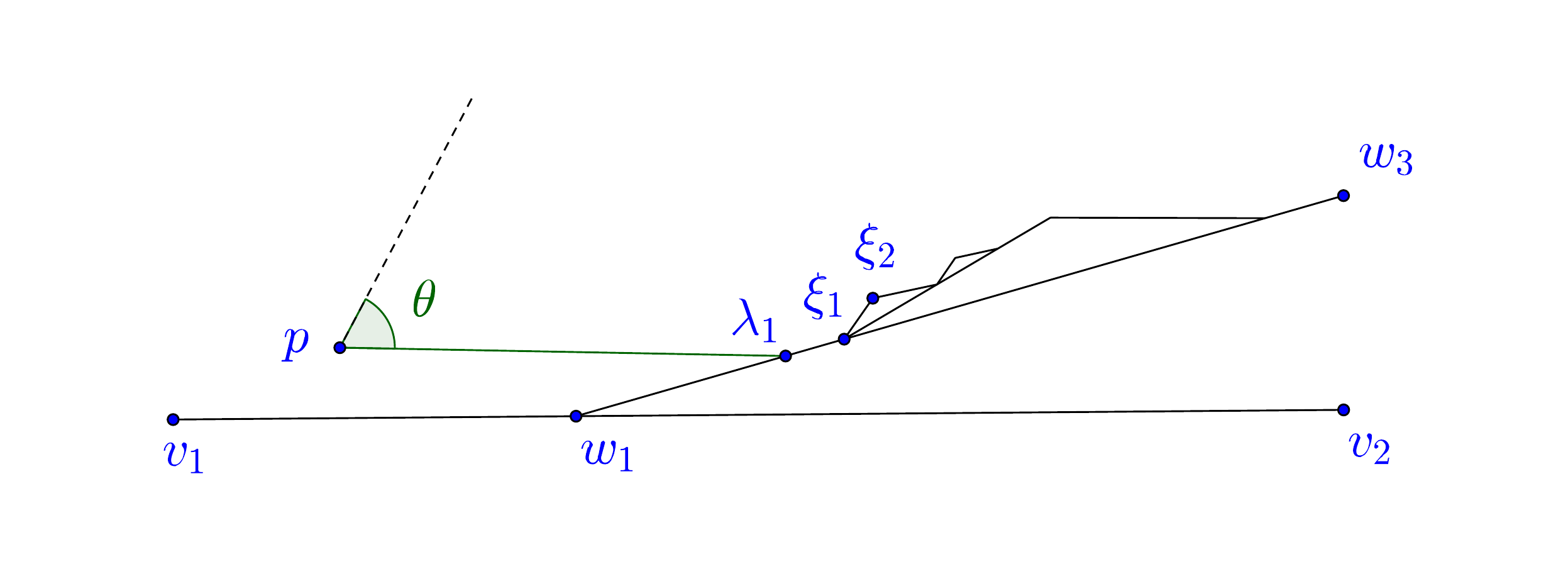}
  \caption{The degenerated case of case 3 in Lemma~\ref{lm:right}}
    \label{fig:dege2}
   \end{subfigure}
  \caption{The degenerated cases in which the projection point of $p$ is an isolated partition point, i.e., $\lambda_1$ in the figure is an isolated partition point which is incident on a short piece. }
  \label{fig:dege}
\end{figure}

  Consider  case (2). See Figure~\ref{fig:case4}. Suppose $\eta_2$ is the apex point of the near-empty
  piece of $\pair$ incident on $w_2$. Note that $w_2\eta_2$ has the maximum length among its sibling
  pairs.  If $q$ belongs to $\ntree_{\vpair}$ and   $\vpair \prec \pair$ (i.e., $q =
  q^{(1)}$ in Figure~\ref{fig:case4}), $\eta_2$ and   $q$ are in the   same cone of $p$ and
  $|\eta_2p| < |qp|$.  Thus, there is no edge from $p$ to $q$ in the Yao-step. Then consider that  $q$
  belongs to  $\ntree_{\vpair}$ and $\vpair \succ \pair$ (i.e., $q =  q^{(2)}$ in
  Figure~\ref{fig:case4}).   According to Observation~\ref{obs:varphi}, we assume that $\vpair$ is an
  internal-pair.  If $\vpair \succ \pair$,
  the polar angle of $pq$ should belong to $( -\theta/2,0]$. If not, $w_1$ and $q$ are in the same
  cone. Thus, there is no edge from $p$ to $q$ in the Yao-step since $|pq| > |pw_1|$.  Then we prove that  $pq$
  is parallel to $v_1v_2$. The point closest to $p$ in the cone $C_{p}(-\theta, 0]$  is the
  projection point of $p$ ($p$ must exist because of the projection). Thus, $pq$ is parallel to
  $v_1v_2$.
\end{proof}

  \begin{lemma1}
    \label{lm:right}
    Given a pair $(v_1, v_2)$ with child-pair set $\sset$, consider two sibling pairs $\pair$
    and $\vpair$ in $\calS$ where $\pair = (w_1, w_2)$. Suppose $\pair$ and $\vpair$ are at
    $\level{l}$ for $l \leq m-1$. Suppose point $p$ belongs to  $\ntree_{\pair} $, and $q$
    belongs to  $\ntree_{\vpair}$. If $\vangle(v_1v_2, w_2w_1) =  - \theta/2$ and there is a directed
    edge from $p $ to $q$ in   $\YY{2k+1}(\normp)$ , then $\vangle(v_1v_2, pq) \in
    (0,\theta/2)$. Moreover, there exists a point $r$ in $\ntree_{\vpair}$ such that $pr$ is
    parallel to $v_1v_2$ and $|pr| < |pq|$. Moreover,  $r$ is a point in the gadget $\gadget_{\vpair}$
    generated by $\vpair$.
  \end{lemma1}

  \begin{figure}[t]
  \captionsetup[subfigure]{justification=centering}
    \centering
     \begin{subfigure}{0.8\textwidth}
      \includegraphics[width = 1\textwidth]{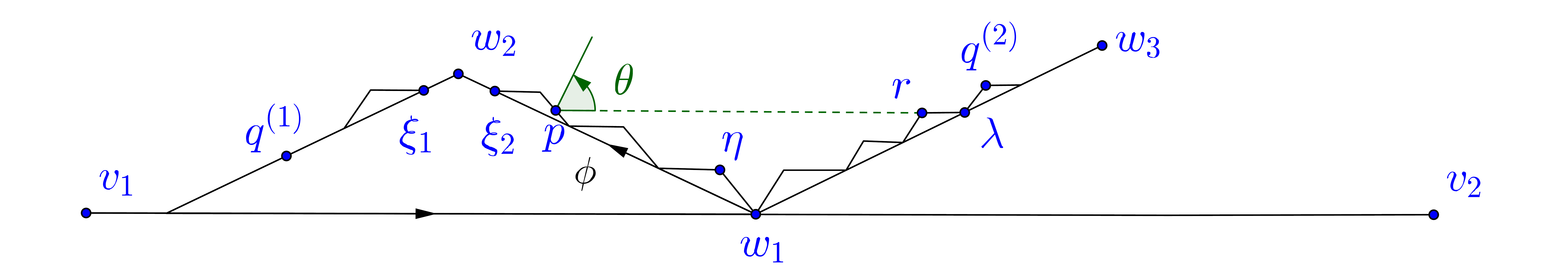}
        \caption{Case 3: $w_1$ is the \onseg\ point and $\pair$ is on the left side of $v_1v_2$}
      \label{fig:case2}
      \end{subfigure}

      \begin{subfigure}{0.8\textwidth}
      \includegraphics[width = 1\textwidth]{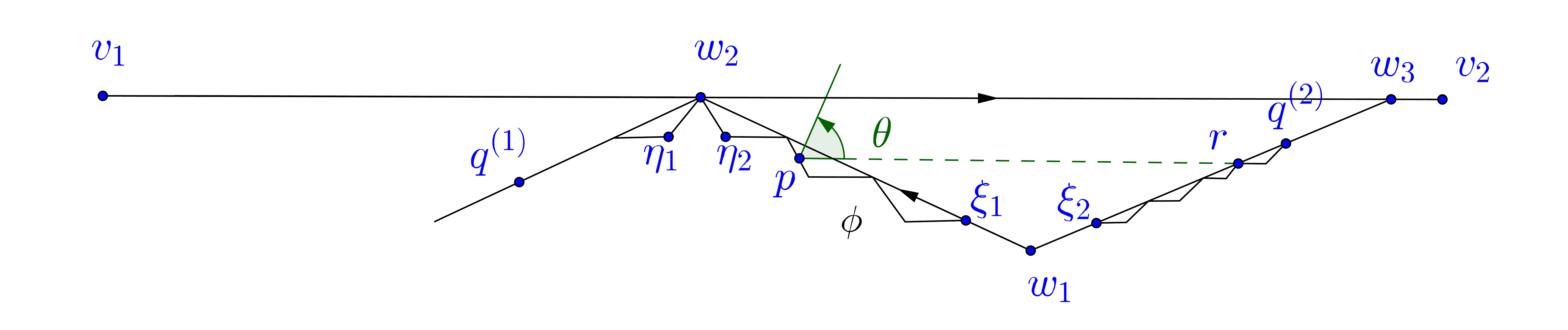}
        \caption{Case 4: $w_1$ is the \outseg\ point and $\pair$ is on the right side of $v_1v_2$}
      \label{fig:case3}
       \end{subfigure}
    \caption{The two cases about $\vangle(v_1v_2, w_2w_1) = -\theta/2$. Here $\pair = (w_1, w_2)$ and $p \in \ntree_{\pair}$. }
    \label{fig:right}
  \end{figure}

  \begin{proof}
    As we have discussed above,   case (3) and (4) satisfy the condition
    $\vangle(v_1v_2,  $ $w_2w_1) = - \theta/2$.  Suppose $q$ belongs to
    $\ntree_{\vpair}$.   Consider case (3). See Figure~\ref{fig:case2}. Suppose $w_2\xi_1$ and
    $w_2\xi_2$ are the two empty pieces incident on $w_2$.  If $q$  is in $\ntree_{\vpair}$ and $\vpair \prec
    \pair$ (i.e., $q = q^{(1)}$ in Figure~\ref{fig:case2}),  $\xi_1$  and $q$  are in the same cone of
    $p$.  If $q$ is not $\xi_1$, there is no edge from $p$ to $q$ even in the Yao-step  since $|p\xi_1| <
    |pq|$. If $q$ is  $\xi_1$, $p\xi_1$ would not be accepted by $\xi_1$ in the reverse-Yao step, since
    there is an edge   from $\xi_2$ to
    $\xi_1$ and $|\xi_1\xi_2| < |p\xi_1|$. Then consider that $q$ (i.e., $q = q^{(2)}$ in
    Figure~\ref{fig:case2}) is in  $\ntree_{\vpair}$ and   $\vpair  \succ \pair$.
     According to Observation~\ref{obs:varphi}, we safely assume that $\vpair$ is an
     internal-pair.   Thus,  $\vangle(v_1v_2, pq) \in  [-\theta/2,\theta/2)$. If $\vangle(v_1v_2, pq) \in  [-\theta/2,0]$,
    $pq$ is not a directed edge in Yao-step since $w_1$ and $q$ are in the same cone and $|w_1p| < |pq|$.
    Finally, consider the projection point $\lambda$ ($\lambda$ exists because of the projection) of
    $p$ to pair $\vpair$. $r$ is the apex point
    related to $\lambda$ and on the segment $p\lambda$.  It is not difficult to check that $|pr| <
    |pq|$ since $\theta/2 \leq \pi/2$ and  the maximum length among the  non-\ttype\ pieces  of
    $\vpair$  is at most twice longer the minimum one (according to refinement).  Similar to
    case 1 in Lemma~\ref{lm:left}, these is a degenerated case that $\lambda$ is the end point of an empty  piece.  See
    Figure~\ref{fig:dege2}. In this case, it is not difficult to check  $|p\lambda| < |pq|$.

    % There is a
    % degenerated case in which the projection point is the apex point of $\vpair$ (see the point $w_3$
    % in Figure~\ref{fig:case2}). Note that in this case long range connection $pq$ does not exist since
    % the cone is half-open.

    Consider case (4). See Figure~\ref{fig:case3}. Suppose $\eta_1$ and $\eta_2$ are the apex points of
    the \stype\ pieces incident on $w_2$.  If $q$ is in $\ntree_{\vpair}$ and  $\vpair \prec
    \pair$ (i.e., $q = q^{(1)}$ in Figure~\ref{fig:case3}), $\eta_1$ and  $q$ are in the same cone of
    $p$. If $q$ is not $\eta_1$, there is no edge from $p$ to $q$ in the Yao-step since
    $|p\eta_1| < |pq|$. If $q$  is $\eta_1$, $p\eta_1$ would not be accepted by $\eta_1$ in the
    reverse-Yao step since there is an edge
    from $\eta_2$ to $\eta_1$ and $|\eta_1\eta_2| < |p\eta_1|$. Then consider $q$ is in $\ntree_{\vpair}$
    and $ \vpair \succ \pair$ (i.e. $q = q^{(2)}$ in Figure~\ref{fig:case3}). Based on
    Observation~\ref{obs:varphi}, we assume that $\vpair$ is an   internal-pair.
    The polar angle of $pq$ should belong to $(0, \theta/2)$. If not, $w_1$ and $q$ are in the same
    cone. Thus, there is no  edge from $p$ to $q$ since $|pq| > |pw_1|$. Finally, consider the
    projection point $r$ of $p$   ($r$ must exist because of the projection) to pair $\vpair$. $|pr|
    < |pq|$ and $pr$ is parallel to $v_1v_2$ since $\theta/2 \leq \pi/2$.
  \end{proof}

  In the next section, we  discuss how to cut  such long range connections. Roughly speaking, under
  the condition of Lemma~\ref{lm:left}, we can cut the long range connection $pq$ through adding two
  auxiliary points close to $q$. Under the condition of Lemma~\ref{lm:right}, we can  cut the long
  range connection $pq$ through adding  two auxiliary points close to $r$.

  Based on Lemma~\ref{lm:left} and~\ref{lm:right}, we have the following corollary.
  \begin{cor}
    \label{cor:nocycle}
    Consider two sibling pairs $\pair$ and $\vpair$ with subtrees $\ntree_{\pair}$ and
    $\ntree_{\vpair}$ respectively. Suppose $p$ belongs to $\ntree_{\pair}$ and $q$ belongs to
    $\ntree_{\vpair}$. If directed edge $\ov{pq}$ is in $\YY{2k+1}(\normp)$, then $\pair \prec \vpair$.
  \end{cor}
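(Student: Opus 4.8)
The plan is to obtain this corollary directly from Lemmas~\ref{lm:left} and~\ref{lm:right} by contraposition, so almost all of the work is already done. First I would recall that the four configurations enumerated just before Lemma~\ref{lm:left}---namely ($w_1$ a partition point with $\pair$ on the right of $v_1v_2$), ($w_1$ an apex point with $\pair$ on the left), ($w_1$ a partition point with $\pair$ on the left), and ($w_1$ an apex point with $\pair$ on the right)---exhaust every possible relative position of a sibling pair $\pair=(w_1,w_2)$ with respect to its parent $(v_1,v_2)$, since $w_1$ is always either a partition or an apex point and $\pair$ always lies on one of the two sides. These four are sorted by the two lemmas according to whether $\vangle(v_1v_2,w_2w_1)=\theta/2$ (Cases 1--2, Lemma~\ref{lm:left}) or $\vangle(v_1v_2,w_2w_1)=-\theta/2$ (Cases 3--4, Lemma~\ref{lm:right}). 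Hence any candidate edge $\ov{pq}$ with $p\in\ntree_{\pair}$ and $q\in\ntree_{\vpair}$ falls under the hypotheses of exactly one of the two lemmas. When $\vpair$ is a leaf-pair, Observation~\ref{obs:varphi} already rules out a long range connection, so it suffices to treat internal $\vpair$, which is exactly the setting to which the lemma proofs reduce.

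Next I would extract from the two lemma proofs the subcase $\vpair \prec \pair$ (the point labelled $q^{(1)}$ in Figures~\ref{fig:case1}--\ref{fig:case3}). In each of the four cases the proof exhibits a witness point lying in the same cone of $p$ as $q$ but strictly closer to $p$: the point $w_2$ in Case 1, the near-empty-piece apex $\eta_2$ in Case 2, and the points $\xi_1$, $\eta_1$ incident on $w_2$ in Cases 3 and 4. This witness rejects $\ov{pq}$ already in the Yao-step whenever $q$ is distinct from it, and Cases 1--2 are entirely of this type. In Cases 3--4 there is additionally the boundary situation $q=\xi_1$ (resp.\ $q=\eta_1$); there the edge is discarded in the Reverse-Yao step, because the incoming edge to $q$ from its sibling $\xi_2$ (resp.\ $\eta_2$) is shorter than $pq$. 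In all four cases, therefore, no directed edge $\ov{pq}$ survives in $\YY{2k+1}(\normp)$ once $\vpair \prec \pair$.

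Taking the contrapositive finishes the proof: if $\ov{pq}$ is present in $\YY{2k+1}(\normp)$ with $p\in\ntree_{\pair}$ and $q\in\ntree_{\vpair}$, then $\vpair \prec \pair$ is impossible, and since $\prec$ is a total order on the children of $(v_1,v_2)$ (Definition~\ref{defn:order}) the only remaining possibility is $\pair \prec \vpair$. The main point to check carefully---rather than a genuine obstacle---is exhaustiveness, i.e.\ that every sibling pair realizes exactly one of the four configurations and hence exactly one sign of $\vangle(v_1v_2,w_2w_1)$, and that the ``$\vpair\prec\pair\Rightarrow$ no edge'' conclusion was indeed obtained uniformly across the cases. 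This one-sidedness---an edge between the two subtrees can only point from the earlier sibling to the later one---is precisely where the odd-graph asymmetry of Property~\ref{prop:cone} is being used, since it is what places the rejecting witness reliably on the $q^{(1)}$ side.
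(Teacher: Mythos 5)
Your proposal is correct and follows the paper's intended route: the paper offers no separate proof, asserting only that the corollary is ``based on'' Lemmas~\ref{lm:left} and~\ref{lm:right}, and the content you extract---that the $q^{(1)}$ subcases of those lemmas' proofs reject every directed edge $\ov{pq}$ with $\vpair\prec\pair$ (via a closer witness in the same cone in the Yao-step, or the shorter sibling edge in the reverse-Yao step for the boundary points), so the contrapositive together with the totality of $\prec$ forces $\pair\prec\vpair$---is exactly what that assertion relies on. One cosmetic remark: your appeal to Observation~\ref{obs:varphi} for leaf $\vpair$ is not quite the right citation (it treats $\pair\prec\vpair$ with the \emph{later} pair a leaf, not $\vpair\prec\pair$), but it is also unnecessary, since the $q^{(1)}$ arguments you quote nowhere use that $\vpair$ is internal.
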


  %%% Local Variables:
  %%% mode: latex
  %%% TeX-master: "odd"
  %%% End:

\section{The Positions of Auxiliary Points}
\label{sec:aux}

We discuss how to use the auxiliary points to cut the long range connections in the Yao-Yao
graph $\YY{2k+1}(\normp)$. According to Claim~\ref{claim:nolong}, it is sufficient by cutting all
long range connections between siblings.  Denote the set of auxiliary points  by $\auxp$. Let
$\calP_m = \normp \cup \auxp$.

First, we consider a simple example to see how auxiliary points work. Consider three points $u$,
$v$ and $w$. Line $uv$ is horizontal, and $\angle wvu = \angle wuv
=\theta/2$. The point $\xi_1$ and $\xi_2$ are two points on segment $uw$ and $vw$
respectively. $\xi_1\xi_2$ is horizontal.  See Figure~\ref{fig:longtoy}.  Note that the polar angles
of a cone in the Yao-Yao graph belong to a half-open
interval  in the counterclockwise direction. Thus,  $uv$ is  in the $\YY{2k+1}$ graph, which is the shortest path
between $u$ and $v$.  However, we can add an auxiliary point $r$ close to  $ v$  and $\angle rvu
< \theta/2$. Then according to the definition of Yao-Yao graphs, the point $v$ rejects the edge $uv$
in the reverse-Yao step since $rv$ exists in the Yao-step, and point $r$ and $u$ are in
the same cone of $v$ and $|rv| < |vu|$. Then, consider $ur$ and $r \xi_1$. The directed edge $ur$ is not in Yao graph since $\xi_1$ and $r$ are in the
same cone of $u$ and $|\xi_1u| < |ur|$. The directed edge $ru$ is not in  Yao graph since $\xi_1$
and $u$ are in the same cone of $r$ and $|\xi_1r| < |ur|$. Besides, directed edge $\xi_1r$ is not
in the Yao graph  since $r$ and $\xi_2$ are in the same cone of $\xi_1$ and $|\xi_1\xi_2| <
|\xi_1r|$. Finally, directed edge $r\xi_1$  is not accepted by $\xi_1$ in the reverse-Yao step since there is an edge $\xi_2\xi_1$ in the same cone of $r$ and $|\xi_2\xi_1| < |r\xi_1|$. Overall,
the shortest path between $uv$ becomes $u\xi_1\xi_2rv$.

\begin{figure}[t]
  \centering
  \includegraphics[width = 0.6\textwidth]{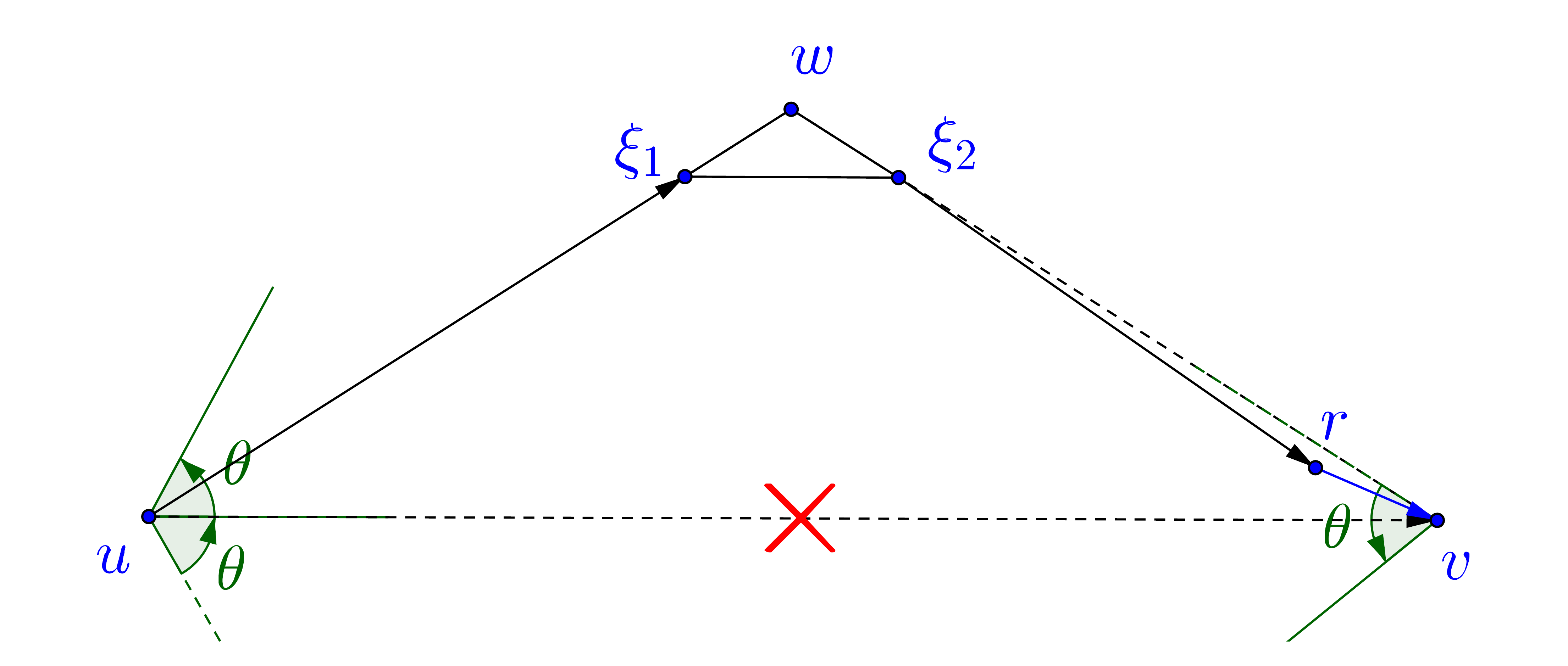}
  \caption{A simple example to explain how an auxiliary point cuts a long range connection.}
  \label{fig:longtoy}
\end{figure}

\topic{The positions of the auxiliary points}  
Inspired by the example in Figure~\ref{fig:longtoy}, we call the normal point closest to an auxiliary point the \emph{center} of the auxiliary point. Then, we find \emph{candidate centers} to add auxiliary points.

\begin{figure}[t]
\captionsetup[subfigure]{justification=centering}
  \centering
  \begin{subfigure}{0.5\textwidth}
    \includegraphics[width = 1\textwidth]{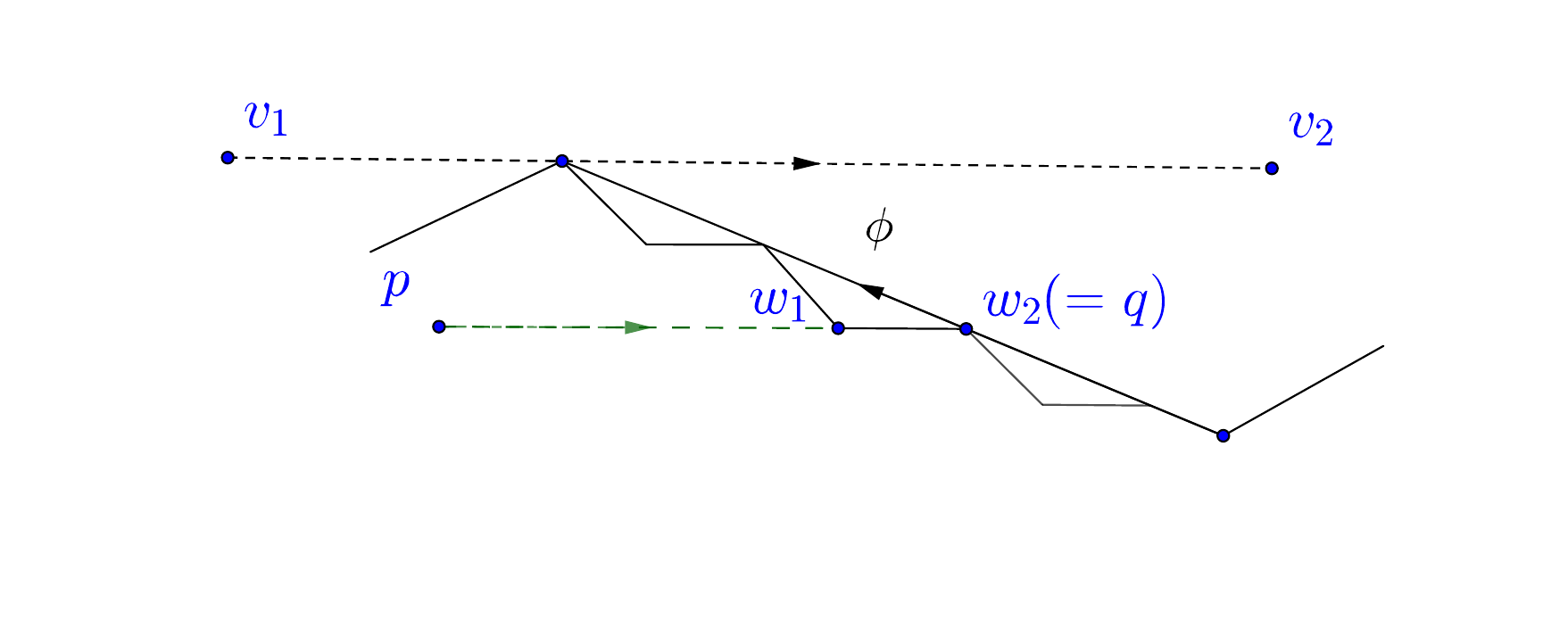}
    \caption{Case 1: the projected point $q$ is $w_2$.}
    \label{fig:center1}
  \end{subfigure}%
  \begin{subfigure}{0.5\textwidth}
    \includegraphics[width = 1\textwidth]{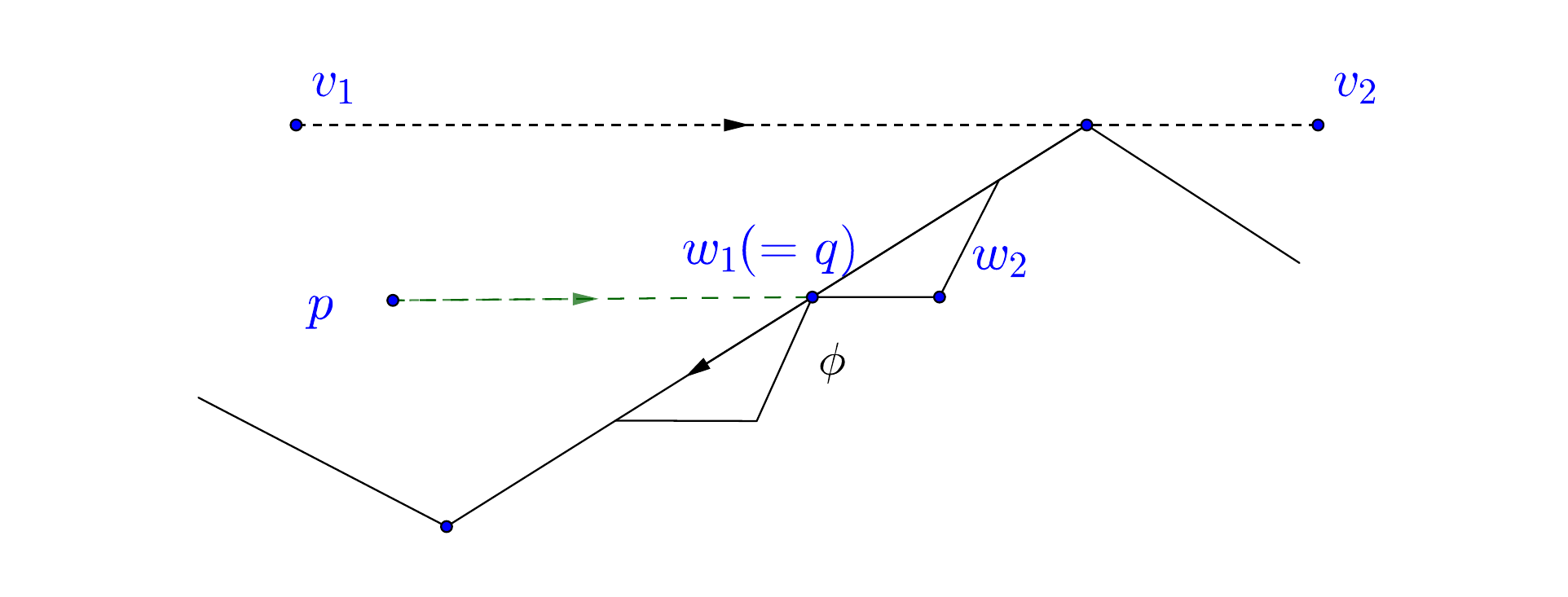}
    \caption{Case 2: the projected point $q$ is $w_1$.}
    \label{fig:center2}
  \end{subfigure}
\caption{The illustration for candidate center of $p$.}
\label{fig:center}
\end{figure}

\begin{lemma1}[Candidate center]
\label{lm:projexist}
Given a pair $(v_1,v_2)$ with its child-pair set $\sset$, consider two sibling pairs  $\vpair, \pair \in \sset$ and $\vpair \prec \pair$. Suppose $p$ is a point in $\ntree_{\vpair}$ and its projected point (denoted by $q$) on the segment of $\pair$ along direction $\ov{v_1v_2}$.  Then, there exists a nonempty subset   $\mathcal{S} \subseteq \gadget_{\pair}$  such that for any $u\in \mathcal{S}$, $pu$ is parallel to $v_1v_2$. See Figure~\ref{fig:center} for an illustration.

We call the point $u := \arg \min_{u \in \mathcal{S} }|pu| $ a \emph{candidate center} of $\pair$. Note that the candidate center may not be the projected point $q$.
\end{lemma1}
\begin{proof}
The correctness  directly results from the projection process.  In the first case (see Figure~\ref{fig:center1}) where the apex points of $\gadget_{\pair}$ and $p$ are in the same side of segment of $\pair$, we will generate a pair $(w_1, w_2)$ such that $q = w_2$ and  $pw_1$ and $pw_2$ are parallel to $v_1v_2$. Thus, $\mathcal{S} = \{w_1, w_2\}$ and we call the point $w_1$ a candidate center of $\pair$. Note that $w_1$ is not the projected point of $p$. In the second case (see Figure~\ref{fig:center2}) where  the apex points of $\gadget_{\pair}$ and $p$ are on the different sides of segment of $\pair$, we will generate a pair $(w_1, w_2)$ such that $q = w_1$ and  $pw_1$ and $pw_2$ are parallel to $v_1v_2$. Thus, $\mathcal{S} = \{w_1, w_2\}$ and we call the point $w_1$ a candidate center of $\pair$.  Beside, $q$ may be an isolated partition point or a point in $\pair$, then $\mathcal{S}$ and the candidate center is  point $q$ itself.
\end{proof}

\begin{defn}[Candidate center set of $\pair$]
Consider a pair $\pair$ with parent pair $(v_1, v_2)$. Let $\Phi$ be the set of child-pairs of $(v_1, v_2)$. In the projection process,
we project all points $p \in \bigcup_{\vpair \prec \pair, \vpair \in \Phi} \ntree_{\vpair}$ to the segment of $\pair$ along the direction $v_1v_2$. Each such point $p$ whose projected point falls inside the segment of $\pair$ corresponds to a candidate center of $\pair$ defined in Lemma~\ref{lm:projexist}. We call the set consisting of all these candidate centers the \emph{candidate center set} of $\pair$. Note that candidate center set is a subset of $\gadget_{\pair}$.
\end{defn}

\eat{

\begin{defn}[Candidate center set of $\pair$]
Consider a pair $\pair$ with parent pair $(v_1, v_2)$. In the projection process,
we project all points $p \in \bigcup_{\vpair \prec \pair} \ntree_{\vpair}$ to the segment of $\pair$ along the direction $v_1v_2$. Each such point $p$ whose projected point falls inside the segment of $\pair$ corresponds to a candidate center of $\pair$ defined in Lemma~\ref{lm:projexist}. We call the set consisting of all these candidate centers the \emph{candidate center set} of $\pair$. Note that candidate center set is a subset of $\gadget_{\pair}$.
\end{defn}

\begin{defn}[Candidate center]
\label{defn:qc}
Consider a pair $(v_1,v_2)$ and  the set $\sset$ of its child-pairs. Suppose $\pair, \vpair \in \sset$ and $\vpair \prec \pair$. $p$  is a point in $\ntree_{\vpair}$. Suppose one projection point of $p$
is in  $\ntree_{\pair}$. Thus, there exists a point $q$ closest to $p$ such that $q \in
\ntree_{\pair}$ and $pq$ is parallel to $v_1v_2$.   We call such a  point $q$ a \emph{candidate center} of auxiliary points.
\end{defn}

Note that if there is a point $p \in \ntree_{\vpair}$  which has a projection point in $\ntree_{\pair}$, then $\vpair \prec \pair$
according to Corollary~\ref{cor:nocycle}. Moreover, according to  Lemma~\ref{lm:left} and~\ref{lm:right},  each candidate center in $\ntree_{\pair}$ is a point in gadget $\gadget_{\pair}$.
Thus, 
}
% explain the intuition

We add some
auxiliary points centered on these candidate centers to break long range connection. 
For convenience, we define some parameters first.  Let $\Delta$ be the minimum distance between any two normal points and $n$ be the number of the normal points. Recall that we partition the root pair $\mu_1, \mu_2$ into
$d_0$ equidistant pieces.   Let $\gamma$ be a very small angle,
such as $\gamma = \theta d^{-1}_0$.  Let $ \sigma = \max \{ \sin (\theta/2 -\gamma) / \sin \gamma, \sin^{-1}(\theta/2 - \gamma) \}+ \epsilon$ for some small $\e > 0$. Let $\chi = d_0\sigma^{n} \Delta^{-1}$. Roughly speaking, $ \chi \gg d_0 >  \sigma >   1$.

We traverse $\ntree$ in the DFS preorder. Each time we reach a pair
$\pair$, we find all candidate centers  in $\gadget_{\pair}$ and add auxiliary points centered on
them.\footnote{Note that the candidate centers belong to $\gadget_{\pair}$, may not belong to $\pair$ itself. Besides, here we do not need to distinguish whether the candidate center related to a long range connection or not. It may reduce the number of auxiliary points but do not influent the correctness.
} Moreover, let the order of  $\pair$ in the
DFS preorder w.r.t. $\ntree$ be $\kappa$.  The distance between the auxiliary point and its center $q$ depends on $\kappa$.  We use the polar coordinate to
describe the relative location of an auxiliary point to its center.

\eat{
   After processing all points $q$, combining with Lemma~\ref{lm:left} and~\ref{lm:right}
and Claim~\ref{claim:nolong},  there  is no long range connections in $ \calP_{m}$.
}

\begin{figure}[t]
\captionsetup[subfigure]{justification=centering}
  \centering
  \begin{subfigure}{0.5\textwidth}
    \includegraphics[width = 1\textwidth]{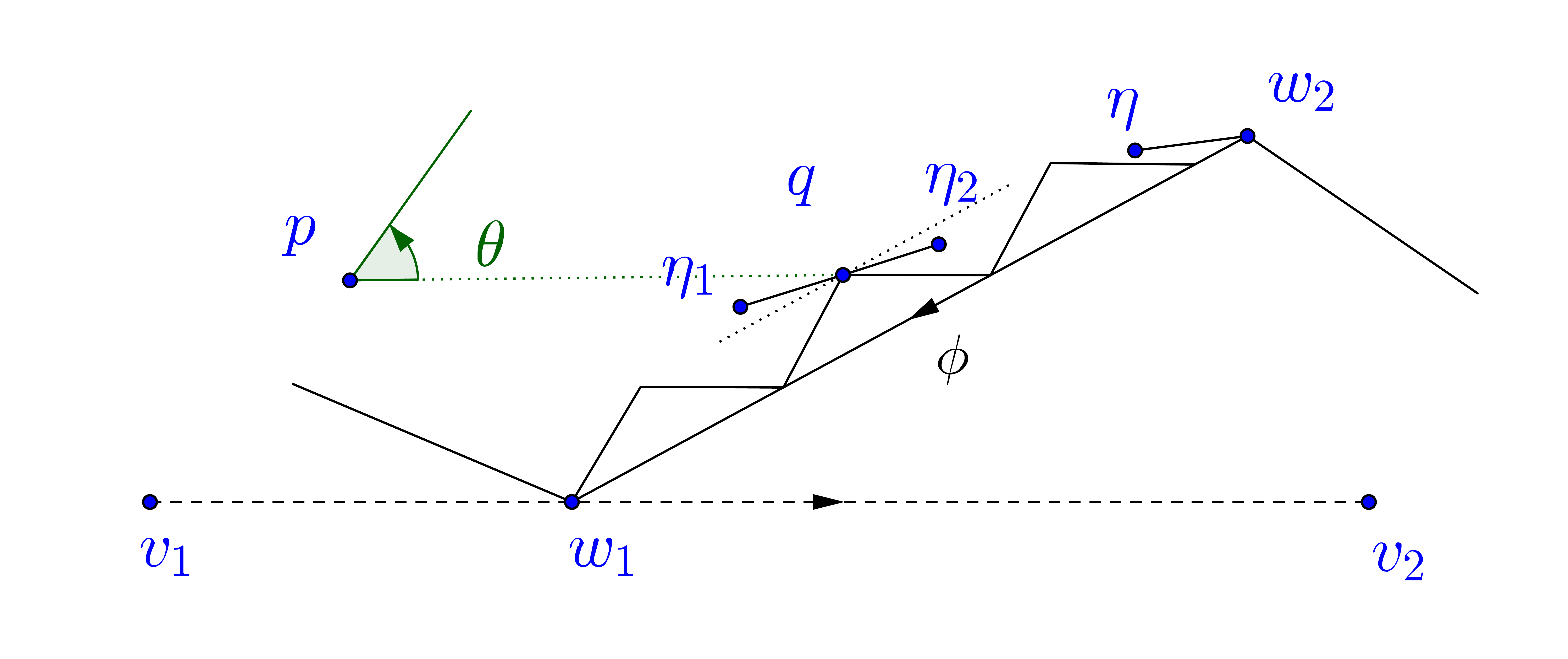}
    \caption{Case 1a: $w_1$ is the \onseg\ point and $\pair$ is on the  left side of
      $v_1v_2$}
    \label{fig:gcase2}
  \end{subfigure}%
  \begin{subfigure}{0.5\textwidth}
    \includegraphics[width = 1\textwidth]{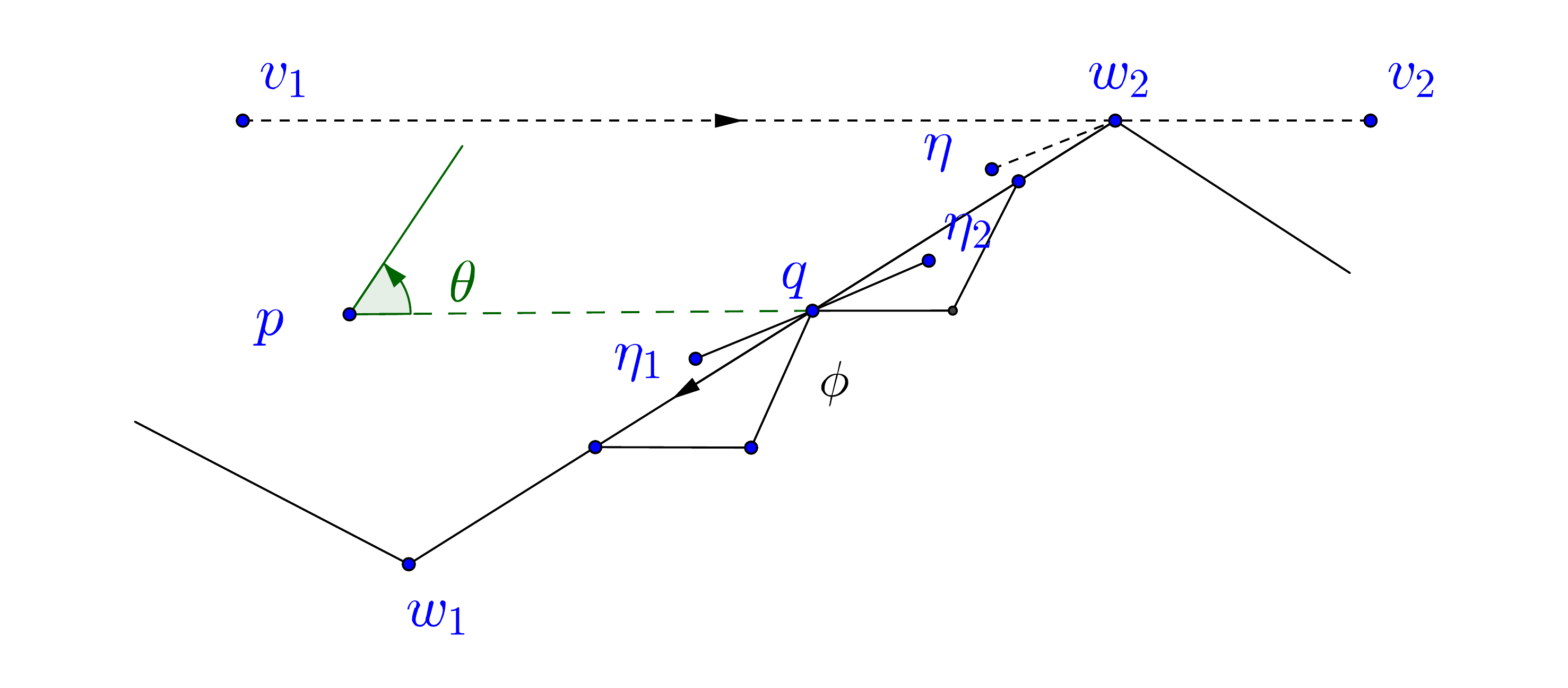}
    \caption{Case 1b: $w_1$ is the \outseg\ point and $\pair$ is on the right side of $v_1v_2$}
    \label{fig:gcase3}
  \end{subfigure}

  \begin{subfigure}{0.5\textwidth}
    \includegraphics[width = 1\textwidth]{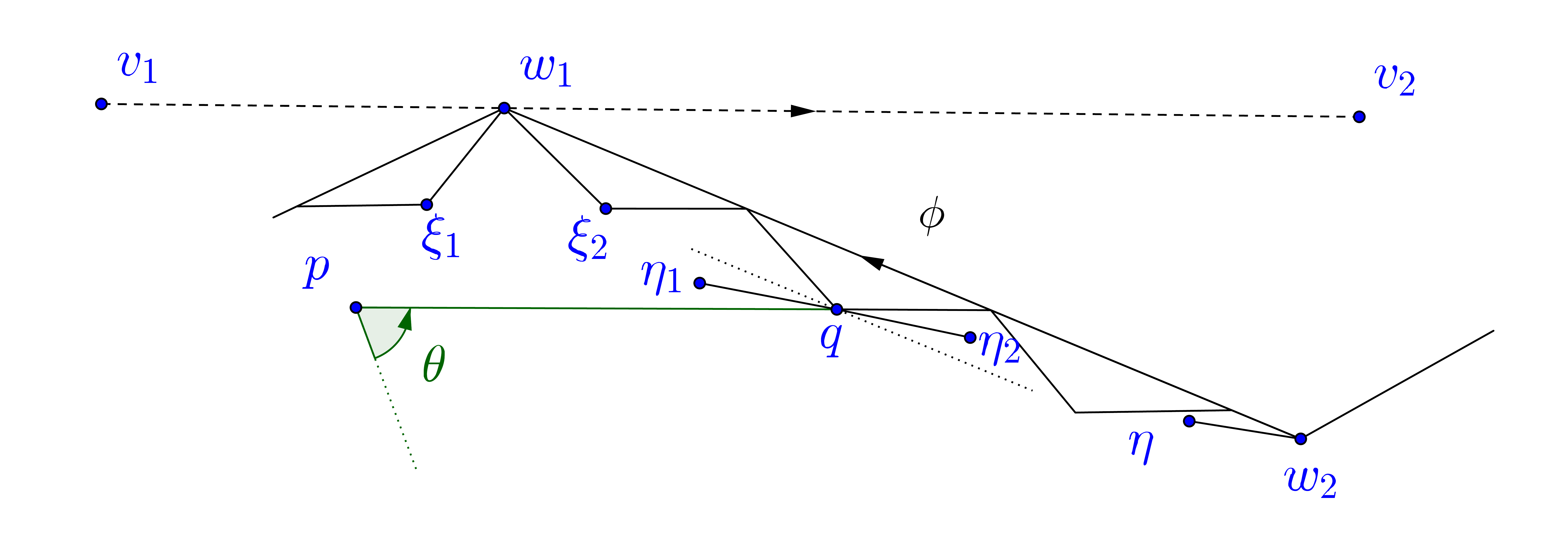}
    \caption{Case 2a: $w_1$ is the \onseg\  point and $\pair$ is on the  right side of
      $v_1v_2$}
        \label{fig:gcase1}
  \end{subfigure}%
  \begin{subfigure}{0.5\textwidth}
    \includegraphics[width = 1\textwidth]{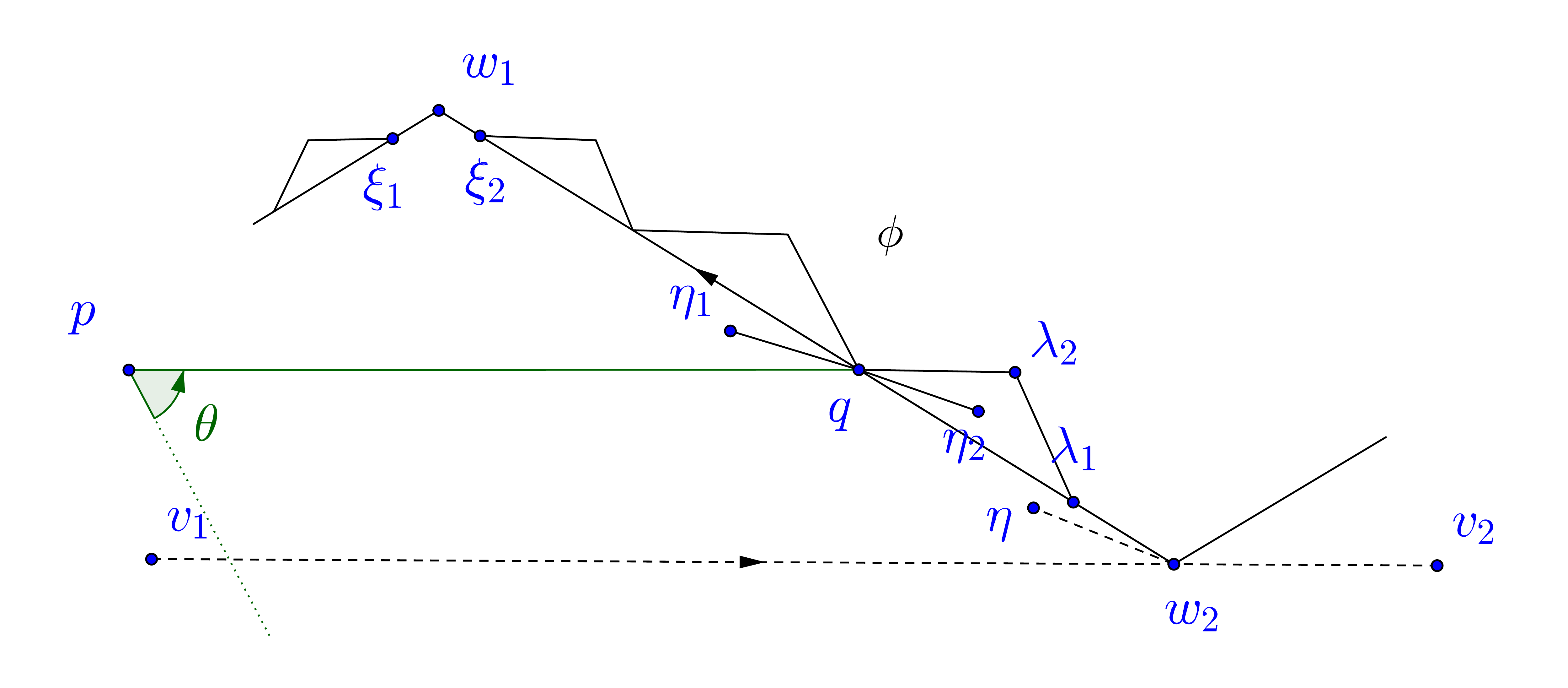}
    \caption{Case 2b: $w_1$ is the \outseg\ point  and $\pair$ is on the left side of $v_1v_2$}
    \label{fig:gcase4}
  \end{subfigure}
  \caption{The auxiliary points for each point. Here $\pair = (w_2, w_1)$ and $q \in \gadget_{\pair}$. $\eta_1$ and $\eta_2$ are two auxiliary points centered on $q$. Note that $|\eta_1 q|$ and $ |\eta_2 q|$ are very small in fact. This is just a diagram to explain the relative positions between $\{\eta_1, \eta_2\}$ and $q$.
   }
  \label{fig:auxp}
\end{figure}

Let $\pair = (w_2, w_1)$ and $(v_1, v_2)$ be the parent-pair of $\pair$. There are two cases according to  $\angle (v_1v_2, w_1w_2) =
\theta/2$ or $-\theta/2$.
\begin{itemize}
\item  $\angle (v_1v_2, w_1w_2) =  \theta/2$ (see Figure~\ref{fig:gcase2}
  and~\ref{fig:gcase3}):
  \begin{itemize}
    \item If $q = w_1$, do not add auxiliary point.
    \item If $q = w_2$, we add the  point $\eta$  such that $\angle  (w_2w_1, w_2 \eta)  = -\gamma$ and $|w_2\eta| =
   \sigma^{\kappa} \chi^{-1}$. 
    \item   Otherwise, we add two points $\eta_1$ and $\eta_2$ centered on $q$ such that   $\angle
  (w_2w_1, q \eta_1) = \angle ( w_2w_1, \eta_2q) = -\gamma$ and $|q\eta_1| =
  |\eta_2q | = \sigma^{\kappa} \chi^{-1}$.
  \end{itemize}

\item $\angle (v_1v_2, w_1w_2) = - \theta/2$ (see Figure~\ref{fig:gcase1} and~\ref{fig:gcase4}): 
\begin{itemize}
  \item If $q = w_1$, do not add auxiliary point.
  \item If $q = w_2$,  we add the point $\eta$  such that $\angle  (w_2w_1, w_2 \eta)  = \gamma$ and $|w_2\eta| =
   \sigma^{\kappa} \chi^{-1}$. 
  \item If $p$ and $q$ are in the same hinge set (i.e., $p, q$ are the points $\xi_1, \xi_2$ in Figure~\ref{fig:gcase1} or~\ref{fig:gcase4}),  we add two points $\eta_1$ and $\eta_2$ centered on $q$ such that $\angle (w_2w_1, q \eta_1) = \angle (w_2w_1, \eta_2 q) = \gamma$  and $|q \eta_1| = |\eta_2 q| = \sigma^{\kappa} \chi^{-1} + \e_{0}$ where $\e_{0}$ is much less than the distance between any two points in $\calP_{m}$.\footnote{ It is slightly different from the first case. We add two auxiliary points with distance slightly larger than $\sigma^{\kappa} \chi^{-1}$ to its center when $p $ and $q$ are in the same hinge set. The reason is that the cone is half-open half-close in the counterclockwise direction.  It will help a lot to unify the proof in the same framework. See the details in the proof of Lemma~\ref{lm:aux2norm}. }
  \item Otherwise, we add two points $\eta_1$ and $\eta_2$ centered on $q$ such that $\angle (w_2w_1, q \eta_1) = \angle (w_2w_1, \eta_2 q) = \gamma$ and $|q \eta_1| = |\eta_2 q| = \sigma^{\kappa} \chi^{-1}$.
\end{itemize}

\end{itemize}

First, we list some useful properties of the auxiliary points below.

\begin{property}
    Properties of auxiliary points:
  \label{prop:aux}
  \begin{itemize}
  \item [P1] The maximum length between an auxiliary point and its center is at most $ d^{-1}_0 \Delta $.
  \item [P2] Any point $q \in \normp$ can become a center for auxiliary points at most twice. Here, for each time that we indeed add some auxiliary points for a candidate center $p$, we say that $p$ becomes a center once.
 \item [P3] There are at most three auxiliary points centered on a normal point. 
  \item [P4] Suppose $q$ is a candidate center because of the projection of $p$ and we add the auxiliary  point $\eta$ centered on $q$.  If there is an auxiliary point $\xi$ centered on $ p$, then $|\xi p | \leq   \sigma^{-1} |\eta q|$. Hence, the perpendicular distance from   $\eta$ to the line $pq$ is  larger than $|\xi p|$.
 \item [P5] If auxiliary points $\eta_1, \eta_2$ and
   $\eta_3$ are centered on $q$ and $|q \eta_1| = |q \eta_2|$,  then $|q \eta_1| \leq \sigma^{-1} |q \eta_3|$,
   $\angle \eta_2 q \eta_3 = (\theta/2 - 2\gamma)$, and $\angle q \eta_3 \eta_2 < \gamma$.
  \end{itemize}
\end{property}
\begin{figure}[t]
  \centering
  \includegraphics[width = 0.6\textwidth]{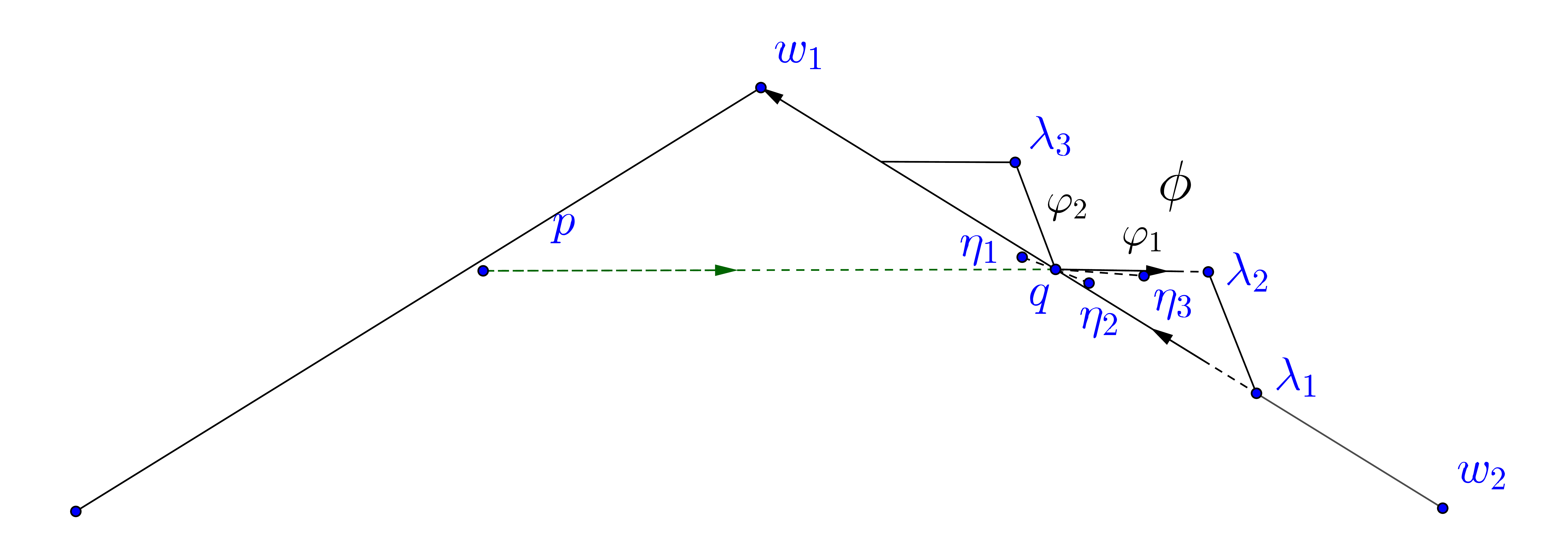}
  \caption{The positions of auxiliary points ($\eta_1, \eta_2, \eta_3$) centered on normal point $q$.}
  \label{fig:three}
\end{figure}

\begin{figure}[t]
  \centering
  \includegraphics[width = 0.6\textwidth]{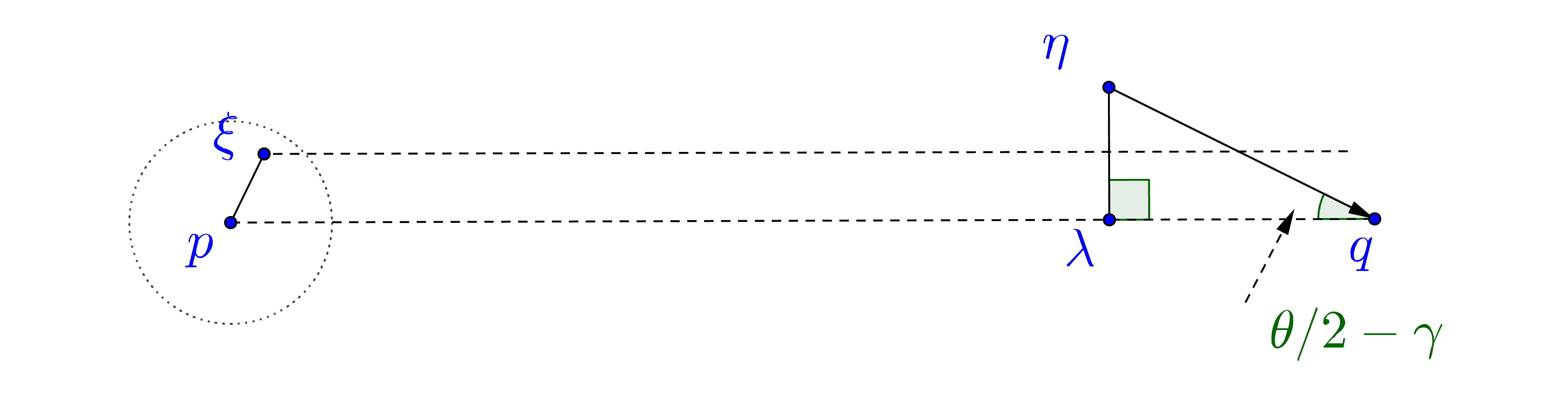}
  \caption{ $|\xi p | \leq \sigma^{-1} |\eta q|$ for the auxiliary point of $p$. Moreover, the perpendicular distance from $\eta$ to $pq$ (i.e., $|\eta \lambda|$ in the figure) is larger than $|\xi p|$. }
  \label{fig:vertical}
\end{figure}

\begin{proof}

  [P1] Note that the largest  $\kappa$  is at  most $n$ since there are at most $n$ pairs in the
  tree. The maximum length between the auxiliary   point and its center is at most $\sigma^{n}  \chi^{-1}  =  d^{-1}_0 \Delta$.

 [P2] Note that each point $q \in \normp$ belongs to at most three gadgets, one pair $\pair$ such that $q \in \gadget_{\pair}$ and two sibling pairs $\vpair_1$ and $\vpair_2$ such that $q \in \vpair_1 \cap \vpair_2$.
 See Figure~\ref{fig:three} for an example.  We visit $\pair$ first and then $\vpair_1$ and $\vpair_2$ in order. Note that $q$ is the shared point of $\vpair_1$ and $\vpair_2$. According to the way to add auxiliary point,  when we visit $\vpair_2$, $q$ corresponds to the point ``$w_1$'' (in Figure~\ref{fig:auxp}) in the rules. Thus,  we do not add auxiliary points for $q$. Hence,  there are only two times that $q$ can become a center for auxiliary points. The first time happens when we visit $\pair$ and the second time happens when we visit $\vpair_1$.

 [P3]  Followed by the proof of [P2],  in the first time, we add two auxiliary points $\eta_1$ and $\eta_2$ centered on
 $q$. In the second time, we  add one auxiliary point $\eta_3$ centered on $q$. Thus, there are at most three auxiliary points centered on a normal point.

 [P4] Suppose $p$ belongs to  subtree $\ntree_{\vpair}$  and $q$ belongs to subtree
  $\ntree_{\pair}$ and $\vpair \prec \pair$. Thus, the auxiliary points are  added for $p$  earlier
  than $q$. It means that $|\xi p | \leq  \sigma^{-1} |\eta q|$. Note that the acute angle between $\eta q$
  and $pq$ is $(\theta/2 - \gamma)$ and  $\sigma >  \sin^{-1}(\theta/2 - \gamma)$.  Thus, the
  perpendicular distance from $\eta$ to the line $pq$ is  larger than $|\xi p|$.  See Figure~\ref{fig:vertical}.

[P5]   According to the proof of [P3] (see Figure~\ref{fig:three}) we add $\eta_1$ and $\eta_2$ earlier than $\eta_3$.  According to the construction, we can add these three auxiliary points for $q$. Checking the four cases in construction,  we can get $\angle(pq, q\eta_3) = -\gamma$ and $\angle(pq, q\eta_2) =  -\theta/2 + \gamma$.  Thus, $\angle \eta_2 q \eta_3 = \theta/2 - 2\gamma$.  Moreover, note that $\sigma >  \sin (\theta/2 -\gamma) / \sin \gamma $ and $|\eta_2 q| < \sigma^{-1}|\eta_3 q|$. According to the law of sines, we get $\angle q \eta_3 \eta_2 < \gamma$.
  \end{proof}

\topic{Extended hinge set} We extend the concept of hinge sets  to the \emph{extended
  hinge set} to include auxiliary points.   The extended
  hinge set consists of the
normal points in the  hinge set and the auxiliary points centered on these normal points.
 Besides, if $p$ belongs to
$\ntree_{\pair}$, then the auxiliary points centered on $p$ belong to \emph{extended} $\ntree_{\pair}$.
Then Claim~\ref{claim:nolong} is still true for $\YY{2k+1}(\calP_m)$ with the same proof.
It means that we only need to consider the long range connections between the descendants of any two
sibling pairs.

Moreover,  we can get similar  properties
as  Lemma~\ref{lm:left} and~\ref{lm:right}  for the auxiliary points.
 Suppose $\pair$ and $\vpair$
are two sibling pairs. If  $p \in \ntree_{\pair}$ and $q \in \ntree_{\vpair}$ and there is  a long
range connection $\ov{pq} $  in $\YY{2k+1}$, then  $\pair \prec \vpair$.  Meanwhile, the points
in $\ntree_{\vpair}$ locate in two cones of $p$. But only one of the two cones may contain a long range connection. We describe the property formally as follows.

\begin{lemma1}
  \label{lm:aux2norm}
Given a pair $(v_1, v_2)$ at $\level{l}$ for $l < m-1$, with child-pair set $\Phi$, consider two
sibling pairs $\pair$ and $\vpair$ in $\Phi$ where $\pair = (w_1, w_2)$.   $p $ is a point in
extended $\ntree_{\pair}$ and $q$ is a point in extended $\ntree_{\pair}$.  Suppose there is a
directed edge $\ov{pq}$ in $\YY{2k+1}(\calP_{m})$.
\begin{itemize}
\item If $\angle (v_1v_2, w_2w_1) = \theta / 2$ , then $\angle(v_1v_2, pq) \in (-\theta, 0]$.
\item If $\angle (v_1v_2, w_2w_1) = -\theta / 2$ , then $\angle(v_1v_2, pq) \in (0, \theta]$.
\end{itemize}
\end{lemma1}

\begin{figure}[t]
\captionsetup[subfigure]{justification=centering}
  \centering
  \begin{subfigure}{0.7\textwidth}
    \includegraphics[width = 1\textwidth]{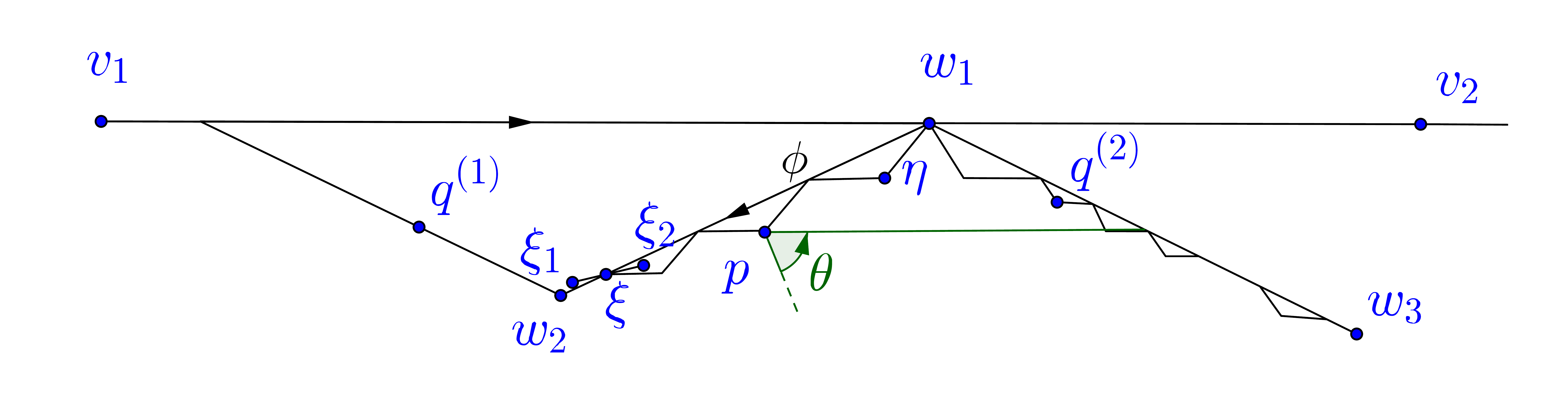}
    \caption{Case 1: $w_1$ is the partition point and $\pair$ is on the  right side of
      $v_1v_2$}
    \label{fig:auxcase1}
  \end{subfigure}

  \begin{subfigure}{0.7\textwidth}
        \includegraphics[width = 1\textwidth]{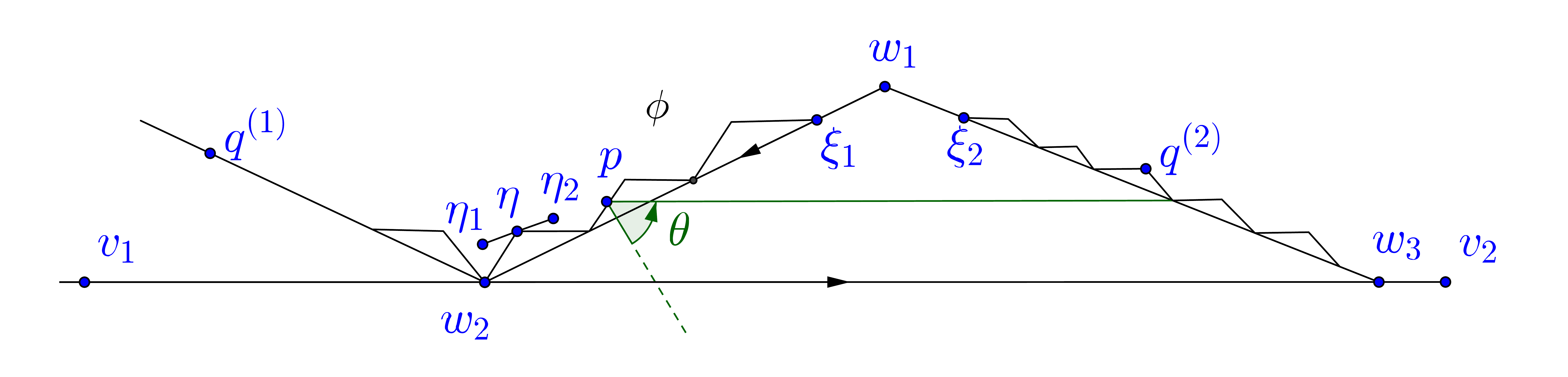}
        \caption{Case 2: $w_1$ is the  apex point  and $\pair$ is on the left side of $v_1v_2$}
    \label{fig:auxcase2}
  \end{subfigure}

  \begin{subfigure}{0.7\textwidth}
    \includegraphics[width = 1\textwidth]{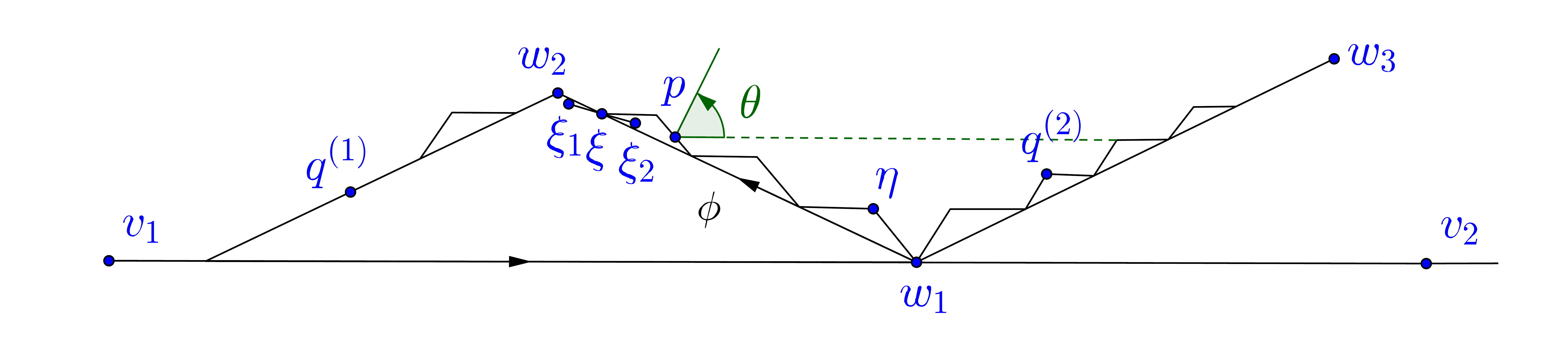}
    \caption{Case 3: $w_1$ is the partition point and $\pair$ is on the  left side of
      $v_1v_2$}
    \label{fig:auxcase3}
  \end{subfigure}

  \begin{subfigure}{0.7\textwidth}
    \includegraphics[width = 1\textwidth]{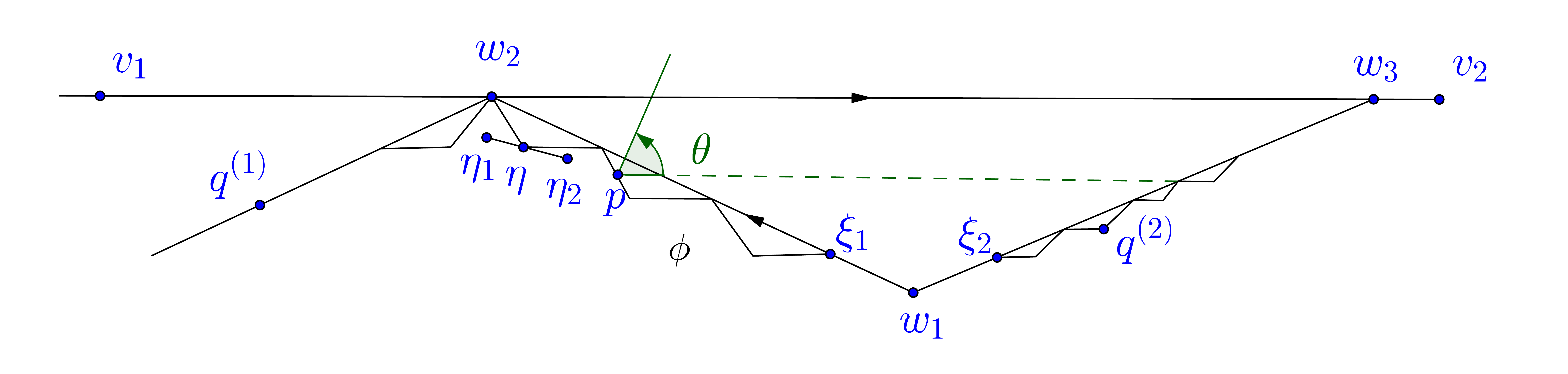}
    \caption{Case 4: $w_1$ is the apex point and $\pair$ is on the right side of $v_1v_2$}
    \label{fig:auxcase4}
  \end{subfigure}
  \caption{Here $\pair = (w_1, w_2)$ and $p $ belongs to the extended $\ntree_{\pair}$ which includes auxiliary points. }
  \label{fig:auxp2}
\end{figure}

\begin{proof}

The proof follows the same procedure as the proof of Lemma~\ref{lm:left} and~\ref{lm:right}. We also distinguish into two cases.
Given a pair $(v_1, v_2)$ and its child-pair set $\Phi$, consider two sibling pairs $\phi$ and $\vpair$ in $\Phi$ where $\phi = (w_1, w_2)$. The first case is that $\angle ( v_1 v_2, w_2w_1) = \theta / 2$. The second case is that $\angle ( v_1 v_2, w_2w_1) = - \theta / 2$. Consider a point $p$ in extended $\ntree_{\pair}$. $p$ can be a normal point or an auxiliary point.

Consider that $\angle ( v_1 v_2, w_2w_1) = \theta / 2$. First, suppose $w_1$ is the partition point and $\pair$ is on the right side of $\ov{v_1v_2}$. See Figure~\ref{fig:auxcase1}.  Suppose $q$ belongs to $\ntree_{\vpair}$  and $\vpair \prec \pair$ (i.e., $q = q^{(1)} $ in Figure~\ref{fig:auxcase1}). Denote the partition point in $\calA_{\pair}$ closest to $w_2$ by $\xi$. Because of the projection of points in $\ntree_{\vpair}$, $\xi$ has two auxiliary points, denoted by $\xi_1$ and $\xi_2$. Thus, $\ov{pq}$ is not an edge in the Yao-step since $\xi_1$ and $q$ are in the same cone of $p$. Then consider that $q$ belongs to $\ntree_{\vpair}$ and $\vpair \succ  \pair$ (i.e., $q = q^{(2)} $ in
  Figure~\ref{fig:auxcase1}).   Denote the point in $\calB_{\pair}$ closest to $w_1$ by $\eta$.  According to  the fact that  $w_1\eta$ is the maximum length pair among the child-pairs of $\pair$ (see Corollary~\ref{cor:p1}),   $\eta$ and $q$ are in the same cone of $p$ when $\angle (v_1v_2, pq) > 0$. Thus, there is no long range connection for $p$ in cone $C_{p}(0, \theta]$.

Second, suppose $w_1$ is the apex point and $\pair$ is on the left side of $\ov{v_1v_2}$. See Figure~\ref{fig:auxcase2}.
Denote the closest point in $\calB_{\pair}$ to $w_2$ by $\eta$. Suppose $q$ belongs to $\ntree_{\vpair}$ and   $\vpair \prec \pair$ (i.e., $q =  q^{(1)}$ in Figure~\ref{fig:auxcase2}). Because of the projection of points in $\ntree_{\vpair}$, $\eta$ has two auxiliary points, denoted by $\eta_1$ and $\eta_2$. There is no edge $\ov{pq}$ in the Yao-step since $\eta_1$ and $q$ are in the same cone $p$ and $|\eta_1 p| < |qp|$. Then consider that  $q$ belongs to  $\ntree_{\vpair}$ and $\vpair \succ \pair$ (i.e., $q =  q^{(2)}$ in
  Figure~\ref{fig:auxcase2}).  If $q$ in the cone $C_{p}(0, \theta]$, $q$ and $w_1$ are in the same cone of $p$ and $|p w_1| < |p q|$. Thus, in the Yao-step, there is no edge from $p$ to $q$ in the cone $C_{p}(0, \theta]$. Thus,  we prove the first part of the lemma.%Combining with Lemma~\ref{lm:left}, we prove the lemma in this case.

  Next, we consider the case $\angle ( v_1 v_2, w_2w_1) = - \theta / 2$.   First, we consider the case in which $w_1$ is the partition point and $\pair$ is on the left side of $\ov{v_1v_2}$. See Figure~\ref{fig:auxcase3}. Denote the partition point in $\calA_{\pair}$ closest to $w_2$ by $\xi$. According to the construction for auxiliary point (case 2a),  we add two auxiliary points $\xi_1$ and $\xi_2$ such that $|\xi \xi_1| = |\xi_2 \xi| = \sigma^{\kappa} \chi^{-1} + \e_{0}$.
 If $q$  is in $\ntree_{\vpair}$ and $\vpair \prec  \pair$ (i.e., $q = q^{(1)}$ in Figure~\ref{fig:auxcase3}), $\xi_1$  and $q$  are in the same cone of $p$.  Because the distance $|\xi\xi_1|$  ($> \sigma^{\kappa} \chi^{-1}$) is slightly longer than the distances from other auxiliary points of $\gadget_{\pair}$ to their centers.
 Thus, $\ov{pq}$ is not an edge in the Yao-step since $|\xi_1 p| < |pq|$ and $\xi_1$ and $q$ are in the same cone of $p$.
Then consider that $q$ (i.e., $q = q^{(2)}$) in
    Figure~\ref{fig:auxcase3}) is in  $\ntree_{\vpair}$ and   $\vpair  \succ \pair$. If $\vangle(v_1v_2, pq) \in  [-\theta/2,0]$,
    $\ov{pq}$ is not a directed edge in Yao-step since $w_1$ and $q$ are in the same cone and $|w_1p| < |pq|$.

Finally, we consider the case that  $w_1$ is the apex point and $\pair$ is on the right side of $\ov{v_1v_2}$. 	 Suppose $\eta$ is the apex point in $\calB_{\pair}$ closest to $w_2$. $\eta_1$ and $\eta_2$ are auxiliary points of $\eta$.   If $q$ is in $\ntree_{\vpair}$ and  $\vpair \prec
    \pair$ (i.e., $q = q^{(1)}$ in Figure~\ref{fig:auxcase4}), $\eta_1$ and  $q$ are in the same cone of
    $p$ according to the construction for auxiliary point (case 2b).  Thus, $\ov{pq}$ is not an edge in the Yao-step since $|p\eta_1|< |pq|$.
    Then consider $q$ is in $\ntree_{\vpair}$
    and $ \vpair \succ \pair$ (i.e. $q = q^{(2)}$ in Figure~\ref{fig:auxcase4}).
    If the polar angle of $pq$ belongs to $(-\theta, 0]$,  $w_1$ and $q$ are in the same
    cone. Thus, there is no  edge from $p$ to $q$ since $|pq| > |pw_1|$. Thus, we prove the second part of the lemma.%Combining with Lemma~\ref{lm:right}, we prove the lemma in this case.

Overall,  we have proved the lemma.
\end{proof}

Then, we prove that after adding the auxiliary points, there is no long range connection.
\begin{lemma1}
  \label{lm:aux}
There is no long range connection in  $\YY{2k+1}(\calP_{m})$.
\end{lemma1}

\begin{proof}

Consider a pair $(v_1,v_2)$ and  the set $\sset$ of its child-pairs. Suppose $\pair, \vpair \in \sset$
and $\vpair \prec \pair$.  $p$  is a point in $\ntree_{\vpair}$.  Denote an auxiliary  point centered on $p$, if any,
by $\xi$. Let $u \in \{p,\xi \}$.  There exists a point $q$ closest to $p$ such that $q \in
\ntree_{\pair}$ and $pq$ is parallel to $v_1v_2$ based on the projection process.   If not, i.e., $\ntree_{\pair}$ only locates in one cone of $p$, according to Lemma~\ref{lm:aux2norm}, there is no long range connection between $u$ and points in extended $\ntree_{\pair}$.

According to Lemma~\ref{lm:aux2norm}, first, there is no directed edge from a point in (extended) $\ntree_{\pair}$  to
 (extended) $\ntree_{\vpair}$. Next, we prove there is no long range connection from $\ntree_{\vpair}$ to $\ntree_{\pair}$. Since $p$ is an arbitrary point in $\ntree_{\vpair}$, we prove that there is no long range connection between $u$ and the points in $\ntree_{\pair}$,
 (recall $u \in \{p, \xi\}$). According to whether $q$ is in $ \calA_{\pair} \cup \calB_{\pair}$ or $\pair$, there are two cases.

\begin{figure}[t]
  \centering
    \includegraphics[width = 0.7\textwidth]{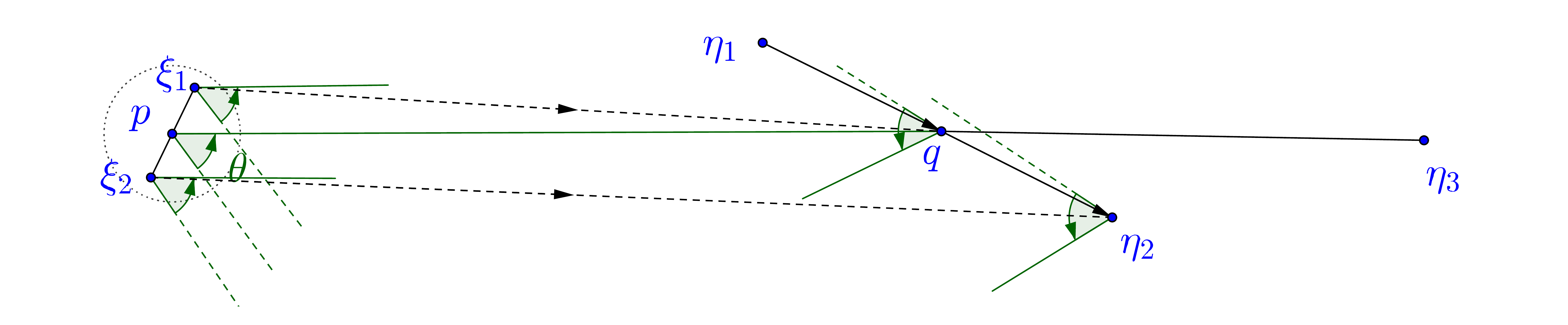}
  \caption{The case that $q \in \calA_{\pair} \cup \calB_{\pair} $. The figure is an enlarged view of Figure~\ref{fig:three}.  $q$ may have three auxiliary points $\{\eta_1, \eta_2, \eta_3\}$. $\xi_1$ and $\xi_2$ are two possible positions of the auxiliary point $\xi$ centered on $p$. }
  \label{fig:twoaux}
\end{figure}

\topic{$q $ belongs to  $ \calA_{\pair} \cup \calB_{\pair}$}   $q$ has two auxiliary
points $\eta_1$ and $\eta_2$ because
of the projection $\ov{pq}$ and $q$ is a candidate center.  Note that $q$ may have a third auxiliary point $\eta_3$. But $p$ and  $\eta_3$ are on the two different
sides of $\eta_1\eta_2$ and $|\eta_3q| > |\eta_1q|= |\eta_2q|$ because of
Property~\ref{prop:aux}[P5]. Therefore, there is no directed edge $p\eta_3$ in the Yao-step because $\eta_2$
and $\eta_3$ are in the same cone of $p$ and $|\eta_2 p| < |\eta_3p|$.
According to Property~\ref{prop:aux}[P4], $|\xi p|$ is much less than $|\eta_1 q|$ or $|\eta_2 q|$
and the perpendicular distance from  $\eta_1$ and $\eta_2$ to the line $pq$ is longer than $|\xi
p|$.  Suppose $u \in \{ p, \xi \}$. See Figure~\ref{fig:twoaux} which is an enlarged view of Figure~\ref{fig:three}, in which $\xi_1$ and $\xi_2$
are two possible positions of $\xi$.
According to Lemma~\ref{lm:aux2norm},
$u \eta_1 $ does not exist in the Yao-step since $\eta_1$ and one point of $\pair$ (denoted by $w_1$, refer to Figure~\ref{fig:three}) are in the same cone of $u$ and $|w_1 u | < |\eta_1 u|$. If there is an edge $u\eta_2$ in
the Yao-step, the edge $u \eta_2$ cannot be accepted by $\eta_2$ in the reverse-Yao step since $q\eta_2$ exists, and point $q$ and
$u$ are in the same cone of $q$ and $|q\eta_2| < |u\eta_2|$. If there is an edge $u q$ in
the Yao-step, the edge $u q$ cannot be accepted by $q$ in the reverse-Yao step since $\eta_1$ and
$u$ are in the same cone and  $|\eta_1q| < |u q|$.  Therefore, there is no long range
connection related to $p$ and its auxiliary points.

\begin{figure}[t]
\captionsetup[subfigure]{justification=centering}
  \centering
  \begin{subfigure}{0.5\textwidth}
    \includegraphics[width = 1\textwidth]{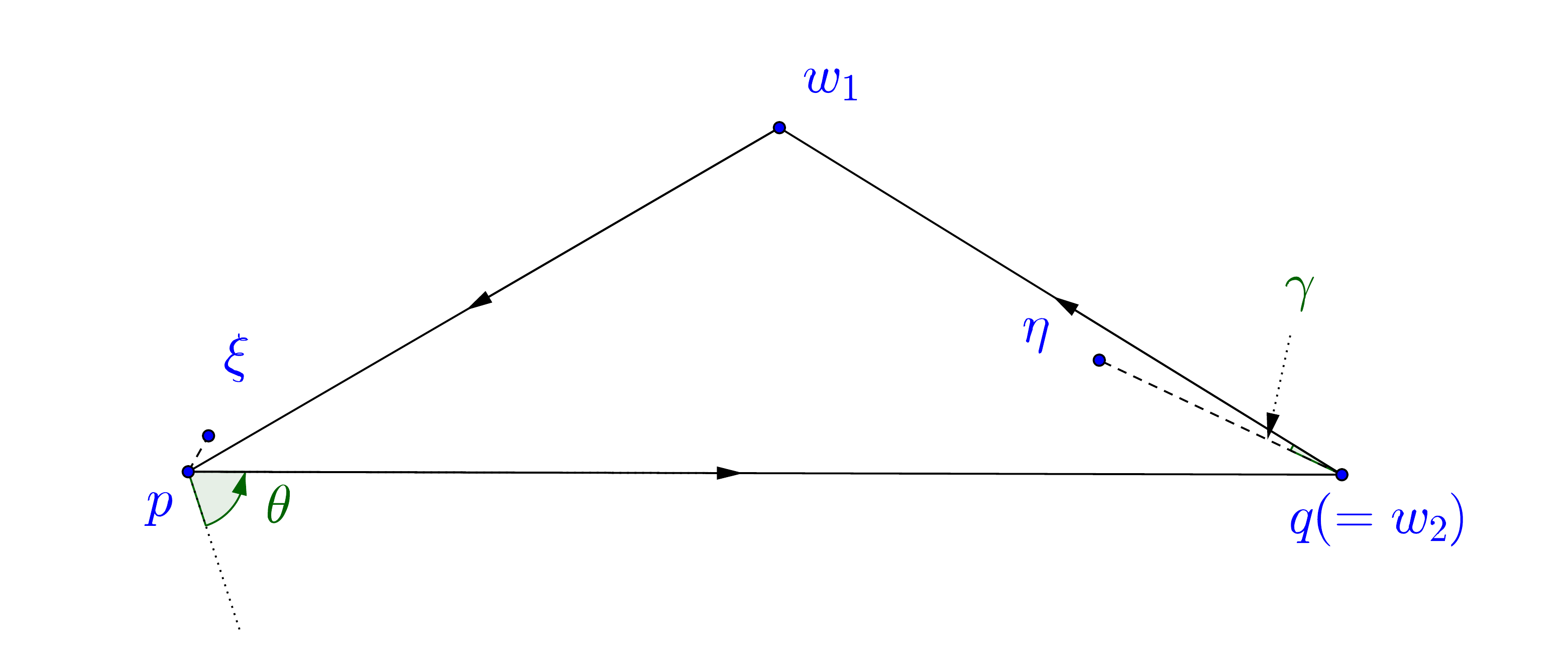}
    \caption{Only one auxiliary point centered on $q$.}
    \label{fig:degepoint1}
  \end{subfigure}%
  \begin{subfigure}{0.5\textwidth}
    \includegraphics[width = 1.1\textwidth]{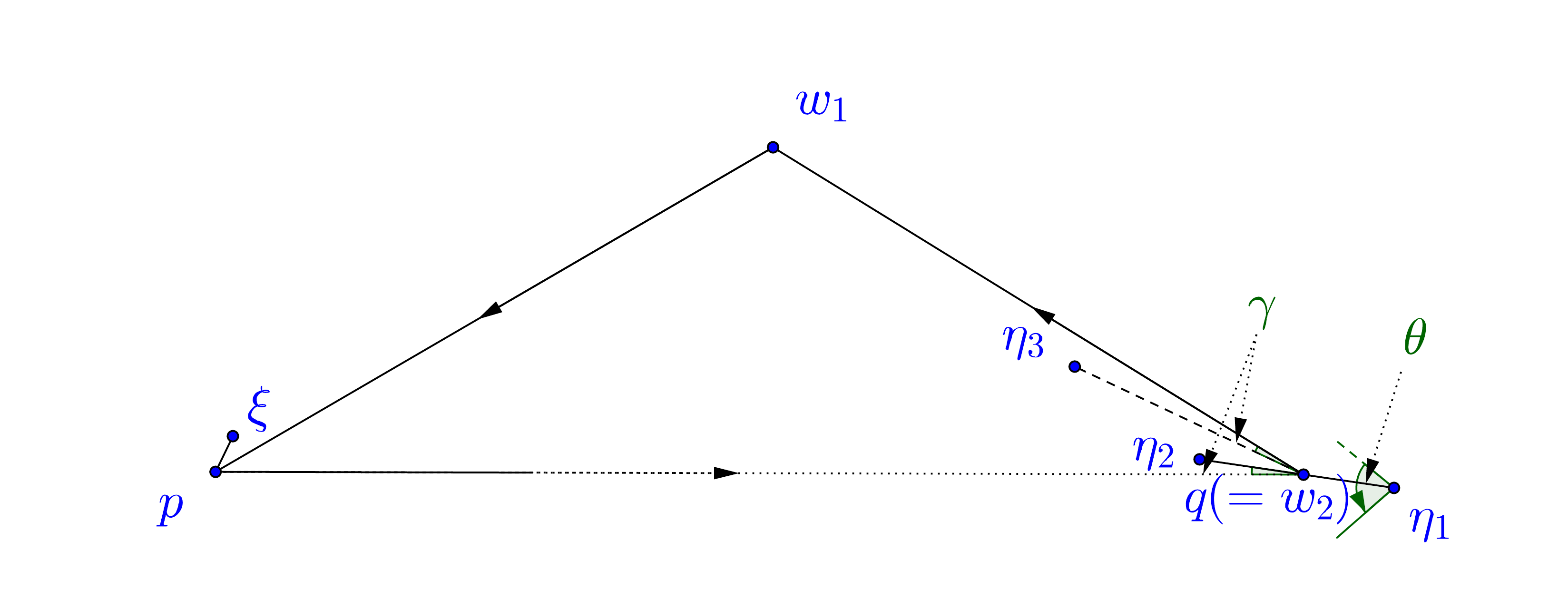}
    \caption{Three auxiliary points centered on $q$.}
    \label{fig:degepoint3}
  \end{subfigure}
  \caption{ $q$ is a point of pair $\pair$, i.e., $q = w_2$ . }
  \label{fig:degepoint}
\end{figure}

\topic{$q$ belongs to $\pair$} See Figure~\ref{fig:degepoint}. Note that in this case,  $q$ is point $w_2$ of $\pair$. According to
Property~\ref{prop:aux}[P3], any point has at most three auxiliary points. Since $q$ is a projection
point of $p$, $q$ has at least one auxiliary point. Thus, there are two possible situations.
One is that there is only one auxiliary point centered on $q$ (see
Figure~\ref{fig:degepoint1}). It means that in the first time that $q$ was able to be a candidate center, there is no auxiliary point added centered on it (see the proof of Property~\ref{prop:aux}[P2]). Denote the auxiliary point of $q$ by $\eta$.
Let $u \in \{p, \xi \}$ where $\xi$ is an auxiliary point centered on $p$. According to the Property~\ref{prop:aux}[P4], $\eta$ and $w_1$ are in the same cone of $u$ and $|uw_1| < |u\eta|$. Therefore, there is no edge $u \eta$ in the Yao-step. Moreover, $uq$ cannot be accepted in the reverse-Yao step. Because the edge $\eta q$ exists in the Yao-step. $u$ and $\eta$ are in the same cone of $q$ and $|\eta q| < |u q|$. Combining with Lemma~\ref{lm:aux2norm}, there is no long range connection from $u$ to $\ntree_{\pair}$.
 The second case  is that there are three auxiliary points of $q$ (see Figure~\ref{fig:degepoint3}). Denote the auxiliary points of $q$ by $\{\eta_1, \eta_2, \eta_3\}$. According to Property~\ref{prop:aux}[P5], we know $\angle \eta_2q\eta_3 = (\theta/2 - 2\gamma)$. Again, denote $u \in \{p, \xi\}$. There is no edge $u \eta_3$ in the Yao-step since $w_1$ and $\eta_3$ are in the same cone of $u$ and $|u w_1|< |u\eta_3|$. There is no edge $u \eta_1$ in the reverse-Yao step, since there is an edge $q\eta_1$ in the Yao-step and $|q\eta_1|< |u\eta_1|$. Similarly, there is no edge $uq$ since there is an edge $\eta_2 q$ in the Yao-step and $|\eta_2 q| < |u q|$. Next, note that $|q\eta_2| \leq \sigma^{-1} |q \eta_3|$. According to Property~\ref{prop:aux}[P5], we know $\eta_3$ and $u$ are in the same cone of $\eta_2$. Thus, there is no edge from $u \eta_2$.

Overall, we prove that there is no long range connection in $\YY{2k+1}(\calP_{m})$.
\end{proof}

%%% Local Variables:
%%% mode: latex
%%% TeX-master: "odd"
%%% End:

\section{The Length Between $\mu_1$ and $\mu_2$ in $\YY{2k+1}(\calP_{m})$}
\label{sec:length}

In this section, we  prove that the length of the shortest path between the  initial points $\mu_1$ and
$\mu_2$ in $\YY{2k+1}(\calP_{m})$  diverges as $m$ approaches infinity.

First, recall that we have extended the concept of hinge sets  to  \emph{extended
  hinge sets}  which consist of the normal points in the  hinge set and the auxiliary points of
these normal points. Consider two extended hinge sets $\hinge{}$ and $\hinge{}'$. Define the shortest path
between $\hinge{}$ and $\hinge{}'$ to be the shortest path in $\YY{2k+1}(\calP_{m})$ between any two points $p$ and $q$ such that $p \in \hinge{}$ and $q \in \hinge{}'$.
Consider any pair $\pair = (w_1, w_2)$ at $\level{(m-1)}$.
We give a lower bound on the shortest path distance between its former extended hinge set and latter extended hinge set.

\begin{lemma1}
  \label{lm:distaux}
Consider any pair $(w_1, w_2)$ at  $\level{(m-1)}$. Denote its former extended hinge set by $\lhinge{\pair}$, and latter extended hinge set by $\rhinge{\pair}$. The shortest path distance between $\lhinge{\pair}$ and $\rhinge{\pair}$ is at least $(1- 6  d^{-1}_0)|w_1w_2|$.
\end{lemma1}
\begin{proof}

   Let $|w_1w_2| = \delta$.  See Figure~\ref{fig:dist}. Note that $\lhinge{\pair}$ and $\rhinge{\pair}$ (the two hinge sets centered on $w_1$ and
   $w_2$) are not overlapping.  Denote the near-empty piece incident on $w_1$ by $w_1\eta_1$ and
   the \ttype\ piece incident on $w_2$ by
  $w_2\xi_1$. $\xi_1\xi_2$ is the  leaf-pair closest to $w_2$. $\xi_2\eta_2$ is perpendicular to
  $w_1w_2$. The shortest  Euclidean distance between the two hinge sets  is no less than
  $|\eta_1\eta_2|$. According to Property~\ref{prop:piece2}, $|w_1\eta_1|
  \leq 2 d^{-1}_0 \delta$, $|w_2\xi_1|
  \leq  d^{-1}_0 \delta$ and $\xi_1\eta_2 \leq 0.5d^{-1}_0 \delta $. Thus,  $|\eta_1\eta_2| >  (1-
  3.5 d^{-1}_0)\delta$.

 Then consider the auxiliary points. Note that according to the
  Property~\ref{prop:aux}[P1], the maximum distance between  an auxiliary point and its center is $ d^{-1}_{0}\Delta$, where $\Delta$ is the minimum distance
  between any two normal points. Since $\Delta \leq \delta$, according to triangle inequality, the
  auxiliary points can reduce the distance between the two hinge sets by at most $2 d^{-1}_{0}\delta$. Overall, the shortest path
  between  $\lhinge{\pair}$ and $\rhinge{\pair}$  is at least $(1- 6  d^{-1}_0)|w_1w_2|$.
\end{proof}

\begin{figure}[t]
    \centering
    \includegraphics[width = 0.55\textwidth]{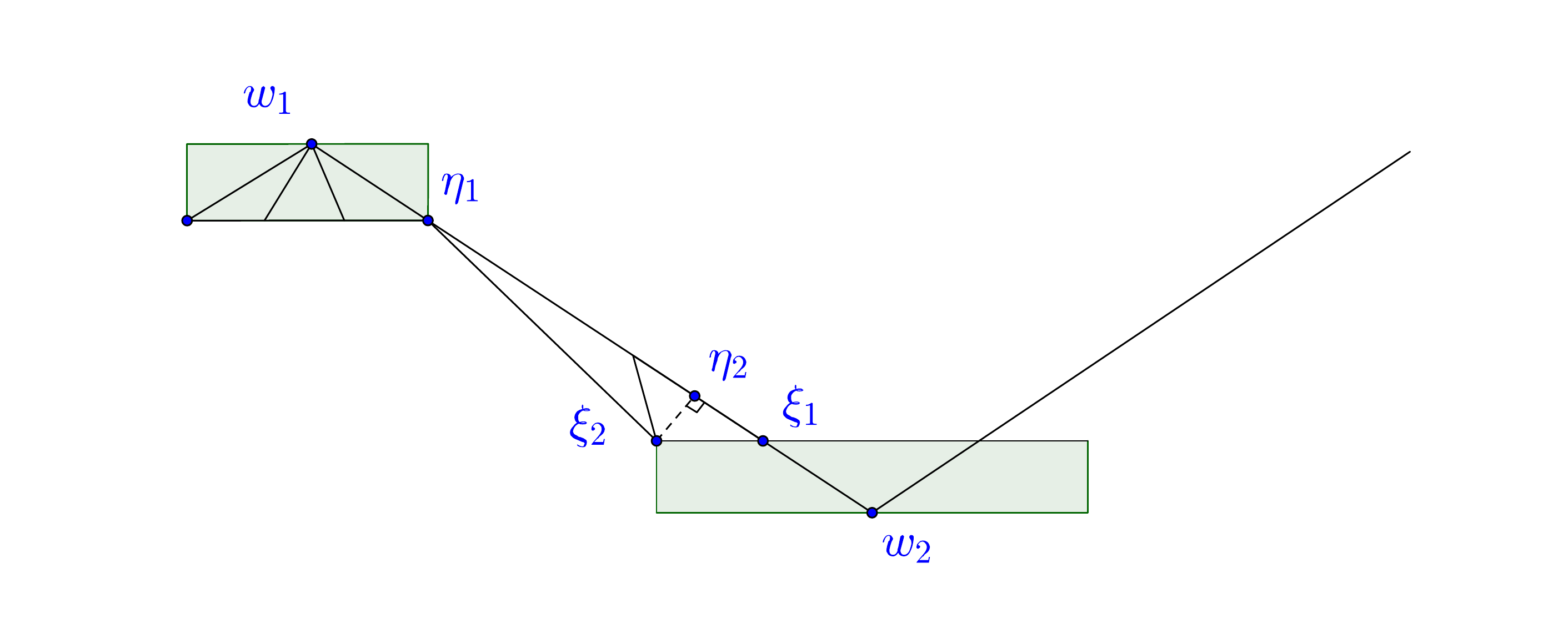}
    \caption{The shortest Euclidean distance between two hinge sets centered on points of a pair $\pair = (w_1, w_2)$ at $\level{(m-1)}.$}
    \label{fig:dist}
\end{figure}

According to Lemma~\ref{lm:aux}, there is no long range connection in $\YY{2k+1}(\calP_{m})$.  Thus,
the shortest path between $\mu_1$ and $\mu_2$ should pass through all hinge sets in order. Thus,   for each pair $\pair$ at
$\level{(m-1)}$, there is a path between $\lhinge{\pair}$ and $\rhinge{\pair}$.

Let the shortest path between $\lhinge{\pair}$ and $\rhinge{\pair}$ be $\Delta_{\pair}$. Then,
we prove that  the sum of lengths of $\Delta_{\pair}$ over  all pairs at $\level{(m-1)}$ diverges
as $m$ approaches infinity. Thus, the length of the shortest path between $\mu_1$ and $\mu_2$ diverges too.

\begin{lemma1}
\label{lm:length}
  The length of the shortest path between $\mu_1 $ and $\mu_2$ in $\YY{2k+1}(\calP_{m})$ for
  $k \geq 3$ is at least $\rho^m$, for some  $\rho =  ( 1 - O(d^{-1}_0))  \cdot \cos^{-1}(\theta/2)$.
  Thus, by setting $d_0 > \lceil 6(1-\cos (\theta/2))^{-1} \rceil $, the length  diverges as $m$ approaches infinity.
\end{lemma1}
\begin{proof}

  We give a lower bound of the sum of lengths $|w_1w_2|$ over all pairs $(w_1, w_2)$ at
  $\level{(m-1)}$.  Recall that the length of   a pair is the length of the segment between the two
  points of the pair. Consider any pair $\pair = (v_1, v_2)$ with length $\delta$.   According to
 Property~\ref{prop:piece2}, the sum of lengths of half-empty, near-empty and empty pieces  is no more  than
  $6d^{-1}_0\delta $.  Thus,  the pieces which  generate internal-pairs  in next level
  have length  at least $(1 - 6d^{-1}_0)\delta$.   For each piece, it generates two child-pairs. The sum of lengths of the two
  pairs is  $\cos^{-1}(\theta/2)$ times larger than the piece itself.  Overall, the sum of the
  lengths of the pairs  in  generated next levels  is at least $(1 -  6d^{-1}_0)\delta \cos^{-1}(\theta/2)$.  Let $\rho =
   (1- 6d^{-1}_0) \cos^{-1}(\theta/2)$.  Thus, after $(m-1)$ rounds, the length of the pairs at
   $\level{(m-1)}$ is at least $\rho^{m-1} |\mu_1\mu_2|$.
  According to Lemma~\ref{lm:distaux}, the shortest path from $\mu_1$ to $\mu_2$  is at
  least $(1- 6 d^{-1}_0) \rho^{m-1} |\mu_1 \mu_2|$. When $d_0 > 6(1 - \cos(\theta/2))^{-1}$, the shortest
  path between $\mu_1$ and $\mu_2$ in $\YY{2k+1}(\calP_{m})$ diverges as $m$ approaches
  infinity.
\end{proof}

\eat{
Finally, we discuss the relation between the number of points $n$ and the level $m$.  Let $N[x, y]$
to be the number of points of a tree with $x$ levels and $y$ child pairs of the root. Note that the
  length of child pairs may not be same according to Definition~\ref{defn:gadget}. Thus, here,
  $N[x,y]$ represents the maximum possible number of points for any possible $y$ child pairs.  Suppose
the root pair is $(\mu_1,\mu_2)$ and its child pairs are $\calS$. The number of pieces of the partition for
the first internal sibling pair is $2d_0$.  Consider the the first $(y-1)$ pairs. There are $N[x,
y-1]$ points. The matching process matching these points to the last pair. Thus, there are at most
$N[x, y-1]$ matching point on the segment of last pair. The filling process $ |\ffill[\calP]| $ adds
at most $ |\calP|^2$ points.  Thus, $N[x,y] = O(N[x-1, N^2[x, y-1]])$. Overall, we have the
recursion about $N[x,y]$ as follows.

\begin{eqnarray}
  && N[x, y] = \left\{
     \begin{array}{ll}
       2y+1  & \text{ for }  x = 1 \\
       O( N[x-1, 2d_0]) & \text{ for } y =1 \\
       O(N[x-1,  N^2[x, y-1]]) & \text{ for }  x > 1, y > 1
     \end{array} \right.
\end{eqnarray}

The number $n $ is $ N[m, d_0]$. It  can be computed according to the recursion. The number is
computable but not primitive recursive which is similar to Ackermann number.
}%END EAT

Finally, combining with the results that $\YY{3}$~\cite{el2009yao}  and $\YY{5}$~\cite{barba2014new}  may
not be spanners, we have proved Theorem~\ref{thm:odd}.
\begin{reptheorem}{thm:odd}[restated]
    For any $k \geq 1$, there exists a class of instances
    $\{\calP_m\}_{m\in \mathbb{Z}^+}$
    such that the stretch factor
    of $\YY{2k+1}(\calP_m)$ cannot be bounded by any constant,
    as $m$ approaches  infinity.
\end{reptheorem}

%%% Local Variables:
%%% mode: latex
%%% TeX-master: "odd"
%%% End:

\newpage
\bibliographystyle{plainurl}% the recommended bibstyle % lipics
\bibliography{citation}

\newpage

\end{document}